% \documentclass[preprint,12pt]{elsarticle}

%% Use the option review to obtain double line spacing
% \documentclass[authoryear,preprint,review,12pt]{elsarticle}
%% for a journal layout:
% \documentclass[final,1p,times,authoryear]{elsarticle}
% \documentclass[final,1p,times,twocolumn,authoryear]{elsarticle}
% \documentclass[final,3p,times]{elsarticle}
\documentclass[final,3p,times,twocolumn,authoryear]{elsarticle}
\usepackage{graphicx}      % include this line if your document contains figures
\makeatletter
\let\old@ssect\@ssect % Store how ifacconf defines \@ssect
\makeatother

\usepackage{amssymb}
\usepackage{amsfonts}
\usepackage{mathrsfs}
\usepackage{amsmath}
\usepackage{textcomp}
\usepackage{float}
\usepackage[dvipsnames]{xcolor}
\usepackage{multirow}
\usepackage{csquotes}
\usepackage{epstopdf} 
\usepackage{gensymb}
\usepackage{breqn}
\usepackage{siunitx}
\usepackage{hyperref} 
\hypersetup{colorlinks=true,colorlinks,linkcolor={blue},citecolor={blue},urlcolor={red}} 
\newcommand{\rom}[1]{\expandafter{\romannumeral #1\relax}}
\usepackage{enumitem}
\newlist{legal}{enumerate}{10}
\setlist[legal]{label*=\arabic*.}
\makeatletter
\def\endfigure{\end@float}
\def\endtable{\end@float}
\makeatother

\usepackage{subcaption}

\usepackage{booktabs, caption, nicematrix}
\usepackage[thicklines]{cancel}
\usepackage{amsthm}
\newtheoremstyle{boldtheorem}
  {\topsep}   % Space above
  {\topsep}   % Space below
  {\normalfont}%{\bfseries} % Body font (bold)
  {}          % Indent amount (empty means no indent)
  {\bfseries} % Theorem head font (bold)
  {.}         % Punctuation after theorem head
  {.5em}      % Space after theorem head
  {}          % Theorem head specification (can be left empty, meaning `normal`)

\usepackage{tikz}

% Apply the custom theorem style
\theoremstyle{boldtheorem}
% Define your theorem environment
\newtheorem{thm}{Theorem}%  meant for continuous numbers
\newtheorem{rem}{Remark}%
\newtheorem{lem}{Lemma}%
\newtheorem{assum}{Assumption}%

\usepackage{xcolor}

\usepackage[flushleft]{threeparttable}
% \theorembodyfont{\rm}
% \newenvironment{proofs}[1][]{\begin{proof}[\normalfont \textbf{Proof:}]}{\end{proof}}
\usepackage{xcolor}
\usepackage[thinlines]{easytable}

% \journal{Nonlinear Analysis: Hybrid Systems}

\setlength{\marginparwidth}{2cm}
\begin{document}

\begin{frontmatter}

%% Title, authors and addresses

%% use the tnoteref command within \title for footnotes;
%% use the tnotetext command for theassociated footnote;
%% use the fnref command within \author or \address for footnotes;
%% use the fntext command for theassociated footnote;
%% use the corref command within \author for corresponding author footnotes;
%% use the cortext command for theassociated footnote;
%% use the ead command for the email address,
%% and the form \ead[url] for the home page:
%% \title{Title\tnoteref{label1}}
%% \tnotetext[label1]{}

%% \ead[url]{home page}
%% \fntext[label2]{}
% \cortext[cor1]{Corresponding author.}
%% \affiliation{organization={},
%%             addressline={},
%%             city={},
%%             postcode={},
%%             state={},
%%             country={}}
%% \fntext[label3]{}

\title{Enhancing Reset Control Phase with Lead Shaping Filters: Applications to Precision Motion Systems}

\author[inst]{Xinxin Zhang}\ead{X.Zhang-15@tudelft.nl} 
\author[inst]{S. Hassan HosseinNia}\ead{S.H.HosseinNiaKani@tudelft.nl (The corresponding author)}
% \corref{cor1}
\affiliation[inst]{organization={Department of Precision and Microsystems Engineering (PME), Delft University of Technology},%Department and Organization
            addressline={Mekelweg 2}, 
            city={Delft},
            postcode={2628CD}, 
            % state={State One},
            country={The Netherlands}}
% \affiliation[inst2]{organization={Department Two},%Department and Organization
%             addressline={Address Two}, 
%             city={City Two},
%             postcode={22222}, 
%             state={State Two},
%             country={Country Two}}

\begin{abstract}
This study presents a shaped reset feedback control strategy to enhance the performance of precision motion systems. The approach utilizes a phase-lead compensator as a shaping filter to tune the phase of reset instants, thereby shaping the nonlinearity in the first-order reset control. {The design achieves either an increased phase margin while maintaining gain properties or improved gain without sacrificing phase margin, compared to reset control without the shaping filter.} Then, frequency-domain design procedures are provided for both Clegg Integrator (CI)-based and First-Order Reset Element (FORE)-based reset control systems. Finally, the effectiveness of the proposed strategy is demonstrated through two experimental case studies on a precision motion stage. In the first case, the shaped reset control leverages phase-lead benefits to achieve zero overshoot in the transient response. In the second case, the shaped reset control strategy enhances the gain advantages of the previous reset element, resulting in improved steady-state performance, including better tracking precision and disturbance rejection, while reducing overshoot for an improved transient response.
\end{abstract}

%%Graphical abstract
% \begin{graphicalabstract}
% \begin{figure}[h]
% 	\centering
% 	%	\missingfigure{CLCI}
% 	\includegraphics[width=\columnwidth]{figs/Graphic Abstract.pdf}
% 	\captionsetup{labelformat=empty}
% 	\caption*{Figure 0: New model for the RCS in open-loop and closed-loop.}
% 	\label{fig:graphicalabstract}
% \end{figure}
% \end{graphicalabstract}

%%Research highlights
% \begin{highlights}
% \item A pulse-based model for analyzing general reset control systems (RCSs) in frequency domain is proposed.
% \item This study for the first time providing the perspective that the RCS output can be segmented into linear and non-linear elements both in open-loop and closed-loop, and further enabling the loop-shaping technique applied in the RCS analysis accurately.
% \item This study proposes an analytical method to calculate all the harmonics of RCSs  in closed-loop that can distinguish the first-order harmonic of RCSs from the DF.
% \item Sensitivity functions for reset control systems are developed.
% \end{highlights}

\begin{keyword}
%% keywords here, in the form: keyword \sep keyword
Shaped first-order reset feedback control \sep Precision motion systems \sep Frequency-domain design \sep Steady-state\sep Transient response
% \sep Higher-order sinusoidal input sensitivity functions 
%% PACS codes here, in the form: \PACS code \sep code
% \PACS 02.30.Yy \sep 07.05.Dz
%% MSC codes here, in the form: \MSC code \sep code
%% or \MSC[2008] code \sep code (2000 is the default)
\MSC 93C80 \sep 93C10 \sep 70Q05
% 93C80: Frequency-response methods
\end{keyword}

\end{frontmatter}

\section{Introduction}
\label{sec: intro}
This study focuses on developing reset feedback control strategies to enhance the performance of precision positioning systems. High-precision industries, such as semiconductor manufacturing and robotics, demand systems capable of delivering accurate positioning, effective disturbance and noise rejection, fast response times, stability, and robustness (\cite{schmidt2020design}). To address these requirements, effective control strategies are crucial.

Linear feedback control, particularly the classical Proportional-Integral-Derivative (PID) controller, remains widely used due to its simplicity and effectiveness (\cite{han2009pid}). To meet the demands of industrial precision motion control, the loop-shaping technique is commonly employed in linear control design. This technique focuses on maintaining high gain at low frequencies to ensure effective low-frequency reference tracking and disturbance rejection (\cite{fuller1976feedback}). At the same time, low gain at high frequencies is maintained to reduce sensitivity to high-frequency sensor noise and external disturbances (\cite{schmidt2020design}). Additionally, achieving an appropriate phase margin around the system’s bandwidth is crucial for ensuring stability and a desired transient response (\cite{chang1990gain}), thereby facilitating reliable and smooth operation.

 However, linear controllers face fundamental frequency-domain constraints, such as the waterbed effect and the Bode gain-phase trade-off (\cite{chen2019development}). These limitations restrict their ability to meet the increasingly stringent performance demands of precision motion systems (\cite{saikumar2019constant}). Consequently, advanced control strategies are needed to overcome these trade-offs and achieve superior performance, addressing the evolving demands of precision motion systems.

Nonlinear control strategies, specifically reset feedback control, have emerged as a promising alternative (\cite{banos2012reset}). {Reset control has been applied across diverse industries, including hard-disk-drive systems (\cite{guo2009frequency, guo2010optimal}), wafer stages (\cite{hazeleger2016second, heertjes2016experimental}), undamped second-order plants with time delays (\cite{banos2007design}), minimum-phase relative degree one plants (\cite{zhao2019overcoming}), chemical process control (\cite{carrasco2011reset, banos2012reset}), and mechatronic systems used in this study (\cite{saikumar2019constant, karbasizadeh2022continuous}).} The concept of reset control originated with the Clegg Integrator (CI) in 1958, which resets the integrator's output whenever the input crosses zero. Sinusoidal-Input Describing Function (SIDF) analysis demonstrates that the CI offers a 52° phase lead compared to a linear integrator while maintaining its gain properties (\cite{clegg1958nonlinear, guo2009frequency}). Over time, other reset elements have been introduced to enhance system performance, such as the First-order Reset Element (FORE), Second-order Reset Element (SORE), reset elements with reset bands, and Fractional-order Reset Elements (FrORE), and Constant in Gain Lead in Phase (CgLp) (\cite{ krishnan1974synthesis, horowitz1975non, banos2011limit, hazeleger2016second, saikumar2017generalized, weise2019fractional, saikumar2019constant, weise2020extended}). 

This study focuses on first-order reset controllers, including CI- and FORE-based reset elements such as PI+CI control systems (\cite{banos2007definition}), reset PID controllers (\cite{hosseinnia2013fractional, bisoffi2020stick}), and CgLp controllers. Leveraging their gain and phase advantages, first-order reset controllers have been extensively studied in the literature to enhance transient performance—by reducing overshoot and settling time—and steady-state performance—by improving tracking accuracy and disturbance rejection, particularly in precision motion systems (\cite{zheng2000experimental, heertjes2016experimental, chen2019development, zhao2019overcoming, beerens2019reset, bisoffi2020stick}).
% . Applications of reset controllers are diverse, spanning mechatronic systems such as tape-speed control stages (\cite{}), wafer stages, and other

Motivated by the performance of first-order reset controllers, this study aims to further enhance their phase and gain characteristics. Reset control introduces both first-order and high-order harmonics in the frequency domain, and by adjusting reset instants, these harmonics' characteristics can be tailored to improve overall system performance. In closed-loop reset feedback systems, the feedback error signal has traditionally been used as the reset-triggered signal that trigger reset actions. Recent studies have explored alternative reset-triggered signals to tune system performance further. For instance, research in (\cite{karbasizadeh2022band, karbasizadeh2022complex}) developed strategies to modify reset actions to reduce high-order harmonics. However, these techniques focus on reducing high-order harmonics within specific frequency ranges, at the expense of sacrificing the phase and gain characteristics of both first-order and high-order harmonics in other frequency ranges. These limitations restrict the applicability of these methods. In contrast, this work contributes by optimizing the gain and phase of first-order harmonics while preserving the properties of high-order harmonics, thereby improving system performance. The main contributions are as follows:
\begin{itemize}[itemsep=0.01mm, topsep=0.1pt]
    \item First, a linear time-invariant (LTI) phase lead component is proposed as a shaping filter to tune the phase of reset instants, termed shaped reset control. This approach improves the phase-gain margin of the first-order harmonic performance while maintaining similar high-order harmonic characteristics compared to previous reset control strategies. Leveraging the enhanced phase-gain margin, it improves phase lead, resulting in better transient response, or it can be designed to optimize gain properties, leading to superior steady-state performance.
    \item Then, frequency-domain analysis and design procedures are provided for shaped CI- and FORE-based reset elements to achieve phase lead and gain improvements over previous reset control systems. 
    \item Finally, two case studies on a precision motion stage experimentally validate the effectiveness of the shaped reset control strategy. In the first case, the shaped reset PID system introduces phase lead while retaining similar gain properties compared to the reset PID system. This phase lead benefit results in zero-overshoot transient performance, outperforming both the linear PID and reset PID systems. In the second case, the shaped CgLp-PID system is designed to preserve phase margin and high-frequency gain while achieving higher gain at low frequencies and increased bandwidth. These gain enhancements improve tracking precision and disturbance suppression compared to the CgLp-PID and linear PID systems.
\end{itemize}
    
% These procedures facilitate the practical implementation of shaped reset control.
% However, this approach has several limitations. First, it introduces significant phase lag in the first-order harmonics within the target frequency range, restricting its application to a narrow band that must remain distant from the system's bandwidth to preserve stability. Second, while this technique reduces higher-order harmonics at specific frequencies, it may amplify them at others, thus limiting its broader applicability and flexibility.
The remainder of the paper is organized into four sections. Section \ref{sec: Background} presents an overview of reset control, covering its definition, stability and convergence conditions, the reset elements employed in this study, and the frequency-domain design objectives for reset control in precision motion systems. Section \ref{sec: contribution} presents the analysis and design procedure of the shaped reset control, highlighting its frequency-domain benefits in terms of phase lead and gain improvements. Section \ref{sec: results} details experimental results conducted on a precision motion stage, validating the effectiveness of the shaped reset control systems compared with linear and reset control systems. Finally, Section \ref{sec: Conclusion} summarizes the main findings and offers suggestions for future research directions.

\section{Preliminaries}
\label{sec: Background}
\subsection{Definition of the Reset Control System}
The reset controller, denoted by \(\mathcal{C}\), is a time-invariant hybrid system (\cite{banos2012reset}). Its state-space representation, with an input signal \(e(t)\), an output signal \(v(t)\), and a state vector \(x_r(t) \in \mathbb{R}^{n_c \times 1}\), is defined as follows:
\begin{equation} 
\label{eq: State-Space eq of reset controller} 
\mathcal{C} = \begin{cases}\dot{x}_r(t) = A_Rx_r(t) + B_Re(t), & t \notin J, \\
x_r(t^+) = A_\rho x_r(t), & t \in J, \\
v(t) = C_Rx_r(t) + D_Re(t).\end{cases}
\end{equation} 
The reset actions of $\mathcal{C}$ in \eqref{eq: State-Space eq of reset controller} are triggered by the zero-crossings of a reset-triggered signal \( e_s(t) \). Consequently, the jump set is defined as \( J := \{ t_i \mid e_s(t_i) = 0, \ i \in \mathbb{Z}^+ \} \), representing an unbounded, monotonically increasing time sequence. For any \( i \in \mathbb{Z}^+ \), it holds that \( t_i < t_{i+1} \) and \(\lim\nolimits_{i \to \infty} t_i \to +\infty \). When \( t \in J \), the jump map of \(\mathcal{C}\) is determined by the matrix \( A_\rho \), given by
\begin{equation}
\label{eq:gamma_def}
%	\resizebox{.9\hsize}{!}{$
		A_\rho =
		\begin{bmatrix}
			\gamma & \\
			& I_{n_c-1}
		\end{bmatrix}, \text{ where }\gamma \in (-1,1) \in\mathbb{R}.
%		$}
\end{equation}

When \( t \notin J \), the flow dynamics of \(\mathcal{C}\) are defined by the matrices \( A_R \in \mathbb{R}^{n_c \times n_c} \), \( B_R \in \mathbb{R}^{n_c \times 1} \), \( C_R \in \mathbb{R}^{1 \times n_c} \), and \( D_R \in \mathbb{R}^{1 \times 1} \). These matrices characterize the Base-Linear Controller (BLC) \(\mathcal{C}_{bl}\), given by:
\begin{equation}
\label{eq: Cbl}
  \mathcal{C}_{bl}(\omega) = C_R(j\omega I-A_R)^{-1}B_R+D_R, \ j = \sqrt{-1},
\end{equation} where $\omega\in\mathbb{R}^+$ [rad/s] represents the angular frequency. 

% Equation \eqref{eq:gamma_def} defines a reset controller with one reset state, which is the focus of this research, including commonly used reset controllers such as the CI and the FORE.

Figure \ref{fig: RCS_d_n_r_n_n} depicts the block diagram of a closed-loop reset feedback control system used in this study. This system comprises a reset controller \(\mathcal{C}\) defined in \eqref{eq: State-Space eq of reset controller}, a LTI controller \(\mathcal{C}_\alpha\), and the plant \(\mathcal{P}\). The LTI system \(\mathcal{C}_s\) (where \(\angle \mathcal{C}_s(\omega)\in(-\pi,\pi]\)) is referred to as the \enquote{shaping filter} used to shape the reset actions. Signals \(r\), \(e\), \(e_s\), \(v\), \(u\), \(d\), \(n\), and $y$ denote the reference, error, reset triggered, reset output, control input, process disturbance, sensor noise, and system output signals, respectively. 
% The LTI $\mathcal{C}_s(s)$ is referred to as the \enquote{shaping filter} used to shape the reset triggered signal $e_s(t)$, defined as
% \begin{equation}
% \label{eq: Cs_def0}
% \begin{aligned}
% C_s(s) = \frac{N(s)}{D(s)} = \frac{a_n s^n + a_{n-1} s^{n-1} + \cdots + a_1 s + a_0}{b_m s^m + b_{m-1} s^{m-1} + \cdots + b_1 s + b_0},\\
% n, m\in\mathbb{N}, a_n\in\mathbb{R}, b_m\in\mathbb{R}.
% \end{aligned}
% \end{equation}

\begin{figure}[h]
	%	\missingfigure{RCsystem}
	\centerline{\includegraphics[width=0.46\textwidth]{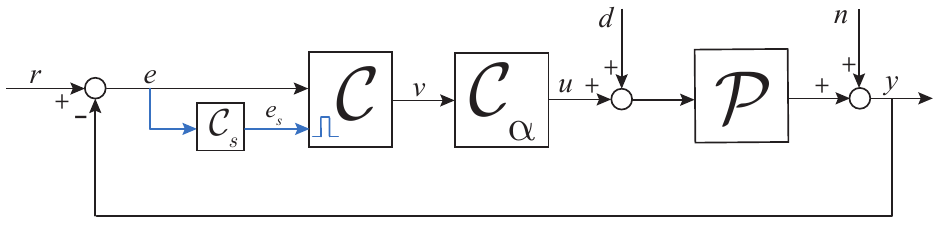}}
	\caption{Block diagram of the closed-loop reset feedback control system, where the blue lines represent the reset action.}
	\label{fig: RCS_d_n_r_n_n}
\end{figure}

\subsection{Stability and Convergence Conditions for Reset Systems}
\label{subsec: Stability and Convergence}
This study focuses on the design of reset control system to enhance system performance. While the stability and convergence of reset systems are not the primary focus, they are essential for the frequency-domain analysis and practical application of such systems. Therefore, the following two assumptions, based on previous literature, outline the necessary conditions to ensure stability and convergence of reset control.
% Frequency response analysis is a powerful tool for designing feedback control systems in industries. To facilitate the integration of these new reset structure into practical applications, we developed a frequency response analysis method specifically tailored for the closed-loop reset system. 

% \subsubsection{Stability and Convergence for the Open-loop Reset Controller}
% The reset system controller \(\mathcal{C}\) \eqref{eq: State-Space eq of reset controller}, with the input signal \(e(t) = |E|\sin(\omega t + \angle E)\), exhibits globally asymptotically stable \(2\pi/\omega\)-periodic behavior if and only if (\cite{guo2009frequency}
% \begin{equation}
% \label{eq:open-loop stability}
%     |\lambda (D_Re^{A_R\delta})|<1,\ \forall \delta \in \mathbb{R}^+.
% \end{equation}
If both $\Delta_i=t_{i+1}-t_i=\delta$ is a constant and $A_\rho \equiv M$ is a constant matrix, then the reset controller \(\mathcal{C}\) \eqref{eq: State-Space eq of reset controller} under an input \( e(t) = |E|\sin(\omega t + \angle E) \), where \( |E| \) and \( \angle E \) denote the magnitude and phase of the signal \( e(t) \) respectively, exhibits a globally asymptotically stable \( 2\pi/\omega \)-periodic solution and converges globally if and only if (\cite{FIRC})
\begin{equation}
\label{eq:open-loop stability}
|\lambda (Me^{A_R}\delta)| <1,\ \ \forall \delta \in \mathbb{R}^+,  
\end{equation}
where $\lambda(\cdot)$ denotes the eigenvalue of $(\cdot)$.    

To ensure the existence of a periodic stable solution for sinusoidal-input reset systems, and thereby enable the sinusoidal-input frequency response analysis of the reset system, the following assumption is introduced:
\begin{assum}
\label{assum: open-loop stable}
The reset system controller \(\mathcal{C}\) satisfies the condition in \eqref{eq:open-loop stability}. The LTI systems \(\mathcal{C}_\alpha\) and \(\mathcal{C}_s\) are Hurwitz.
\label{open-loop stability}   
\end{assum}
% Let the set $\mathscr{L}_2$ consists of all measurable functions $f(\cdot):\mathbb{R}_+\to \mathbb{R}$ such that $\int_0^\infty|f(t)|^2 dt <\infty$, being the $\mathscr{L}_2$-norm $||\cdot||:\mathscr{L}_2\to\mathbb{R}_+$ defined by $||f|| = \sqrt{\int_0^\infty|f(t)|^2 dt}$. 
% The closed-loop reset control system \label{eq:cl} is $\mathscr{L}_2$-stable for every input signals $r\in \mathscr{L}_2$, $d\in \mathscr{L}_2$, and $n\in \mathscr{L}_2$, the output $y\in \mathscr{L}_2$. 
% The $\mathscr{L}_2$ stability of the closed-loop reset control system \eqref{eq:cl}, is ensured by the $H_{\beta}$ condition (\cite{beker2000forced, beker2004fundamental}, i.e., there exists a $\beta \in \mathbb{R}^{n_r\times 1}$ and a positive definite matrix $P_{n_r} \in \mathbb{R}^{{n_r} \times {n_r}}$ such that the transfer function
% \begin{equation}
% \label{Hbeta1}
% H_{\beta}(s) \overset{\Delta}{=} 
% \begin{bmatrix}
% P_{n_r}  &	0_{n_r \times n_l} & \beta C_{\alpha}
% \end{bmatrix}(sI-A_{cl})^{-1}
% \begin{bmatrix}
% I_{n_r}\\
% 0_{n_l \times n_r}\\
% 0_{n_\alpha \times n_r}
% \end{bmatrix}
% \end{equation}\\
% is strictly positive real and additionally a non-zero reset matrix $A_{\rho r}$ satisfies the condition	
% \begin{equation}
% \label{Hbeta2}
% A^T_{\rho r}P_{n_r}A_{\rho r} - P_{n_r} \leqslant 0,
% \end{equation}
% where $I_{n_r}$ is an identity matrix of size $n_r \times n_r$.
Additionally, to ensure the stability and convergence of the closed-loop reset system depicted in Fig. \ref{fig: RCS_d_n_r_n_n}, Assumption \ref{assum: stable} is introduced based on the work (\cite{dastjerdi2022closed}):
\begin{assum}
\label{assum: stable}
The closed-loop reset control system in Fig. \ref{fig: RCS_d_n_r_n_n} is assumed to satisfy the following conditions: the initial condition of the reset controller $\mathcal{C}$ is zero, there are infinitely many reset instants $t_i$ with $\lim\nolimits_{i \to \infty} t_i= +\infty$, the input signals are Bohl functions (\cite{barabanov2001bohl}), and the system meets the $H_\beta$ condition for quadratic stability detailed in (\cite{beker2004fundamental}).
\end{assum}
% When Assumption \ref{assum: stable} is satisfied, the closed-loop reset control system subjected to sinusoidal inputs demonstrates a periodic steady-state solution. This solution can be denoted as $x(t) = \mathcal{S}(\sin(\omega t), \cos(\omega t),\omega)$, where $\mathcal{S}: \mathbb{R}^3\to \mathbb{R}^{n_c+n_\alpha}$ is a function. 
In practice, Assumption \ref{assum: stable} can be satisfied by employing appropriate design considerations (\cite{banos2012reset, saikumar2021loop}).
\subsection{Reset Elements Used in This Study}
This study focuses on the first-order reset elements, including the CI- and FORE-based reset elements, which are widely applied in the literature and have proven effective for enhancing system performance. The state-space matrices for these reset elements are defined as follows.
\subsubsection{Generalized Clegg Integrator (CI)}
The generalized Clegg Integrator (CI) is characterized by the following matrices:
\begin{equation}
\label{eq: ci_matric}
    A_R=0,  B_R=1, C_R=1, D_R=0, A_\rho=\gamma\in(-1,1).  
\end{equation}
When \(\gamma = 0 \), equation \eqref{eq: ci_matric} characterizes the CI (\cite{clegg1958nonlinear}).

\subsubsection{First-Order Reset Element (FORE)}

The FORE is designed as a Low-Pass Filter (LPF) with a reset mechanism, whose state-space matrices are defined as:
\begin{equation}
\label{eq: fore_matric}
\begin{aligned}
    &A_R = -\omega_{r}, B_R = \omega_{r}, C_R = 1, D_R = 0, \\
    &A_\rho = \gamma \in (-1,1), \quad\text{where } \omega_r \in \mathbb{R^+}.
\end{aligned}
\end{equation}

% The BLC of the FORE is a first-order LPF with the following transfer function:

% \[
%     \mathcal{C}_{FORE-bl}(s) = \frac{1}{s/\omega_r + 1}.
% \]

\subsubsection{Generalized FORE}
In this study, since both the generalized CI in \eqref{eq: ci_matric} and the FORE in \eqref{eq: fore_matric} are first-order reset elements, we define a generalized FORE that collectively describes these elements, with its matrices expressed as:
\begin{equation}
\label{eq: fore_general}
\begin{aligned}
    &A_R = -\omega_{\alpha}, B_R = \omega_{\beta}, C_R = 1, D_R = 0, \\
    &A_\rho = \gamma \in (-1,1), \text{where } \omega_{\alpha} \geq0\in \mathbb{R},\ \omega_{\beta} \in \mathbb{R}^+.
\end{aligned}
\end{equation}
In \eqref{eq: fore_general}, a system with \(\omega_{\alpha} = 0\) and \(\omega_{\beta} =1\) corresponds to the generalized CI in \eqref{eq: ci_matric}, while a system with \(\omega_{\alpha} = \omega_{\beta} > 0\) corresponds to the FORE in \eqref{eq: fore_matric}.

This study aims to design the shaping filter \(\mathcal{C}_s\) to enhance the performance of generalized FORE \eqref{eq: fore_general}-based reset systems, guided by the frequency-domain objectives outlined in the following section.
 % that usually operate without a shaping filter (i.e., \(\mathcal{C}_s = 1\)) (\cite{banos2012reset}). The design process
\subsection{Frequency-Domain Design Objective for Generalized FORE}
In linear systems, the SIDF is commonly employed for analyzing and designing controllers in the frequency domain to meet time-domain performance requirements.

Similarly, for nonlinear systems, where the output contains multiple harmonics, the Higher-Order Sinusoidal Input Describing Function (HOSIDF) is used to perform frequency response analysis (\cite{nuij2006higher}).
% This method allows for a more comprehensive understanding of the system’s behavior by considering the effects of these higher-order harmonics.

Consider a generalized FORE \(\mathcal{C}\) as defined by \eqref{eq: State-Space eq of reset controller} and \eqref{eq: fore_general}, satisfying the condition in \eqref{eq:open-loop stability}, with an input signal \(e(t) = |E|\sin(\omega t)\) and a reset-triggered signal \(e_s(t) = |E|\cdot|\mathcal{C}_s(\omega)|\sin(\omega t + \angle E + \angle \mathcal{C}_s(\omega))\), where \(\angle \mathcal{C}_s(\omega)\in(-\pi,\pi]\). 
The HOSIDF for \(\mathcal{C}\), denoted as \(\mathcal{C}_n(\omega)\), is given by (\cite{Xinxin_zhang_HOSIDF}):
\begin{equation}
\label{eq: fore_cn1}
\resizebox{0.99\columnwidth}{!}{$
    \mathcal{C}_n(\omega)=
    \begin{cases}
 (\Psi(\omega)+ 1) \cdot \omega_{\beta}/(\omega_{\alpha} +j\omega), & \text{ for } n=1,\\ 
 \Psi(\omega)\cdot \omega_{\beta}/(\omega_{\alpha} +jn\omega)\cdot e^{j(n-1)\angle \mathcal{C}_s(\omega)},&\text{ for odd } n>1,\\
     0, &\text{ for even } n\geq2,
    \end{cases}
$}
 % ({\omega_{\alpha}/\omega_{\beta}+j\omega/\omega_{\beta}})
 % /({\omega_{\alpha}/\omega_{\beta}+jn\omega/\omega_{\beta}})
\end{equation}
where 
\begin{equation}
\label{eq: generalized_fore_alpha}
    \begin{aligned}
    \Lambda(\omega) &= \omega ^2 + \omega_{\alpha}^2,\\
   \Theta(\omega) &= e^{-\pi\omega_{\alpha}/\omega},\\
   \Psi(\omega) &= {2j\omega}\Omega(\omega)\alpha(\omega)/({\pi}\Lambda(\omega)),\\
   \Omega(\omega) &= {(1-\gamma)\cdot(1+\Theta(\omega))}/({1+\gamma\Theta(\omega)}),\\  
   \alpha(\omega)&=e^{j\angle \mathcal{C}_s(\omega)}[\omega \cos(\angle \mathcal{C}_s(\omega))+\omega_{\alpha}\sin(\angle \mathcal{C}_s(\omega))].
    \end{aligned}
\end{equation}
From \eqref{eq: fore_cn1}, the \(n\)th transfer function of the open-loop reset system shown in Fig. \ref{fig: RCS_d_n_r_n_n}, which satisfies Assumption \ref{assum: open-loop stable}, is defined as follows:
\begin{equation}
\label{eq: Ln}
    \mathcal{L}_n(\omega) = \mathcal{C}_n(\omega)\mathcal{C}_\alpha(n\omega)\mathcal{P}(n\omega).
\end{equation}
The bandwidth frequency \( \omega_c \in\mathbb{R}^+\) of a reset control system is defined as the frequency at which the magnitude of the first-order harmonic open-loop transfer function \( \mathcal{L}_1(\omega) \), as given in \eqref{eq: Ln}, reaches 0 dB, mathematically expressed as:
\begin{equation}
\label{eq: wc}
\mathcal{L}_1(\omega_c) = 0 \, \text{dB}.
\end{equation}

% At this frequency, the control system transitions between different behaviors in terms of gain and phase, which are critical for both the transient and steady-state performance of the system. By focusing on the behavior of \( L_1(\omega) \), the reset control design ensures that the system maintains appropriate phase margins and gain characteristics, contributing to improved precision and stability.
In this study, the proposed shaped reset control element is designed to enhance the performance of precision motion systems by satisfying the first-order harmonic \(\mathcal{L}_1(\omega)\) requirements specified in Remark \ref{rem: C1_demand}, while preserving similar high-order harmonics \(\mathcal{L}_n(\omega)\) for \(n > 1\).
\begin{rem}
\label{rem: C1_demand}
Inspired by the loop-shaping technique in linear precision motion control, the design of the first-order harmonic \( \mathcal{L}_1(\omega) \) in \eqref{eq: Ln} for open-loop reset feedback control systems aims to achieve the following key objectives:

(\rom{1}) Ensuring a phase margin of \( \angle \mathcal{L}_1(\omega_c) + 180^\circ \) at the bandwidth frequency \( \omega_c \) defined in \eqref{eq: wc}, to guarantee system stability and optimize transient performance.

(\rom{2}) Maintaining a high gain \( |\mathcal{L}_1(\omega)| \) at frequencies where \( \omega < \omega_c \) to ensure low-frequency reference tracking precision and disturbance rejection.

(\rom{3}) Achieving low gain \( |\mathcal{L}_1(\omega)| \) at frequencies where \( \omega > \omega_c \) to suppress high-frequency noise and improve robustness.
\end{rem}

\section{Frequency-Domain Analysis and Design of the Shaped Reset Feedback Control System}
\label{sec: contribution}
% \subsection{Analysis and Design of the Shaping Filter in Generalized FORE}
In this section, we first present the phase properties of the generalized FORE derived from its HOSIDF, as detailed in Remark \ref{rem: Cs_Cn} and Remark \ref{eq: phase_CI_BW}. Subsequently, Lemmas \ref{lem: phase of Cs to angle C1} and \ref{rem: alpha_requirement} outline the conditions necessary to enhance the phase margin of the generalized FORE while maintaining similar gain properties. To fulfill these conditions, Theorems \ref{thm: ci_Cs_design} and \ref{thm: fore_Cs_design} establish the requirements for designing the shaping filter \(\mathcal{C}_s\) for CI and FORE elements. Finally, design procedures are provided for the shaped generalized FORE to improve system performance.

\subsection{Frequency-Domain Analysis of Shaping Filters to Achieve Phase Lead in Generalized FROE}

From the HOSIDF expressions for the generalized FORE in \eqref{eq: fore_cn1} and \eqref{eq: generalized_fore_alpha}, two key properties of \(\mathcal{C}_n(\omega)\) are identified. First, Remark \ref{rem: Cs_Cn} highlights the impact of the shaping filter \(\mathcal{C}_s(\omega)\) on the HOSIDF \(\mathcal{C}_n(\omega)\).
\begin{rem}
\label{rem: Cs_Cn}
The phase of the shaping filter, \(\angle \mathcal{C}_s(\omega)\), and the HOSIDF of the generalized FORE, \(\mathcal{C}_n(\omega)\), are related by \(\mathcal{C}_n(\angle \mathcal{C}_s(\omega)) = \mathcal{C}_n(\angle \mathcal{C}_s(\omega) + k\pi)\), where \(k \in \mathbb{Z}\). Furthermore, the magnitude of the shaping filter, \(|\mathcal{C}_s(\omega)|\), has no effect on the HOSIDF.
\end{rem}

The following Remark \ref{eq: phase_CI_BW} derives the phase of the first-order harmonic, \(\angle \mathcal{C}_1(\omega)\), at the bandwidth frequency \(\omega_c\) in the generalized FORE.
\begin{rem}
\label{eq: phase_CI_BW}
From \eqref{eq: fore_cn1} and \eqref{eq: generalized_fore_alpha}, the phase of the first-order harmonic \(\mathcal{C}_1(\omega)\) at the bandwidth frequency \(\omega_c\) is expressed as:
\begin{equation}
\label{eq: angle C1}
\angle \mathcal{C}_1(\omega_c) = 
\begin{cases}
\phi_{\lambda}(\omega_c), &\text{ for } \omega_\alpha = 0,\\
\phi_{\alpha}(\omega_c) -\arctan(\omega_c/\omega_\alpha), &\text{ for } \omega_\alpha > 0.
\end{cases}
\end{equation}
where 
\begin{equation}
\label{eq: angle phi_lambda, phi_alpha}
\resizebox{0.9\columnwidth}{!}{$
\begin{aligned}
 \kappa_\zeta(\omega_c) &= {\omega_c}\cdot\Omega(\omega_c)/( {\pi}\cdot\Lambda(\omega_c)),\\
\phi_{\alpha}(\omega_c) &= \arctan \left(\frac{1}{({\kappa_\gamma(\omega_c)\cdot \kappa_\zeta(\omega_c)})^{-1}-\tan(\angle\mathcal{C}_s(\omega_c))}\right),\\
\phi_{\lambda}(\omega_c) &= \arctan \left( \frac{\sin(2\angle \mathcal{C}_s(\omega_c))-\pi(1+\gamma)/(2(1-\gamma))}{\cos(2\angle \mathcal{C}_s(\omega_c))+1} \right),\\
\kappa_\gamma(\omega_c) &= \omega_c\cdot\cos(2\angle \mathcal{C}_s(\omega_c))+\omega_\alpha\cdot\sin(2\angle \mathcal{C}_s(\omega_c))+\omega_c.
\end{aligned}
$}
\end{equation}
Functions $\Lambda(\omega)$ and $\Omega(\omega)$ are defined in \eqref{eq: generalized_fore_alpha}.
\end{rem}

The performance of the generalized FORE is mainly influenced by three main parameters within the HOSIDF \(\mathcal{C}_n(\omega)\) as defined in \eqref{eq: fore_cn1}, including: (1) the phase of the first-order harmonic at the bandwidth frequency \(\omega_c\): \(\angle \mathcal{C}_1(\omega_c)\) given in \eqref{eq: angle C1}, (2) the magnitude of the first-order harmonic: \(|\mathcal{C}_1(\omega)|\), and (3) the magnitude of the high-order harmonics: \(|\mathcal{C}_n(\omega)|\), for $n>1$. 

In this study, the design of the shaping filter \(\mathcal{C}_s\) aims to provide a phase lead to the first-order harmonic at the bandwidth frequency, \(\angle \mathcal{C}_1(\omega_c)\) as defined in \eqref{eq: angle C1}, while preserving similar gain characteristics \(|\mathcal{C}_n(\omega)|\) compared to the system without the shaping filter (i.e., \(\mathcal{C}_s = 1\)). To achieve this, Lemma \ref{lem: phase of Cs to angle C1} specifies the necessary conditions for the shaping filter to effectively provide the phase lead advantage.
\begin{lem}
\label{lem: phase of Cs to angle C1}
The phase of the first-order harmonic in the generalized FORE at the bandwidth frequency \(\omega_c\), represented as \(\angle \mathcal{C}_1(\omega_c) \in (-\pi, \pi]\), is larger than that of the system without the shaping filter (i.e., \(\mathcal{C}_s = 1\)) if the phase of the shaping filter satisfies the following conditions:
\begin{equation}
\label{eq: CS_Phase_lead_2cond}
    \begin{cases}
     \angle \mathcal{C}_s(\omega_c)\in\left(k\pi,\ \frac{\pi}{2}-\arctan \bigg(\frac{\pi(1+\gamma)}{4(1-\gamma)}\bigg)+k\pi\right), &\text{ for } \omega_\alpha = 0,\\
     \angle \mathcal{C}_s(\omega_c) \in\left(k\pi,\ \frac{\pi}{2}-\arctan \bigg(\frac{\omega_c}{\omega_\alpha}\bigg)+k\pi\right), &\text{ for } \omega_\alpha > 0,
    \end{cases}
\end{equation}
where $k=-1,0$.
\end{lem}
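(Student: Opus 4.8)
The plan is to treat the two cases ($\omega_\alpha = 0$ and $\omega_\alpha > 0$) separately, in each case comparing the closed-form expression for $\angle \mathcal{C}_1(\omega_c)$ from Remark~\ref{eq: phase_CI_BW} against its value when $\mathcal{C}_s = 1$ (i.e.\ $\angle \mathcal{C}_s(\omega_c) = 0$), and then extracting the range of $\angle \mathcal{C}_s(\omega_c)$ for which the difference is positive. First I would record the baseline: setting $\angle \mathcal{C}_s(\omega_c) = 0$ in $\phi_\lambda$ gives $\phi_\lambda = \arctan\!\big({-\pi(1+\gamma)}/({4(1-\gamma)})\big)$, and in the $\omega_\alpha > 0$ case the baseline $\angle \mathcal{C}_1(\omega_c)$ reduces to $\arctan(\kappa_\gamma \kappa_\zeta) - \arctan(\omega_c/\omega_\alpha)$ evaluated with $\angle\mathcal{C}_s = 0$ (so $\kappa_\gamma = 2\omega_c$). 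Because the $-\arctan(\omega_c/\omega_\alpha)$ term is independent of $\angle\mathcal{C}_s$, in the second case monotonicity of $\angle\mathcal{C}_1(\omega_c)$ in $\angle\mathcal{C}_s(\omega_c)$ is governed entirely by $\phi_\alpha$, and in the first case entirely by $\phi_\lambda$.

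For the $\omega_\alpha = 0$ case I would analyze $\phi_\lambda(\omega_c)$ as a function of $\theta := \angle\mathcal{C}_s(\omega_c)$. Writing $c := \pi(1+\gamma)/(2(1-\gamma)) > 0$, we have $\phi_\lambda = \arctan\!\big((\sin 2\theta - c)/(\cos 2\theta + 1)\big)$. Using the identities $\sin 2\theta = 2\tan\theta/(1+\tan^2\theta)$ and $\cos 2\theta + 1 = 2/(1+\tan^2\theta)$, the argument simplifies to $\tan\theta - \tfrac{c}{2}(1 + \tan^2\theta)$; this is a clean quadratic in $\tan\theta$ which is $\geq 0$ precisely when $\tan\theta$ lies between the two roots $\tfrac{1}{c}\big(1 \mp \sqrt{1-c^2}\big)$ — but more useful here is that since $\arctan$ is increasing, $\phi_\lambda$ exceeds its value at $\theta = 0$ iff $g(\tan\theta) > g(0)$ where $g(x) = x - \tfrac{c}{2}(1+x^2)$. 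Since $g$ is a downward parabola with vertex at $x = 1/c$, the set $\{x : g(x) > g(0)\}$ is the open interval $(0,\,2/c)$. Translating back through $\tan\theta$ and accounting for the $k\pi$ periodicity of $\tan$ (and of $\phi_\lambda$, which I would note follows from the double-angle structure, consistent with Remark~\ref{rem: Cs_Cn}), this gives $\theta \in (k\pi,\ \arctan(2/c) + k\pi)$. The last identity to verify is that $\arctan(2/c) = \tfrac{\pi}{2} - \arctan(c/2) = \tfrac{\pi}{2} - \arctan\!\big(\pi(1+\gamma)/(4(1-\gamma))\big)$, which is immediate from $\arctan x + \arctan(1/x) = \pi/2$ for $x>0$; this matches the claimed bound. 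I would also check $2/c > 0$ to confirm the interval is nonempty.

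For the $\omega_\alpha > 0$ case, the relevant quantity is $\phi_\alpha(\omega_c) = \arctan\!\big(1/((\kappa_\gamma \kappa_\zeta)^{-1} - \tan\theta)\big)$. Here $\kappa_\zeta > 0$ (since $\Omega(\omega_c) > 0$ and $\Lambda(\omega_c) > 0$), and $\kappa_\gamma = \omega_c(\cos 2\theta + 1) + \omega_\alpha \sin 2\theta = (1+\tan^2\theta)^{-1}\big(2\omega_c + 2\omega_\alpha\tan\theta\big) \cdot \ldots$ — after the same double-angle simplification $\kappa_\gamma = 2(\omega_c + \omega_\alpha\tan\theta)/(1+\tan^2\theta)$, so $\kappa_\gamma > 0$ exactly when $\tan\theta > -\omega_c/\omega_\alpha$, i.e.\ $\theta > -\arctan(\omega_c/\omega_\alpha)$ modulo $\pi$. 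On that region, increasing $\tan\theta$ decreases the denominator $(\kappa_\gamma\kappa_\zeta)^{-1} - \tan\theta$; I would need to track the sign of this denominator to handle the branch of $\arctan$ correctly (this is the main obstacle — the $\arctan$ in $\phi_\alpha$ has a pole as the denominator crosses zero, so "larger $\angle\mathcal{C}_1$" must be interpreted via the $(-\pi,\pi]$ convention stated in the lemma, and one must argue that on the claimed interval the comparison with the $\theta=0$ baseline is genuine rather than an artifact of wrapping). The cleanest route is to show that on $\theta \in (0,\ \tfrac{\pi}{2} - \arctan(\omega_c/\omega_\alpha))$ the product $\kappa_\gamma\kappa_\zeta$ stays positive and finite and the map $\theta \mapsto \phi_\alpha(\theta)$ is continuous and strictly increasing there, with $\phi_\alpha(0) = \arctan(\kappa_\gamma(0)\kappa_\zeta) = \arctan(2\omega_c\kappa_\zeta)$ equal to the baseline; hence every $\theta$ in that open interval yields a strictly larger $\angle\mathcal{C}_1(\omega_c)$. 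The upper endpoint $\tfrac{\pi}{2} - \arctan(\omega_c/\omega_\alpha)$ is exactly where $\kappa_\gamma \to 0^+$ (equivalently $\tan\theta \to \cot(\arctan(\omega_c/\omega_\alpha))$ is not quite it — rather where $2\theta$ hits the angle making $\kappa_\gamma$ vanish), so beyond it the monotonicity/positivity argument breaks, which is why the stated bound is sharp. Finally, the $k = -1$ branch is obtained by the $\pi$-periodicity of $\tan\theta$ together with Remark~\ref{rem: Cs_Cn}'s invariance $\mathcal{C}_n(\angle\mathcal{C}_s) = \mathcal{C}_n(\angle\mathcal{C}_s + k\pi)$, which shows $\angle\mathcal{C}_1(\omega_c)$ depends on $\angle\mathcal{C}_s(\omega_c)$ only modulo $\pi$, so the interval on $(k\pi, \cdot + k\pi)$ for $k=-1$ gives the same improvement as $k=0$.
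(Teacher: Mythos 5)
Your \(\omega_\alpha=0\) case is correct and in fact more explicit than the paper's own proof, which merely writes down the inequality \eqref{eq: cond_CI_CS} and asserts that "solving" it gives the bound. Your reduction via \(\sin 2\theta = 2\tan\theta/(1+\tan^2\theta)\), \(\cos 2\theta+1 = 2/(1+\tan^2\theta)\) to the downward parabola \(g(x)=x-\tfrac{c}{2}(1+x^2)\), the superlevel set \(\{g>g(0)\}=(0,2/c)\), and the identity \(\arctan(2/c)=\tfrac{\pi}{2}-\arctan(c/2)\) is exactly the computation the paper leaves implicit, and the appeal to Remark \ref{rem: Cs_Cn} for the \(k=-1\) branch matches the paper.

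The \(\omega_\alpha>0\) case contains a concrete error and an unclosed gap. First, the upper endpoint \(\tfrac{\pi}{2}-\arctan(\omega_c/\omega_\alpha)=\arctan(\omega_\alpha/\omega_c)\) is \emph{not} where \(\kappa_\gamma\to 0^+\): from your own simplification \(\kappa_\gamma = 2(\omega_c+\omega_\alpha\tan\theta)/(1+\tan^2\theta)\), the zero of \(\kappa_\gamma\) is at \(\tan\theta=-\omega_c/\omega_\alpha\), whereas at \(\tan\theta=\omega_\alpha/\omega_c\) one gets \(\kappa_\gamma=2\omega_c=\kappa_\gamma(0)\); the endpoint is where \(\kappa_\gamma\) \emph{returns to its baseline value}, i.e.\ where the paper's condition \eqref{eq: cond_fore_CS} \(\omega_c\cos 2\theta+\omega_\alpha\sin 2\theta>\omega_c\), equivalently \(\tan\theta\,(\omega_\alpha-\omega_c\tan\theta)>0\), becomes an equality. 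Second, the strict monotonicity of \(\theta\mapsto\phi_\alpha(\theta)\) on the whole interval that you propose to establish is both unproved and dubious: \(\kappa_\gamma\) is not monotone there (it rises and then falls back to \(2\omega_c\)), so the denominator \((\kappa_\gamma\kappa_\zeta)^{-1}-\tan\theta\) need not be monotone. Monotonicity is also unnecessary. The paper's (simpler) argument is componentwise: with \(\kappa_\zeta>0\) fixed, requiring simultaneously \(\tan\theta>0\) and \(\kappa_\gamma(\theta)>\kappa_\gamma(0)\) makes the denominator strictly smaller than its \(\theta=0\) value, hence \(\phi_\alpha\) strictly larger; intersecting \(\tan\theta\in(0,\infty)\) with \(\tan\theta\in(0,\omega_\alpha/\omega_c)\) yields exactly the stated interval, and \(\pi\)-periodicity supplies \(k=-1\). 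Your concern about the \(\arctan\) branch when the denominator crosses zero is legitimate (the paper's proof glosses over it as well, and the denominator \emph{can} change sign inside the interval), but the fix is to interpret \(\phi_\alpha\) as the continuously varying argument of the underlying complex quantity (passing through \(\pi/2\) as the denominator crosses zero), not to replace the comparison with a monotonicity claim that does not hold.
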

\begin{proof}
The proof is provided in \ref{pf: lemma: angle cs and C1}.
\end{proof}
% The shaping filter can not only provide phase lead as demonstrated in Lemma \ref{lem: phase of Cs to angle C1}, but for the generalized CI with \(\omega_\alpha = 0\), as derived from \eqref{eq: angle C1}, it can also be designed to ensure a phase greater than 0 by satisfying the condition outlined in the following Remark.
% \begin{rem}
% \label{rem: phase_0}
% The phase of the first-order harmonic in the generalized CI at the bandwidth frequency \(\omega_c\) exceeds 0, denoted as \(\angle \mathcal{C}_1(\omega_c) \in (0, \pi]\), provided that the phase of the shaping filter \(\angle \mathcal{C}_s(\omega_c)\) satisfies the following conditions:
% \begin{equation}
% \label{eq: Cs_ci_positive}
%      \angle \mathcal{C}_s(\omega_c)\in(k\pi+\theta_p,\ k\pi+\frac{\pi}{2} -\theta_p ),
% \end{equation}    
% where $k=-1,0,$ and
% \begin{equation}
%    \theta_p =  \frac{\arcsin\left( \frac{\pi(1+\gamma)}{2(1-\gamma)}\right)}{2}.
% \end{equation}
% \end{rem}

Lemma \ref{lem: phase of Cs to angle C1} outlines the conditions required for \( \angle \mathcal{C}_s(\omega_c) \) to achieve a phase lead. However, from \eqref{eq: fore_cn1}, altering \( \mathcal{C}_s(\omega) \) modifies the gain properties of \( |\mathcal{C}_n(\omega)| \). To ensure a fair comparison, it is essential to limit these gain variations, which can be achieved by adhering to the constraints in Lemma \ref{rem: alpha_requirement}.
\begin{lem}
\label{rem: alpha_requirement}
To limit the gain variation of \(|\mathcal{C}_n(\omega)|\) in the generalized FORE with a shaping filter \(\mathcal{C}_s \neq 1\), compared to the system where \(\mathcal{C}_s = 1\), the following condition must be satisfied:
    \begin{equation}
       \kappa_\alpha(\omega) \in (1 - \sigma, \ 1 + \sigma),\text{ for } \omega\neq\omega_c,
    \end{equation}  
    where $\sigma \in (0, 1) \subset \mathbb{R}$, and
    \begin{equation}
    \label{eq: kappa_alpha}
    \begin{aligned}
     \kappa_\alpha(\omega) &= |\cos(\angle \mathcal{C}_s(\omega))+\sin(\angle \mathcal{C}_s(\omega))\cdot {\omega_\alpha}/{\omega}|.
    \end{aligned}
    \end{equation}
\end{lem}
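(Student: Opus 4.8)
The plan is to isolate, inside the HOSIDF magnitude \(|\mathcal{C}_n(\omega)|\) of the generalized FORE, the only factor that changes when the shaping filter is switched on, and to show that its relative variation is exactly \(\kappa_\alpha(\omega)\) in \eqref{eq: kappa_alpha}. Inspecting \eqref{eq: fore_cn1}--\eqref{eq: generalized_fore_alpha}, the shaping filter enters \(\mathcal{C}_n(\omega)\) only through its phase \(\angle\mathcal{C}_s(\omega)\) — the magnitude \(|\mathcal{C}_s(\omega)|\) plays no role (Remark~\ref{rem: Cs_Cn}) — and \(\angle\mathcal{C}_s(\omega)\) appears (i) inside \(\alpha(\omega)\), hence inside \(\Psi(\omega)\), and (ii), for odd \(n>1\), through the unit-modulus factor \(e^{j(n-1)\angle\mathcal{C}_s(\omega)}\). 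Since this latter factor does not affect magnitudes, the entire \(\mathcal{C}_s\)-dependence of \(|\mathcal{C}_n(\omega)|\) is carried by \(|\alpha(\omega)|\) through the prefactor \(2j\omega\Omega(\omega)/(\pi\Lambda(\omega))\) of \(\Psi(\omega)\), which is itself independent of \(\mathcal{C}_s\).

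I would then evaluate \(|\alpha(\omega)|\) from \eqref{eq: generalized_fore_alpha}: since \(|e^{j\angle\mathcal{C}_s(\omega)}|=1\), one has \(|\alpha(\omega)|=|\omega\cos(\angle\mathcal{C}_s(\omega))+\omega_\alpha\sin(\angle\mathcal{C}_s(\omega))|\), whereas for \(\mathcal{C}_s=1\) (so \(\angle\mathcal{C}_s(\omega)=0\)) it reduces to \(|\alpha(\omega)|=\omega\). Dividing, the relative change of \(|\Psi(\omega)|\), and hence of \(|\mathcal{C}_n(\omega)|\) for every odd \(n>1\), is exactly
\[
\frac{|\mathcal{C}_n(\omega)|}{\,|\mathcal{C}_n(\omega)|\big|_{\mathcal{C}_s=1}\,}=\frac{|\alpha(\omega)|}{\omega}=\Bigl|\cos(\angle\mathcal{C}_s(\omega))+\sin(\angle\mathcal{C}_s(\omega))\,\frac{\omega_\alpha}{\omega}\Bigr|=\kappa_\alpha(\omega).
\]
Requiring this ratio to stay within a band of relative half-width \(\sigma\) about unity at every frequency except the bandwidth \(\omega_c\) — where \(\angle\mathcal{C}_s(\omega_c)\) is deliberately left free to realise the phase-lead objective of Lemma~\ref{lem: phase of Cs to angle C1} — is precisely the stated condition \(\kappa_\alpha(\omega)\in(1-\sigma,\,1+\sigma)\) for \(\omega\neq\omega_c\). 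This closes the argument for the high-order harmonics, which is the part relevant to keeping the \(n>1\) terms similar to the unshaped case.

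The first-order harmonic \(n=1\) is the step I expect to be the main obstacle, because \eqref{eq: fore_cn1} gives \(|\mathcal{C}_1(\omega)|=|\Psi(\omega)+1|\cdot\omega_\beta/|\omega_\alpha+j\omega|\): the additive \(+1\) destroys exact proportionality to \(\kappa_\alpha(\omega)\), and \(\Psi(\omega)\) also carries the argument \(e^{j\angle\mathcal{C}_s(\omega)}\) from \(\alpha(\omega)\), so one must control \(\Psi(\omega)\) itself rather than only its modulus. The route I would take is to note that \(\kappa_\alpha(\omega)\in(1-\sigma,1+\sigma)\) with \(\sigma<1\) confines \(\angle\mathcal{C}_s(\omega)\) to a neighbourhood of \(0\) modulo \(\pi\) (the branch relevant for a lead-type shaping filter, whose phase returns to \(0\) away from \(\omega_c\)), where by Remark~\ref{rem: Cs_Cn} the base value of the HOSIDF is unchanged; hence \(\alpha(\omega)\), and therefore \(\Psi(\omega)\), stays close to its value at \(\mathcal{C}_s=1\) — within \(\pm\sigma\) in modulus and close in argument — so that, writing \(\Psi_0=\Psi(\omega)\big|_{\mathcal{C}_s=1}\), the reverse triangle inequality \(\bigl|\,|\Psi(\omega)+1|-|\Psi_0+1|\,\bigr|\le|\Psi(\omega)-\Psi_0|\) bounds the variation of \(|\mathcal{C}_1(\omega)|\) by the same small quantity. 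Combining the exact identity for \(n>1\) with this estimate for \(n=1\) yields the claimed gain-limitation property, with the tolerance \(\sigma\) at the designer's disposal.
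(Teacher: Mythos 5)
Your proposal follows essentially the same route as the paper: it isolates $\alpha(\omega)$ as the only place the shaping filter enters the harmonic magnitudes, forms the ratio of the shaped to unshaped $\alpha(\omega)$ (the paper's $\Delta_\alpha(\omega)$, with $|\Delta_\alpha(\omega)|=\kappa_\alpha(\omega)$), and requires it to stay near unity for $\omega\neq\omega_c$. You are in fact somewhat more explicit than the paper — proving the exact identity $|\mathcal{C}_n(\omega)|/|\mathcal{C}_n(\omega)|_{\mathcal{C}_s=1}=\kappa_\alpha(\omega)$ for odd $n>1$ and treating the additive $+1$ in the $n=1$ harmonic via the reverse triangle inequality, where the paper simply asserts that bounding $\kappa_\alpha$ controls both the magnitude and the phase of $\Delta_\alpha$.
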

\begin{proof}
The proof is provided in \ref{proof for lema_alpha_requirement}.
\end{proof}

% Fig. \ref{fig: Desired_alpha_w} visually illustrates the bounds and desired curve for \(\kappa_\alpha(\omega)\) defined in \eqref{eq: kappa_alpha}.
% \begin{figure}[htp]
% 	%	\missingfigure{RCsystem}
% 	\centerline{\includegraphics[width=0.9\columnwidth]{}}
% 	\caption{Desired curve of \(\kappa_\alpha(\omega)\) under the constraints outlined in Lemma \ref{rem: alpha_requirement}.}
% 	\label{fig: Desired_alpha_w}
% \end{figure}

In practice, the value of \(\sigma \in (0, 1)\) should be kept small. Specifically, when \(\sigma = 0\), the gain properties of the generalized FORE remain unchanged. By adhering to the constraints in Lemma \ref{rem: alpha_requirement} and choosing an appropriate \(\sigma\), the gain changes can be effectively restricted, ensuring similar gain properties. The selection of \(\sigma\) depends on the system's gain requirements, as demonstrated in the case studies in Section \ref{sec: results}. 

To illustrate the effects of \(\sigma\), we examine the CI with a shaping filter that satisfies the constraints in Lemmas \ref{lem: phase of Cs to angle C1} and \ref{rem: alpha_requirement}, referred to as the shaped CI. Figure \ref{fig: sigma_alpha_changes_C1_C3} presents the magnitude \(|\mathcal{C}_1(\omega)|\) and phase \(\angle \mathcal{C}_1(\omega)\) of the first-order harmonic, along with the magnitude \(|\mathcal{C}_3(\omega)|\) of the third-order harmonic, for both the CI and the shaped CI with \(\gamma = 0\). The analysis considers \(\sigma = 0.01, 0.05, 0.1, 0.2\). 

For clarity, higher-order harmonics $|\mathcal{C}_n(\omega)|$ for \(n>3\) are omitted, as they exhibit the same trend as $|\mathcal{C}_3(\omega)|$ but with smaller magnitudes and minimal variations. Additionally, the shaping filters used in this example, while selected to satisfy Lemmas \ref{lem: phase of Cs to angle C1} and \ref{rem: alpha_requirement}, are not the only possible options. The design of \(\mathcal{C}_s\) will be further discussed in subsequent sections.

The results in Fig. \ref{fig: sigma_alpha_changes_C1_C3} demonstrate a distinct phase lead in $\angle \mathcal{C}_1(\omega)$ with minimal variations in $|\mathcal{C}_n(\omega)|$ for \(n = 1, 3\). Specifically, for \(\sigma = 0.1\), the phase lead at 100 Hz is 12.6 degrees, while the changes in \(|\mathcal{C}_1(\omega)|\) and \(|\mathcal{C}_3(\omega)|\) are negligible. The minimal effects of these small changes will be further shown in the case studies presented in Section \ref{sec: results}. 
% the change in the relative magnitude of the third-order harmonic, \(\zeta_3(\omega) - \zeta_3^0(\omega)\), where \(\zeta_3^0(\omega)\) and \(\zeta_3(\omega)\) denote the relative magnitude of the third-order harmonic to the first-order harmonic in the CI and the shaped CI, with \(\gamma = 0\), considering \(\sigma = 0.01, 0.05, 0.1, 0.2\).

% as well as the magnitude \(|\mathcal{C}_3(\omega)|\) of the third-order harmonic, for both the CI and the shaped CI with \(\gamma = 0\), considering \(\sigma = 0.01, 0.05, 0.1, 0.2\).

% Figure \ref{fig: sigma_alpha_changes_phase_zeta} compares the phase difference between the shaped CI and the CI, \(\angle \mathcal{C}_1(\omega) - \angle \mathcal{C}_1^0(\omega)\), where \(\angle \mathcal{C}_1^0(\omega)\) and \(\angle \mathcal{C}_1(\omega)\) represent the phase of the first-order harmonic in the CI and the shaped CI, respectively. The figure also shows the change in the relative magnitude of the third-order harmonic, \(\zeta_3(\omega) - \zeta_3^0(\omega)\), where \(\zeta_3^0(\omega)\) and \(\zeta_3(\omega)\) denote the relative magnitude of the third-order harmonic to the first-order harmonic in the CI and the shaped CI, respectively. 
\begin{figure}[htp]
	%	\missingfigure{RCsystem}
	\centerline{\includegraphics[width=0.95\columnwidth]{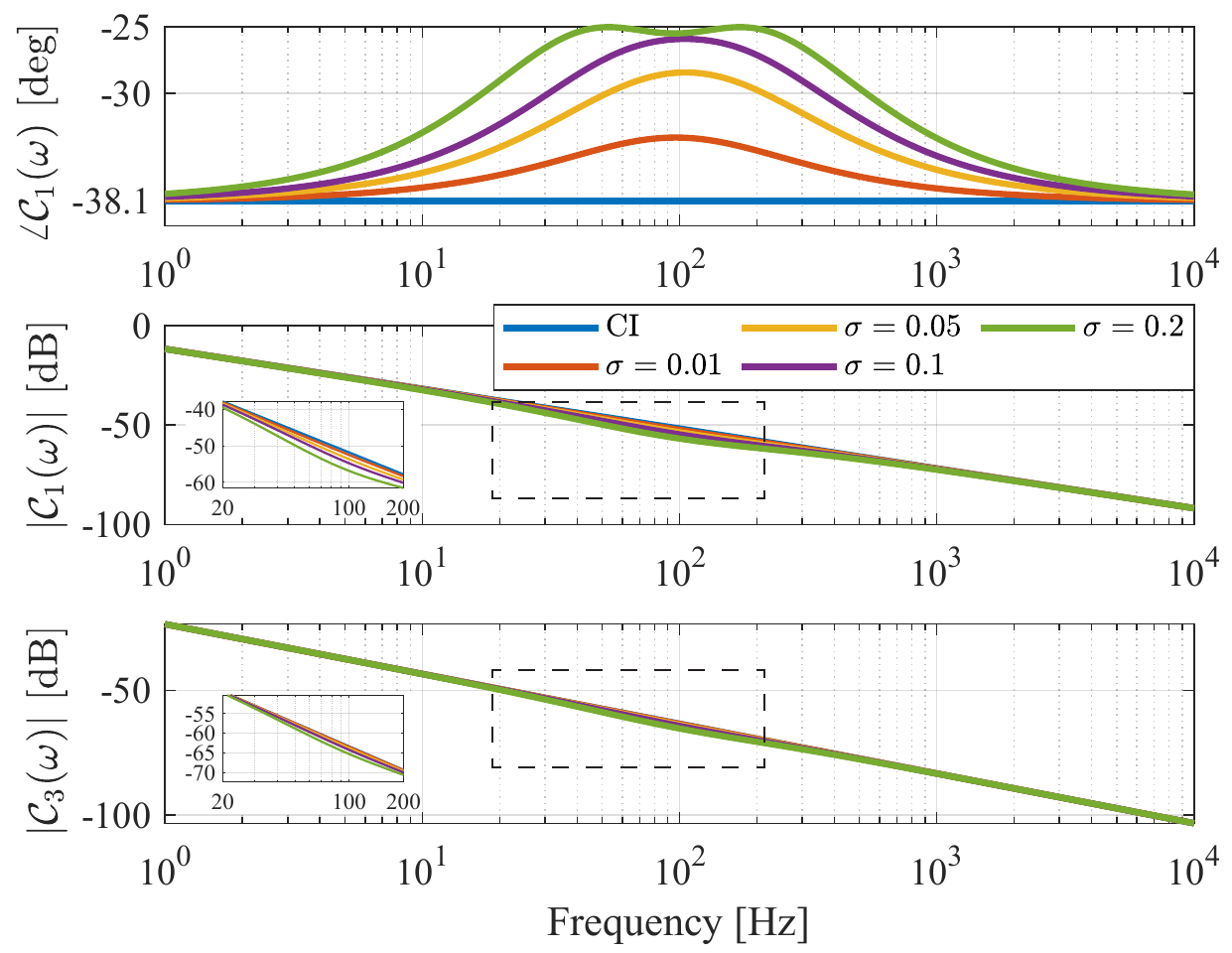}}
	\caption{The magnitudes \(|\mathcal{C}_1(\omega)|\) and phases \(\angle \mathcal{C}_1(\omega)\) of the first-order harmonic, along with the magnitude \(|\mathcal{C}_3(\omega)|\)of the third-order harmonic, for both the CI and the shaped CI with \(\gamma = 0\) considering \(\sigma = 0.01, 0.05, 0.1, 0.2\).}
	\label{fig: sigma_alpha_changes_C1_C3}
\end{figure}

To summarize, Lemmas \ref{lem: phase of Cs to angle C1} and \ref{rem: alpha_requirement} outline the conditions for enhancing the phase margin of the generalized FORE while preserving similar gain benefits. To simultaneously meet these requirements, Theorems \ref{thm: ci_Cs_design} and \ref{thm: fore_Cs_design} specify the conditions for \(\mathcal{C}_s(\omega)\) in the generalized FORE, as defined in \eqref{eq: fore_general}, for cases where \(\omega_\alpha = 0\) (generalized CI) and \(\omega_\alpha > 0\) (FORE), respectively.

\begin{thm}
\label{thm: ci_Cs_design}
% the shaping filter \(\mathcal{C}_s(\omega)\) must be designed such that it satisfies the conditions outlined in Lemmas \ref{} and \ref{}. These conditions ensure that the phase margin is increased without significantly altering the magnitude \(|\mathcal{C}_1(\omega)|\) and the relative harmonic ratio \(\zeta_n(\omega)\), thus preserving the system’s performance while incorporating the desired phase lead.
% achieve the criteria in Remark \ref{rem: alpha_requirement},
% compared to the generalized FORE with \(\angle \mathcal{C}_s(\omega) = 0\):
% \begin{enumerate}
%     \item Provide phase lead at the bandwidth frequency \(\omega_c\);
%     \item Maintain gain changes within \((-\sigma,\sigma)\) for \(\omega < \omega_c\), where $\sigma\in(0,1)\in\mathbb{R}$;
%     \item Maintain gain changes within \((0,\sigma)\) for \(\omega > \omega_c\),
% \end{enumerate}
In the generalized CI defined in \eqref{eq: ci_matric}, to achieve phase lead while maintaining similar gain properties compared to the system with \(\mathcal{C}_s = 1\), the shaping filter \(\mathcal{C}_s\), where \(\angle \mathcal{C}_s(\omega) \in (-\pi, \pi]\), needs to satisfy the following conditions:
\begin{equation}
\label{eq: cs_ineq_CI}
\begin{cases}
\angle \mathcal{C}_s(\omega_c) \in\left(k\pi,\ \frac{\pi}{2}-\arctan \bigg(\frac{\pi(1+\gamma)}{4(1-\gamma)}\bigg)+k\pi\right),&\text{ for } \omega=\omega_c,\\
\angle \mathcal{C}_s(\omega) \in \{\eta_1\cup\eta_2\cup\eta_3\}, &\text{ for } \omega\neq\omega_c,
\end{cases}
\end{equation}
where $k=-1,0,$ and 
\begin{equation}
\label{eq: eta123}
\begin{aligned}
\eta_1 &= (-\arccos(1-\sigma),\arccos(1-\sigma)),\\
\eta_2 &= (\arccos(-1+\sigma),\pi],\\
\eta_3 &= [-\pi, -\arccos(-1+\sigma)),\ \sigma\in(0,1)\subset\mathbb{R}.
\end{aligned}    
\end{equation}
\end{thm}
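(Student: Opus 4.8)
The plan is to obtain Theorem~\ref{thm: ci_Cs_design} as the specialization of Lemma~\ref{lem: phase of Cs to angle C1} and Lemma~\ref{rem: alpha_requirement} to the case $\omega_\alpha=0$, which by \eqref{eq: fore_general} is precisely the generalized CI of \eqref{eq: ci_matric}. The statement bundles two essentially independent requirements — a phase-lead requirement that concerns only the bandwidth frequency $\omega_c$, and a gain-preservation requirement that concerns all $\omega\neq\omega_c$ — so I would prove the two lines of \eqref{eq: cs_ineq_CI} separately and then combine them.

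For the branch $\omega=\omega_c$: substituting $\omega_\alpha=0$ into the first case of \eqref{eq: CS_Phase_lead_2cond} in Lemma~\ref{lem: phase of Cs to angle C1} immediately yields $\angle\mathcal{C}_s(\omega_c)\in\big(k\pi,\ \tfrac{\pi}{2}-\arctan(\tfrac{\pi(1+\gamma)}{4(1-\gamma)})+k\pi\big)$ with $k=-1,0$, which is verbatim the first line of \eqref{eq: cs_ineq_CI}. Hence this branch requires no new computation beyond invoking Lemma~\ref{lem: phase of Cs to angle C1} and keeping only those translates of its interval that lie in the principal-value range $(-\pi,\pi]$, which is where Remark~\ref{rem: Cs_Cn} (shifting $\angle\mathcal{C}_s$ by $k\pi$ does not change $\mathcal{C}_n(\omega)$) makes the restriction $k=-1,0$ harmless.

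For the branch $\omega\neq\omega_c$: I would start from the gain-variation bound of Lemma~\ref{rem: alpha_requirement}, $\kappa_\alpha(\omega)\in(1-\sigma,\,1+\sigma)$, and simplify $\kappa_\alpha$ using $\omega_\alpha=0$. In \eqref{eq: kappa_alpha} the term $\sin(\angle\mathcal{C}_s(\omega))\,\omega_\alpha/\omega$ then vanishes, leaving $\kappa_\alpha(\omega)=|\cos(\angle\mathcal{C}_s(\omega))|$. Because $|\cos(\cdot)|\le 1<1+\sigma$, the upper bound is automatic, and the requirement collapses to the single inequality $|\cos(\angle\mathcal{C}_s(\omega))|>1-\sigma$. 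Solving this on $(-\pi,\pi]$ splits into two cases: $\cos(\angle\mathcal{C}_s(\omega))>1-\sigma$, which by monotonicity of $\arccos$ on $[0,\pi]$ (with $\arccos(1-\sigma)\in(0,\pi/2)$) is equivalent to $\angle\mathcal{C}_s(\omega)\in\eta_1$; and $\cos(\angle\mathcal{C}_s(\omega))<-(1-\sigma)=-1+\sigma$, which is equivalent to $|\angle\mathcal{C}_s(\omega)|>\arccos(-1+\sigma)$, i.e.\ $\angle\mathcal{C}_s(\omega)\in\eta_2\cup\eta_3$ once intersected with $(-\pi,\pi]$. The union of the two cases is exactly $\{\eta_1\cup\eta_2\cup\eta_3\}$ of \eqref{eq: eta123}, which establishes the second line of \eqref{eq: cs_ineq_CI}.

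The only real obstacle is \emph{bookkeeping} rather than substance: I must check that the open/closed endpoints in $\eta_1,\eta_2,\eta_3$ match the open interval $(1-\sigma,1+\sigma)$ of Lemma~\ref{rem: alpha_requirement} (hence the open ends at $\pm\arccos(1-\sigma)$ and $\arccos(-1+\sigma)$, and the closed end at $\pm\pi$, which is simply the natural boundary of the $(-\pi,\pi]$ branch), and that the trigonometric inversions are applied on the correct monotonicity intervals of $\cos$. No stability or convergence argument enters, since Assumption~\ref{assum: open-loop stable} is already in force for the HOSIDF formulas \eqref{eq: fore_cn1}–\eqref{eq: generalized_fore_alpha} on which both lemmas are built.
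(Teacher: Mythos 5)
Your proposal is correct and follows essentially the same route as the paper's own proof in \ref{proof for theorem ci}: the first line of \eqref{eq: cs_ineq_CI} is read off from Lemma \ref{lem: phase of Cs to angle C1} at $\omega_\alpha=0$, and the second line comes from reducing $\kappa_\alpha(\omega)$ to $|\cos(\angle\mathcal{C}_s(\omega))|$ in Lemma \ref{rem: alpha_requirement} and inverting the cosine on $(-\pi,\pi]$ to obtain $\eta_1\cup\eta_2\cup\eta_3$. Your observation that the upper bound $1+\sigma$ is automatically satisfied is a small, correct simplification the paper leaves implicit.
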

{The ranges of \( \eta_1 \), \( \eta_2 \), and \( \eta_3 \) are visualized in Fig. \ref{fig: Desired_Cs_ci_final}.} 
\begin{proof}
The proof is provided in \ref{proof for theorem ci}.
\end{proof}

From \eqref{eq: eta123}, we have 
\begin{equation}
    \eta_1 = \{\eta_2-\pi\} \cup \{\eta_3+\pi\}.
\end{equation}
Since the effects of the shaping filter \(\mathcal{C}_s(\omega)\) on the HOSIDF of the generalized FORE are \(\pi\)-periodic, as noted in Remark \ref{rem: Cs_Cn}, positioning \(\angle \mathcal{C}_s(\omega)\) within \(\eta_2 \cup \eta_3\) can be effectively achieved by positioning it within \(\eta_1\). For reference, we plot a desired curve for \(\angle \mathcal{C}_s(\omega)\) within \(\eta_1\) for \(\omega \neq \omega_c\), while \(\angle \mathcal{C}_s(\omega_c)\) satisfies the constraint outlined in Theorem \ref{thm: ci_Cs_design}. However, the choice of \(\angle \mathcal{C}_s(\omega)\) is not unique; other curves for \(\angle \mathcal{C}_s(\omega)\) that remain within the specified bounds can also achieve phase lead and preserve similar gain.

\begin{figure*}[htp]
	%	\missingfigure{RCsystem}
	\centerline{\includegraphics[width=1.5\columnwidth]{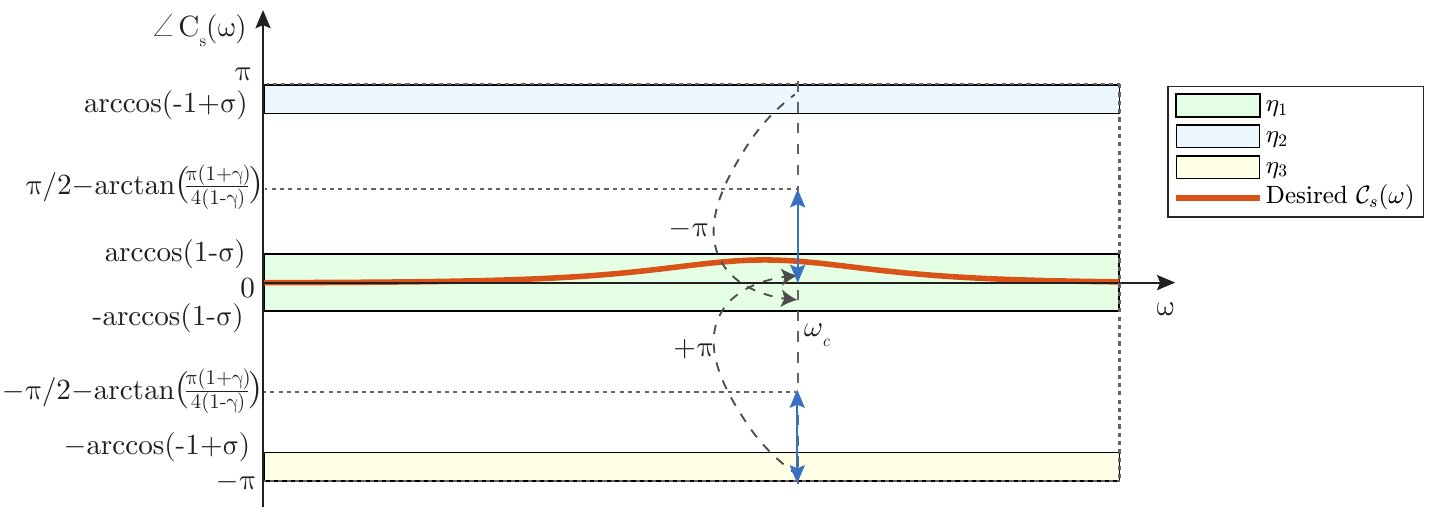}}
	\caption{The three bounds, \(\eta_1\)(\colorbox{green!15}{ }), \(\eta_2\)(\colorbox{yellow!20}{ }), and \(\eta_3\)(\colorbox{RoyalBlue!15}{ }), for \(\angle \mathcal{C}_s(\omega)\) are depicted as shaded regions. The constraint on \(\angle \mathcal{C}_s(\omega)\) at the bandwidth frequency \(\omega_c\) is highlighted with blue double arrows (\textbf{${\leftrightarrow}$}). The desired curve of \(\angle \mathcal{C}_s(\omega)\) for the generalized CI is shown in red, adhering to the constraints.}
	\label{fig: Desired_Cs_ci_final}
\end{figure*}

\begin{thm}
\label{thm: fore_Cs_design}
 % \begin{enumerate}
%     \item Provide phase lead at the bandwidth frequency \(\omega_c\), maintain gain changes within \((-\sigma,\sigma)\) for \(\omega < \omega_c\), and within \((0,\sigma)\) for \(\omega > \omega_c\),
% \end{enumerate}
In the FORE defined in \eqref{eq: fore_matric}, to achieve phase lead while maintaining similar gain properties compared to the system with \(\mathcal{C}_s = 1\), the shaping filter \(\mathcal{C}_s\), where \(\angle \mathcal{C}_s(\omega) \in (-\pi, \pi]\), needs to satisfy the following conditions:
\begin{equation}
\label{eq: cs_ineq_fore}
\begin{cases}
\angle \mathcal{C}_s(\omega_c) \in (k\pi,\frac{\pi}{2}-\arctan (\frac{\omega_c}{\omega_\alpha})+k\pi) , &\text{ for } \omega=\omega_c,\\
\angle \mathcal{C}_s(\omega)\in \{\beta_1\cup\beta_2\cup\beta_3\cup\beta_4\}, &\text{ for } \omega\neq\omega_c,
% ,\\ \angle \mathcal{C}_s(\omega)\in\{\beta_5\cup\beta_6\}, &\text{ for } \omega>\omega_c,
\end{cases}
\end{equation}
where $k=-1,0,$ and 
\begin{equation}
\label{eq:beta123}
\begin{aligned}
\beta_1 &=  (\arctan \theta_\alpha - \arccos  (\theta_\gamma), \arctan \theta_\alpha - \arccos (\theta_\eta)),\\
\beta_2 &=  (\arctan \theta_\alpha - \arccos (-\theta_\eta), \arctan \theta_\alpha - \arccos (-\theta_\gamma)),\\
\beta_3 &=  \beta_1+\pi,\\
\beta_4 &=  \beta_2+\pi,\\
% \beta_5  &=  (\arctan \theta_\alpha + \arccos (-\theta_\eta), \arctan \theta_\alpha + \arccos \theta_\eta),\\
% \beta_6  &=  \beta_5-\pi, \\
\theta_\alpha &= \frac{\omega_\alpha}{\omega}, \\
\theta_\gamma &= \frac{1 - \sigma}{\sqrt{1 + \theta_\alpha^2}} ,\ \theta_\eta =  \frac{1 + \sigma}{\sqrt{1 + \theta_\alpha^2}},\ \sigma\in(0,1)\subset\mathbb{R}.
% \theta_\alpha &= \frac{\omega_\alpha}{\omega},\
 %  \beta_1& =   \bigg(-\arctan{\bigg(\frac{\sigma\omega}{\omega_{\alpha}}\bigg)}, \arctan{\bigg(\frac{\sigma\omega}{\omega_{\alpha}}\bigg)}\bigg),\\
 % \beta_2& =\bigg(-\arctan{\bigg(\frac{(2+\sigma)\omega}{\omega_{\alpha}}\bigg)}, -\arctan{\bigg(\frac{(2-\sigma)\omega}{\omega_{\alpha}}\bigg)}\bigg),\\ 
 %  \beta_3& = \bigg(-\arctan{\bigg(\frac{(2+\sigma)\omega}{\omega_{\alpha}}\bigg)}, \arctan{\bigg(\frac{\sigma\omega}{\omega_{\alpha}}\bigg)}\bigg).
\end{aligned}
\end{equation}
Note that the value of \(\arccos(x)\) is defined within the interval \([0, \pi]\). 
{The ranges of \( \beta_1 \), \( \beta_2 \), \( \beta_3 \), and \( \beta_4 \) are visualized in Fig. \ref{fig: Desired_Cs}.} 
% and
% \begin{equation}
% \label{eq:thetas}
%     \begin{aligned}
% \theta_\alpha &= \frac{\omega_\alpha}{\omega},\ \theta_\gamma = \frac{1 - \sigma}{\sqrt{1 + \theta_\alpha^2}} ,\ \theta_\eta =  \frac{1 + \sigma}{\sqrt{1 + \theta_\alpha^2}}.
%     \end{aligned}
% \end{equation}
\end{thm}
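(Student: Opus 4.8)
The overall plan is to read Theorem \ref{thm: fore_Cs_design} as the specialisation of Lemmas \ref{lem: phase of Cs to angle C1} and \ref{rem: alpha_requirement} to the FORE, i.e. to the generalized FORE \eqref{eq: fore_general} with \(\omega_\alpha=\omega_\beta>0\), and to handle the bandwidth frequency and the remaining frequencies separately. At \(\omega=\omega_c\) there is nothing to do: the requirement that the shaping filter deliver a phase lead is precisely the \(\omega_\alpha>0\) branch of \eqref{eq: CS_Phase_lead_2cond} in Lemma \ref{lem: phase of Cs to angle C1}, which is exactly the first line of \eqref{eq: cs_ineq_fore}. Hence all the work is in converting the gain-variation bound \(\kappa_\alpha(\omega)\in(1-\sigma,1+\sigma)\) of Lemma \ref{rem: alpha_requirement}, which holds for \(\omega\neq\omega_c\), into an explicit admissible set for \(\angle\mathcal{C}_s(\omega)\), and then intersecting the two clauses.

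For the conversion I would first put \(\kappa_\alpha(\omega)\) from \eqref{eq: kappa_alpha} into amplitude–phase form. Writing \(\theta_\alpha=\omega_\alpha/\omega\) and \(\phi=\angle\mathcal{C}_s(\omega)\), the harmonic-addition identity gives \(\cos\phi+\theta_\alpha\sin\phi=\sqrt{1+\theta_\alpha^2}\,\cos(\phi-\arctan\theta_\alpha)\), so that \(\kappa_\alpha(\omega)=\sqrt{1+\theta_\alpha^2}\,\bigl|\cos(\phi-\arctan\theta_\alpha)\bigr|\), and the constraint \(\kappa_\alpha(\omega)\in(1-\sigma,1+\sigma)\) becomes \(\bigl|\cos(\phi-\arctan\theta_\alpha)\bigr|\in(\theta_\gamma,\theta_\eta)\) with \(\theta_\gamma,\theta_\eta\) as in \eqref{eq:beta123}. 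Setting \(\psi=\phi-\arctan\theta_\alpha\), the set \(\{\psi:|\cos\psi|\in(\theta_\gamma,\theta_\eta)\}\) splits into \(\cos\psi\in(\theta_\gamma,\theta_\eta)\) and \(\cos\psi\in(-\theta_\eta,-\theta_\gamma)\); using that \(\arccos(\cdot)\) takes values in \([0,\pi]\) and is strictly decreasing, together with the evenness and \(\pi\)-periodicity of \(|\cos\psi|\), each sub-case contributes two \(\psi\)-intervals over a window of length \(2\pi\). Shifting back by \(\arctan\theta_\alpha\) produces \(\beta_1,\beta_2\) from the two branches lying in \(\psi\le 0\), and \(\beta_3=\beta_1+\pi\), \(\beta_4=\beta_2+\pi\) from their \(\pi\)-translates; the fact that a \(\pi\)-shift of \(\angle\mathcal{C}_s(\omega)\) leaves the HOSIDF of the generalized FORE unchanged (Remark \ref{rem: Cs_Cn}) is what makes this four-interval description the natural one. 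Combining the \(\omega=\omega_c\) clause from Lemma \ref{lem: phase of Cs to angle C1} with \(\angle\mathcal{C}_s(\omega)\in\beta_1\cup\beta_2\cup\beta_3\cup\beta_4\) for \(\omega\neq\omega_c\) then gives \eqref{eq: cs_ineq_fore}. Structurally this is the \(\omega_\alpha>0\) analogue of the proof of Theorem \ref{thm: ci_Cs_design} in \ref{proof for theorem ci}, the only new ingredient being the shift-and-scale \(\cos\phi+\theta_\alpha\sin\phi=\sqrt{1+\theta_\alpha^2}\cos(\phi-\arctan\theta_\alpha)\) forced by the \(\omega_\alpha/\omega\) term.

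The main obstacle is the bookkeeping in the last step rather than any conceptual difficulty: one must keep the interval endpoints correctly ordered (since \(\arccos\) is decreasing, \(\arctan\theta_\alpha-\arccos\theta_\gamma<\arctan\theta_\alpha-\arccos\theta_\eta\), and similarly for the \(-\theta_\gamma,-\theta_\eta\) pair), wrap any interval pushed outside \((-\pi,\pi]\) by the \(+\pi\) translation back into that window, and treat the degenerate regime \(\theta_\eta\ge 1\) — which occurs when \(\omega\) is small enough relative to \(\omega_\alpha\) that \(\sqrt{1+\theta_\alpha^2}\ge 1+\sigma\) fails — in which the upper-bound constraint \(|\cos\psi|<\theta_\eta\) is vacuous, \(\arccos(\pm\theta_\eta)\) is replaced by its limiting value, and \(\beta_1,\beta_2\) merge into a single interval centred at \(\arctan\theta_\alpha\) (likewise \(\beta_3,\beta_4\)). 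Apart from this case distinction, the argument is routine.
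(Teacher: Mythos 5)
Your proof follows essentially the same route as the paper's: the \(\omega=\omega_c\) clause is taken directly from Lemma \ref{lem: phase of Cs to angle C1}, and the gain constraint of Lemma \ref{rem: alpha_requirement} is converted via the harmonic-addition identity \(\cos\phi+\theta_\alpha\sin\phi=\sqrt{1+\theta_\alpha^2}\,\cos(\phi-\arctan\theta_\alpha)\) into the two cosine bands whose solution (using that \(\arccos\) is decreasing on \([0,\pi]\) together with the \(\pi\)-periodicity from Remark \ref{rem: Cs_Cn}) yields \(\beta_1\)--\(\beta_4\), exactly as in \ref{proof for theorem fore}. The only quibble is in your edge-case aside: since \(\theta_\alpha=\omega_\alpha/\omega\), the degenerate regime \(\theta_\eta\ge 1\) occurs when \(\omega\) is \emph{large} relative to \(\omega_\alpha\), not small --- though the paper itself does not treat this regime at all, so your noticing it is a point in your favor.
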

% \begin{lem}
% \label{thm: fore_Cs_design}
% The shaping filter $\mathcal{C}_s$ needs to satisfy the following criteria to provide phase lead without compromising the gain benefit of reset control with $\angle \mathcal{C}_s=0$: \\
% (\rom{1}) At $\omega=\omega_c$, ensure $\angle \mathcal{C}_s(\omega_c) = \angle \alpha(\omega_c) > 0$. \\
% (\rom{2}) For low frequencies, where $\omega<\omega_c$, it is required that
% \begin{equation}
% \label{eq: cs_range_low}
% \begin{aligned}
% &\angle \mathcal{C}_s(\omega)\in\beta_1\cup\beta_2,
% \end{aligned}
% \end{equation}
% such that $\kappa_\alpha(\omega)\in(1-\sigma,\ 1+\sigma)$,
% where \(\beta_1\) and \(\beta_2\) are two angle interval, given by
% \begin{equation}
% \begin{aligned}
%   \beta_1& =   \bigg(-\arctan{\bigg(\frac{\sigma\omega}{\omega_{\alpha}}\bigg)}, \arctan{\bigg(\frac{\sigma\omega}{\omega_{\alpha}}\bigg)}\bigg),\\
%  \beta_2& =\bigg(-\arctan{\bigg(\frac{(2+\sigma)\omega}{\omega_{\alpha}}\bigg)}, -\arctan{\bigg(\frac{(2-\sigma)\omega}{\omega_{\alpha}}\bigg)}\bigg).
% \end{aligned}
% \end{equation}
% \\
% (\rom{3}) For high frequencies, where $\omega>\omega_c$, it is required that
% \begin{equation}
% \label{eq: cs_range_high}
% \angle \mathcal{C}_s(\omega)\in \beta_3,
% \end{equation}
% such that $\kappa_\alpha(\omega)\in(0,\ 1+\sigma)$,
% where $\beta_3$ is an angle interval, given by
% \begin{equation}
% \begin{aligned}
%   \beta_3& = \bigg(-\arctan{\bigg(\frac{(2+\sigma)\omega}{\omega_{\alpha}}\bigg)}, \arctan{\bigg(\frac{\sigma\omega}{\omega_{\alpha}}\bigg)}\bigg).
% \end{aligned}
% \end{equation}
% \end{lem}
\begin{proof}
The proof is provided in \ref{proof for theorem fore}.
\end{proof}

Similar to Fig. \ref{fig: Desired_Cs_ci_final}, a desired curve for \(\angle \mathcal{C}_s(\omega)\) is plotted within the bounds of \(\beta_1 \cup \beta_4\) for \(\omega \neq \omega_c\), while \(\angle \mathcal{C}_s(\omega_c)\) is constrained by the condition outlined in Theorem \ref{thm: fore_Cs_design}.
\begin{figure*}[htp]
	%	\missingfigure{RCsystem}
	\centerline{\includegraphics[width=1.5\columnwidth]{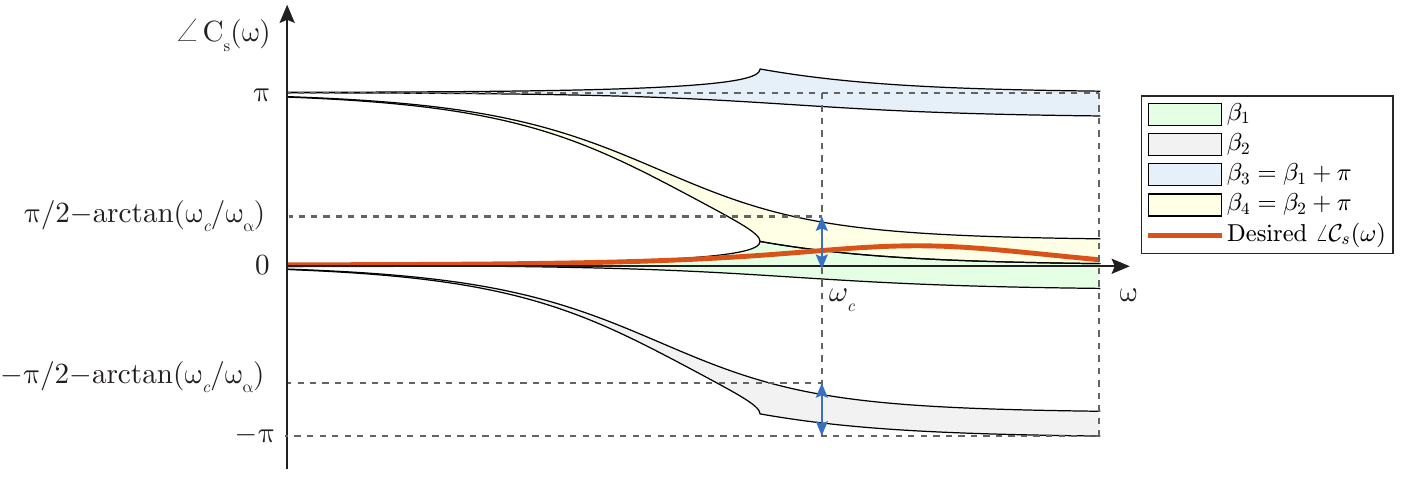}}
	\caption{The four bounds, \(\beta_1\)(\colorbox{green!15}{ }), \(\beta_2\)(\colorbox{gray!30}{ }), \(\beta_3\)(\colorbox{RoyalBlue!20}{ }), and \(\beta_4\)(\colorbox{yellow!25}{ }), for \(\angle \mathcal{C}_s(\omega)\) are depicted as shaded regions. The constraint on \(\angle \mathcal{C}_s(\omega)\) at the bandwidth frequency \(\omega_c\) is highlighted with blue double arrows (\textbf{${\leftrightarrow}$}). The desired curve of \(\angle \mathcal{C}_s(\omega)\) for the FORE is shown in red, adhering to the constraints.}
	\label{fig: Desired_Cs}
\end{figure*}

\subsection{Frequency-Domain Design of Shaped Generalized FROE to Enhance System Performance}
While various shaping filters \(\mathcal{C}_s\) satisfying the constraints in Theorems \ref{thm: ci_Cs_design} and \ref{thm: fore_Cs_design} can be selected to achieve phase lead while maintaining similar gain properties, this study adopts a derivative element:  
\begin{equation}
\mathcal{C}_s(s) = \frac{s/\omega_\zeta+1}{s/\omega_\eta+1}, \quad \text{where } \omega_\zeta,\omega_\eta \in \mathbb{R}^+,
\end{equation}  
which aligns with the desired phase curve shapes of \(\angle \mathcal{C}_s(\omega)\) illustrated in both Fig. \ref{fig: Desired_Cs_ci_final} for the generalized FORE with \(\omega_\alpha = 0\) and Fig. \ref{fig: Desired_Cs} for the generalized FORE with \(\omega_\alpha > 0\), respectively.

However, implementing a single derivative element between the error signal \(e(t)\) and the reset-triggered signal \(e_s(t)\) can amplify high-frequency harmonics for frequencies \(\omega > \omega_\eta\) in \(e_s(t)\). In practical scenarios, especially when high-frequency noise from sensors or external interference is present, this amplification can increase the system's sensitivity to such noise, potentially compromising its steady-state performance. 

 % the reference input magnitude can be increased in practical measurements, thereby reducing the relative impact of noise compared to the reference signal. Nevertheless, when the reference input magnitude approaches its limit due to actuation constraints,
To mitigate this issue, a low-pass filter \(\frac{1}{s/\omega_\psi + 1}\) is needed to filter out high-frequency harmonics in the reset-triggered signal \(e_s(t)\). {The design of \(\omega_\psi\) ensures that \(|\mathcal{C}_s(\omega)| < \delta_n\) for \(\omega > \omega_c\), where \(\omega_c\) is the bandwidth frequency and \(\delta_n \in (1,2) \subset \mathbb{R}\). The value of \(\delta_n\) is selected based on the noise level of the setup. In this manuscript, we set \(\delta_n = 1.5\) accordingly.} Therefore, the transfer function of the shaping filter \(\mathcal{C}_s(s)\) is designed as: 
\begin{equation}
\label{eq: cs_def}
\mathcal{C}_s(s) = \frac{s/\omega_\zeta+1}{s/\omega_\eta+1}\cdot\frac{1}{s/\omega_\psi+1},
\end{equation}
where 
$\omega_\zeta, \omega_\eta \in \mathbb{R}^+, \text{ and } \omega_\psi\in \mathbb{R}^+ >\omega_\eta$.

Note that while using a second-order or higher-order phase-lead element as the shaping filter can also provide phase lead, but it may exacerbate the issue of high-frequency noise amplification in the reset-triggered signal \( e_s(t) \), making the system less robust to practical noise. The feasibility of using a higher-order lead element is outside the scope of this study and requires further investigation.

The reset control system with a shaping filter, defined in \eqref{eq: cs_def} and satisfying the conditions specified in Theorems \ref{thm: ci_Cs_design} and \ref{thm: fore_Cs_design}, is referred to as the shaped reset control system in this study. The phase lead at the bandwidth frequency \(\omega_c\), provided by the shaping filter $\mathcal{C}_s$, is calculated as described in Remark \ref{eq:phi_lead_cal}.
\begin{rem}
\label{eq:phi_lead_cal}
The phase lead of the shaped generalized FORE with the shaping filter \( \mathcal{C}_s(s) \neq 1 \) compared to the generalized FORE where \( \mathcal{C}_s(s) = 1 \) is given by:
\begin{equation}
\label{eq:phi_lead}
 \phi_{\text{lead}} = \angle \mathcal{C}_1(\omega_c) - \angle \mathcal{C}_1^0(\omega_c),   
\end{equation}
where \(\angle \mathcal{C}_1(\omega_c)\) represents the phase of the shaped generalized FORE, which can be calculated using \eqref{eq: angle C1}, and \(\angle \mathcal{C}_1^0(\omega_c)\) represents the phase of the generalized FORE with \(\mathcal{C}_s = 1\), as given by:
\begin{equation}
\label{eq: angle C1_0}
\begin{aligned}
 &\angle \mathcal{C}_1^0(\omega_c)= \\
&\begin{cases}
\arctan \left( \frac{-\pi(1+\gamma)}{4(1-\gamma)}\right), &\text{ for } \omega_\alpha = 0,\\
\arctan \left( 2\omega_c\cdot\kappa_\zeta(\omega_c)\right) -\arctan\left(\frac{\omega_c}{\omega_\alpha}\right), &\text{ for } \omega_\alpha > 0,
\end{cases}
\end{aligned}
\end{equation}  
with $\kappa_\zeta(\omega_c)$ given in \eqref{eq: angle phi_lambda, phi_alpha}.
\end{rem}
MATLAB code for calculating the phase lead \(\phi_{\text{lead}}\) in \eqref{eq:phi_lead} is available at this \href{https://github.com/XZ-TUD/Code_Phase_Lead.git}{link} to facilitate ease of use for readers. Next, Remark \ref{rem:max_phi} presents the maximum phase lead that can be achieved by the shaping filter under the constraints specified in Theorems \ref{thm: ci_Cs_design} and \ref{thm: fore_Cs_design}.
\begin{rem}
\label{rem:max_phi}
From Lemma \ref{lem: phase of Cs to angle C1}, the maximum phase of shaping filter \(\angle \mathcal{C}_s(\omega_c) \in(-\pi,\pi]\) is given by
\begin{equation}
\label{eq:max_cs_angle_bw}
\max \angle \mathcal{C}_s(\omega_c) = 
\begin{cases}
  \frac{\pi}{2}-\arctan\bigg(\frac{\pi(1+\gamma)}{4(1-\gamma)}\bigg), & \text{for } \omega_\alpha = 0,\\
  \frac{\pi}{2}-\arctan\bigg(\frac{\omega_c}{\omega_\alpha}\bigg), & \text{for } \omega_\alpha > 0.  
\end{cases}
\end{equation}
By substituting \(\max \angle \mathcal{C}_s(\omega_c)\) from \eqref{eq:max_cs_angle_bw} into \eqref{eq:phi_lead} and \eqref{eq: angle C1_0}, the maximum phase lead, denoted as \(\max \phi_{\text{lead}}\), of the shaped generalized FORE (where \(\mathcal{C}_s \neq 1\)) compared to the generalized FORE without the shaping filter (where \(\mathcal{C}_s = 1\)) can be determined.
\end{rem}

Finally, summarizing the constraints in Theorems \ref{thm: ci_Cs_design} and \ref{thm: fore_Cs_design}, along with conclusions in Remarks \ref{eq:phi_lead_cal} and \ref{rem:max_phi}, the design procedure for the shaping filter \(\angle \mathcal{C}_s(s)\) in the shaped generalized FORE-based reset control system, aimed at achieving a phase lead \(\phi_d \in(0, \max \phi_{\text{lead}}]\) compared to the generalized FORE-based reset control system with \(\mathcal{C}_s=1\), is outlined as follows:
\begin{enumerate}[label = (\roman*)]
\item Design a generalized FORE-based reset control system without the shaping filter (i.e., $\mathcal{C}_s=1$) and set the bandwidth frequency \(\omega_c\).

\item Apply a shaping filter $\mathcal{C}_s$ as defined in \eqref{eq: cs_def}.

\item Choose \(\sigma \in (0, 1)\). Next, tune \(\omega_\zeta\), \(\omega_\eta\), and \(\omega_\psi\) in $\mathcal{C}_s(\omega)$ to satisfy the conditions specified in Theorem \ref{thm: ci_Cs_design} if \(\omega_\alpha = 0\), and in Theorem \ref{thm: fore_Cs_design} if \(\omega_\alpha > 0\).

\item Calculate the phase lead \(\phi_{\text{lead}}\) provided by the shaping filter using \eqref{eq:phi_lead}. If \(\phi_{\text{lead}} < \phi_d\), decrease \(\omega_\zeta\) or increase \(\omega_\eta\), and repeat from step (\rom{3}) until \(\phi_{\text{lead}} = \phi_d\).
\end{enumerate}

If the system requirements prioritize gain improvement over phase margin enhancement, the design procedure for shaping the filter \(\mathcal{C}_s(s)\) involves first following the above steps to achieve phase lead, and then transferring this phase lead benefit to gain improvement by relaxing the gain constraint in Lemma \ref{rem: alpha_requirement} for frequencies \(\omega \neq \omega_c\). The design procedure to obtain gain benefits while maintaining phase margin compared to a generalized FORE-based reset control system with $\mathcal{C}_s=1$ is outlined as follows:
\begin{enumerate} [label = (\roman*)]
\item Design a shaped generalized FORE-based reset control system to provide a phase lead \(\phi_{\text{lead}}\).
\item Gradually adjust parameters such as \(\omega_\alpha\) and \(\gamma\) to increase the first-order harmonic gain \(|\mathcal{C}_1(\omega)|\) at frequencies below \(\omega_c\) or reducing gain at higher frequencies. As gain benefits increase, the phase lead \(\phi_{\text{lead}}\) diminishes; tuning continues until \(\phi_{\text{lead}} = 0\), where the shaped generalized FORE maintains phase margin while maximizing gain benefits. 
\end{enumerate}

Note that for the generalized FORE with \(\omega_\alpha > 0\), both \(\omega_\alpha\) and \(\gamma\) offer flexibility in tuning; in contrast, systems with \(\omega_\alpha = 0\) rely solely on \(\gamma\). Therefore, the FORE-based control systems with \(\omega_\alpha > 0\) are preferable for providing enhanced gain benefits due to their greater tuning flexibility.

In Section \ref{sec: results}, two case studies are presented to demonstrate the design procedure of shaped generalized FORE control systems, aiming to achieve phase and gain benefits, respectively.
% Literature suggests that for a fixed bandwidth, increasing \(\omega_r\) and decreasing \(\gamma\) enhance \(|\mathcal{C}_n(\omega)|\) at frequencies below the bandwidth while reducing \(|\mathcal{C}_n(\omega)|\) at frequencies above it \cite{saikumar2019constant}.
\section{Illustrative Case Studies}
\label{sec: results}
In this section, the experimental setup-a precision positioning stage-is first introduced. Two case studies are then conducted on this stage to demonstrate the enhanced performance of the shaped generalized FORE-based reset control system:
\begin{itemize}
\itemsep0em 
    \item Case Study 1 uses a reset PID controller to showcase the phase lead advantages provided by the shaped reset control. 
    \item Case Study 2 employs a CgLp-PID control system to emphasize the gain benefits, particularly achieving enhanced low-frequency gain. 
\end{itemize}
Note that these cases may not represent the optimized designs; and the aim of these cases is to illustrate how the shaped reset control can offer improvements over previous reset control systems under a fair comparison framework. {Additionally, the performance of the shaped reset control is not confined to the specific stages used in the case studies or the results presented in this section. Depending on different and specific system requirements, shaped reset control can be designed using the methodologies outlined in Theorem \ref{thm: ci_Cs_design} and Theorem \ref{thm: fore_Cs_design}.} In both cases, the systems are tested to be stabile and convergent. 
\label{sec: results1}
\subsection{Precision Positioning Setup}
Figure \ref{fig: Spider_stage} illustrates the precision positioning setup utilized in this study. The system consists of a 3 Degree-Of-Freedom (DOF) stage mounted on a vibration isolation platform to minimize the impact of environmental disturbances. Control algorithms are implemented on an NI CompactRIO system equipped with FPGA modules, operating at a sampling frequency of 10 kHz. The voice coil actuation system is powered by a linear current source amplifier {(with a power supply limit of 10 V)}, while position feedback is acquired using a Mercury M2000 linear encoder (referred to as ``Enc") with a resolution of 100 nm.

The 3 DOF precision positioning stage consists of three masses, \(M_1\), \(M_2\), and \(M_3\), which are connected to the base mass \(M_c\) via dual leaf flexures. Each of these masses is associated with an actuator: \(A_1\), \(A_2\), and \(A_3\), respectively. In this study, the collocated system comprising actuator \(A_1\) and mass \(M_1\) is utilized for control implementation and performance evaluation. Figure \ref{fig1: Bode2 of Spyder} presents the measured Frequency Response Function (FRF) of the system. To facilitate feedback control design, the system's transfer function is approximated as an LTI model using Matlab's system identification toolbox, which simplifies the system to a single-eigenmode mass-spring-damper configuration:
\begin{equation}
\label{eq:P(s)}
\mathcal{P}(s) = \frac{6.615\times 10^5}{83.57s^2+279.4s+5.837\times 10^5}.
\end{equation}
\begin{figure}[htp]
	%	\missingfigure{RCsystem}
	\centerline{\includegraphics[width=0.9\columnwidth]{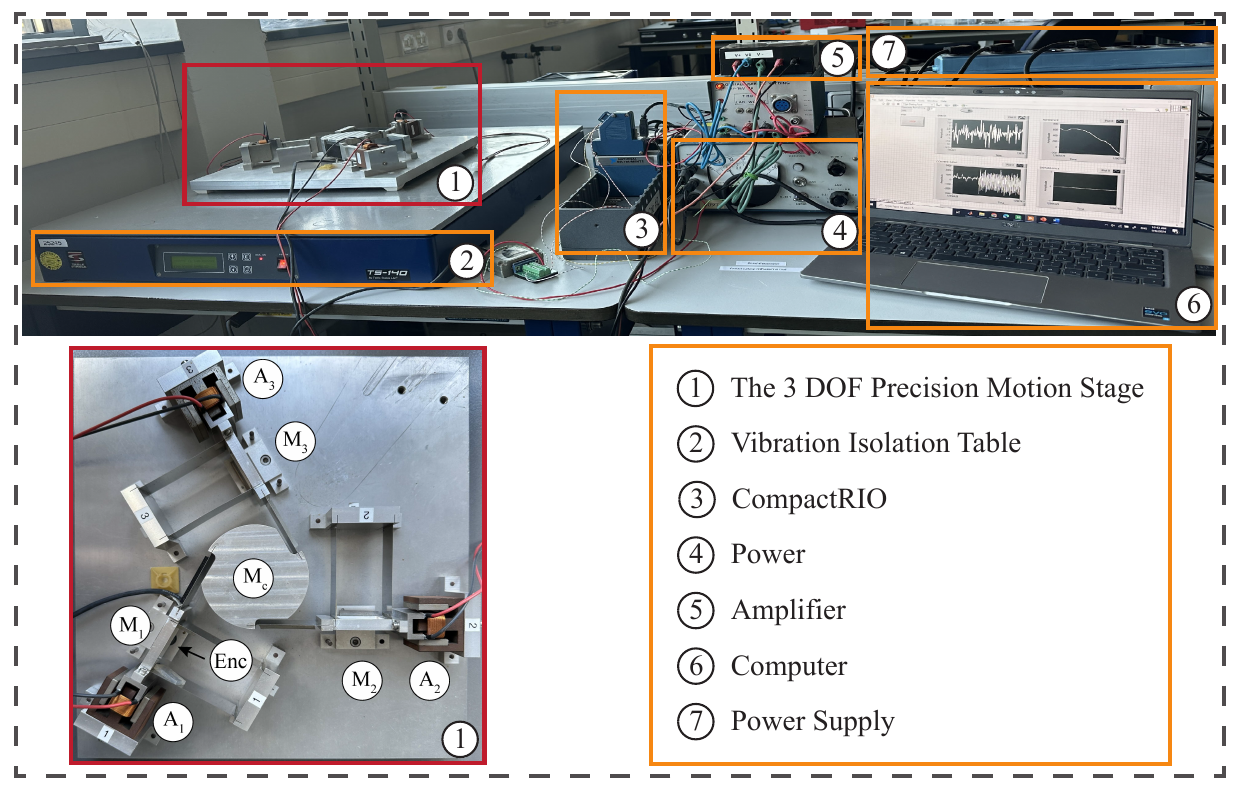}}
	\caption{Experimental precision positioning setup.}
	\label{fig: Spider_stage}
\end{figure}
\begin{figure}[htp]
	%	\missingfigure{RCsystem}
	\centerline{\includegraphics[width=0.9\columnwidth]{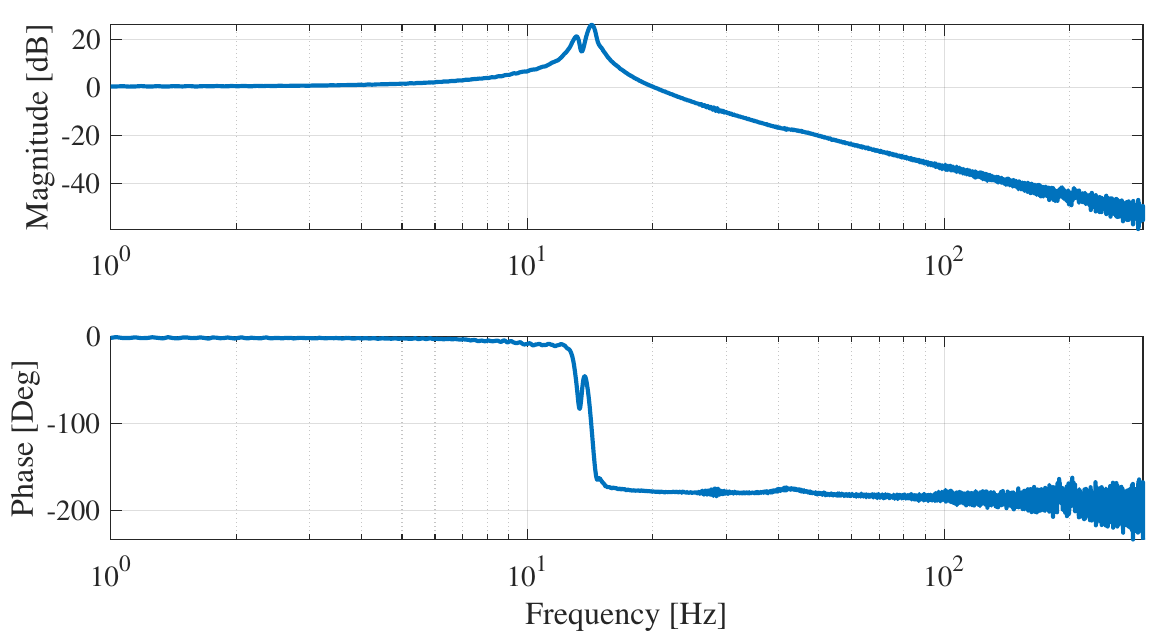}}
	\caption{Measured FRF data from actuator \(A_1\) to attached mass \(M_1\) of the precision positioning stage.}
	\label{fig1: Bode2 of Spyder}
\end{figure}
\subsection{Case Study 1: Phase Lead Benefit of Shaped Reset Control Resulting in Transient Performance Improvement}
\label{sec: case1}
In Case Study 1, a reset PID control system is designed to showcase the phase lead benefit of shaped reset control within the framework of the generalized FORE-based reset control when \(\omega_\alpha = 0\). This design is informed by Theorem \ref{thm: ci_Cs_design}. The following content illustrates the design and comparison process.
% In this work, the Proportional Clegg Integrator-PID (PCI-PID) system is used as an illustrative example for CI-based reset control. The Constant in Gain Lead in Phase (CgLp)-PID system is employed as the FORE-based reset control example, as will introduced below.

By replacing the Proportional Integrator (PI) with the Proportional Clegg Integrator (PCI) in the PID control system, a Proportional Clegg Integrator Derivative (PCID) system is built. {However, the closed-loop PCID system tends to exhibit a limit cycle behavior (\cite{hosseinnia2013fractional}). To mitigate this issue, one approach is to incorporate an additional integrator, resulting in the PCI-PID system, whose block diagram is shown in Fig. \ref{fig: PCI_PID_Structure}.}
\begin{figure}[htp]
	%	\missingfigure{RCsystem}
	\centerline{\includegraphics[width=0.9\columnwidth]{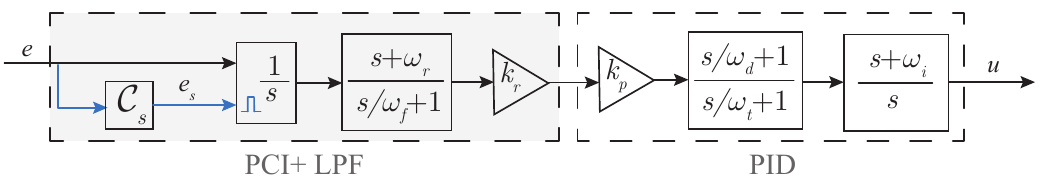}}
	\caption{Block diagram of the PCI-PID control system.}
	\label{fig: PCI_PID_Structure}
\end{figure}

{By designing the PCI reset elements shown within the gray block in Fig. \ref{fig: PCI_PID_Structure}, the PCI-PID system can leverage gain benefits while maintaining the same phase characteristics as its base linear system}, the PI\(^2\)D system, as given by:
\begin{equation}
    \text{PI$^2$D} = k_p \cdot \bigg(\frac{s + \omega_i}{s} \bigg)^2 \cdot \frac{s/\omega_d + 1}{s/\omega_t + 1} \cdot \frac{1}{s/\omega_f + 1}.
\end{equation}

The design parameters of the PCI-PID control system are: \(\omega_r = 1.6 \times 10^3\) [rad/s], \(k_r = 0.12\), \(k_p = 13.1\), \(\omega_f = 5.0 \times 10^3\) [rad/s], \(\omega_d = 213.6\) [rad/s], \(\omega_t = 1.2 \times 10^3\) [rad/s], \(\omega_i = 50.3\) [rad/s], and \(\gamma = -0.3\).

The{frequency response} plots of the first-order harmonics for the PCI-PID and PI\(^2\)D control systems, within the frequency range of \([1, 1000]\) Hz, are presented in Fig. \ref{fig: open_loop_bode_plot_shaped_PCI_PID_final}. Compared to the PI\(^2\)D controller, the PCI-PID controller maintains the same phase margin at the bandwidth frequency of 80 Hz but achieves a higher gain at frequencies lower than 80 Hz and a lower gain at frequencies higher than 80 Hz.
\begin{figure}[htp]
	%	\missingfigure{RCsystem}
	\centerline{\includegraphics[width=0.9\columnwidth]{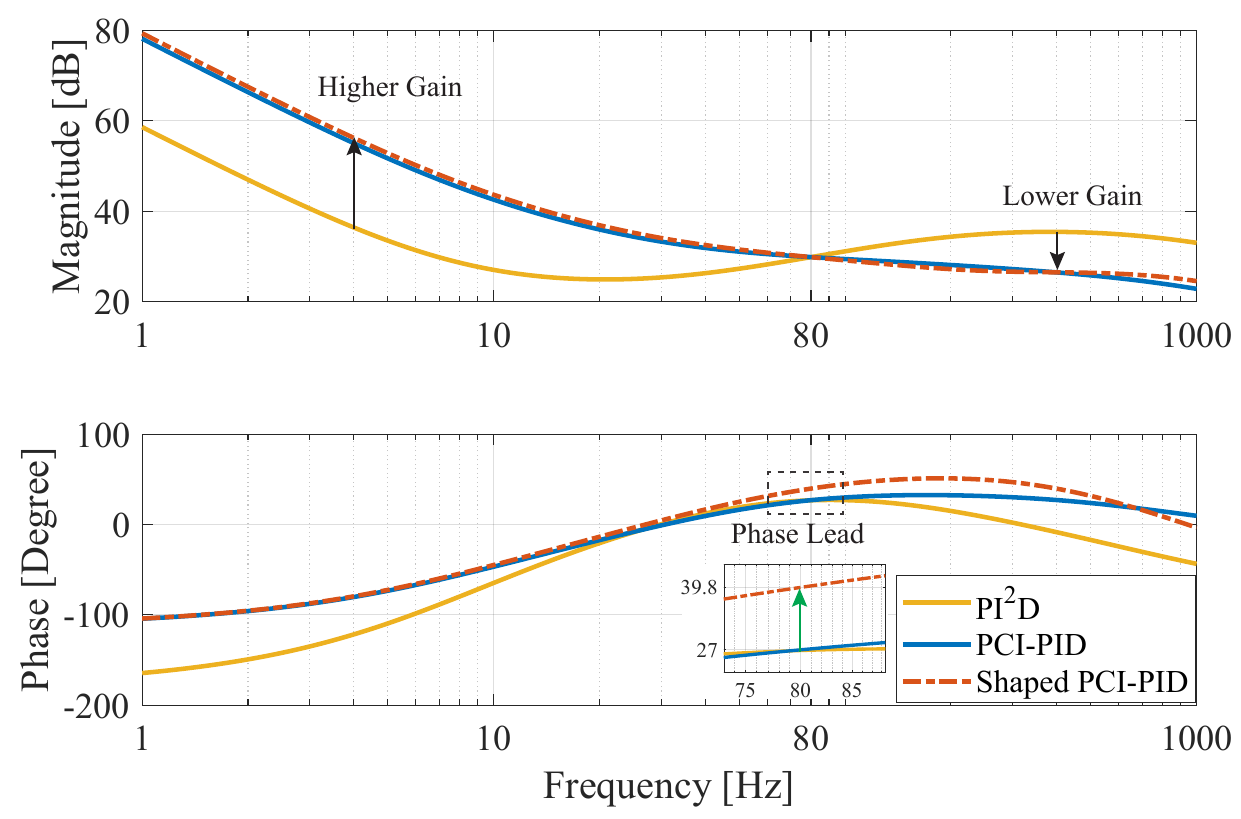}}
	\caption{Bode plots of the first-order transfer functions $\mathcal{L}_1(\omega)$ of open-loop linear PI$^2$D, PCI-PID, and shaped PCI-PID controllers. From here on, black arrows in this study indicate the improvement of reset control over linear control, while green arrows represent the enhancement of shaped reset control compared to reset control.}
	\label{fig: open_loop_bode_plot_shaped_PCI_PID_final}
\end{figure}

By designing a shaping filter for the PCI-PID control system, the objective is to achieve phase lead while controlling gain variations. Setting \(\sigma = 0.1\) limits the gain variation. According to Theorem \ref{thm: ci_Cs_design}, the phase bounds for \(\angle \mathcal{C}_s(\omega)\) are chosen as follows:
\begin{equation}
\label{eq: cs_ineq_ci_ex}
\begin{cases}
\angle \mathcal{C}_s(\omega_c) \in(0,67.08\degree),&\text{ for } \omega=\omega_c,\\
\angle \mathcal{C}_s(\omega)\in \eta_1 = (-25.84^\circ, 25.84^\circ), &\text{ for } \omega \neq\omega_c,
\end{cases}
\end{equation}
The constraint for \(\angle \mathcal{C}_s(\omega)\) where $\omega \neq\omega_c$ in \eqref{eq: cs_ineq_ci_ex} are depicted by the shaded green region in Fig. \ref{fig: cs_plot_ci}. To achieve the desired phase lead relative to the CI, a shaping filter \(\mathcal{C}_s(s)\) is implemented. The transfer function of \(\mathcal{C}_s(s)\) is designed as:
\begin{equation}
\label{eq: cs_design_ci}
    \mathcal{C}_s(s) = \frac{s/950+1}{s/3000 +1} \cdot \frac{1}{s/10^4+1}.
\end{equation}
{Noted that alternative designs of $\mathcal{C}_s(s)$ satisfying the conditions in \eqref{eq: cs_ineq_ci_ex} are feasible. The presented design serves as an example to demonstrate the effectiveness of the shaped reset control design.} 
\begin{figure}[htp]
	%	\missingfigure{RCsystem}
	\centerline{\includegraphics[width=0.9\columnwidth]{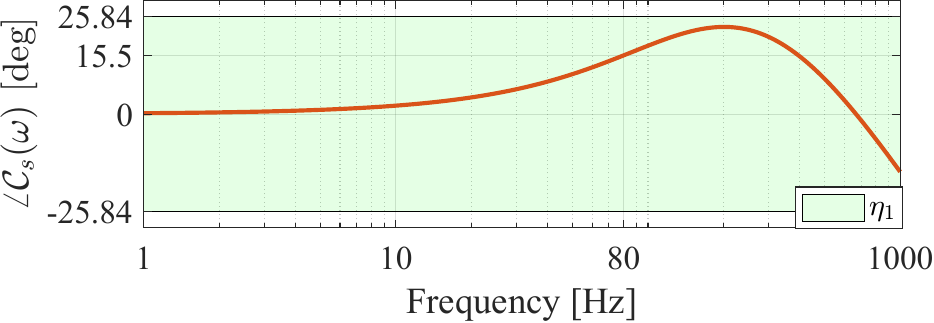}}
	\caption{Plot of \(\angle \mathcal{C}_s(\omega)\) and its bound for the shaped PCI-PID control system.}
	\label{fig: cs_plot_ci}
\end{figure}

As shown in Fig. \ref{fig: cs_plot_ci}, the shaping filter defined in \eqref{eq: cs_design_ci} introduces a phase of \(15.5^\circ\) at the bandwidth frequency of 80 Hz. Since the PCI-PID control system is built upon the CI, the phase lead introduced by the shaping filter is initially applied to the CI and subsequently influences the entire PCI-PID control system. The Bode plots of the CI and the shaped CI, both with \(\gamma = -0.3\), are presented in Fig. \ref{fig: shaped_ci_final}. The shaped CI maintains a gain profile similar to the CI while introducing a phase lead at frequencies below \(665\) Hz, as indicated by the green-shaded region. Specifically, at the bandwidth frequency of 80 Hz, the shaped CI achieves a phase margin of \( -10.1^\circ \), providing a \( 12.8^\circ \) phase lead compared to the \( -22.9^\circ \) phase margin of the CI.
\begin{figure}[htp]
	%	\missingfigure{RCsystem}
	\centerline{\includegraphics[width=0.9\columnwidth]{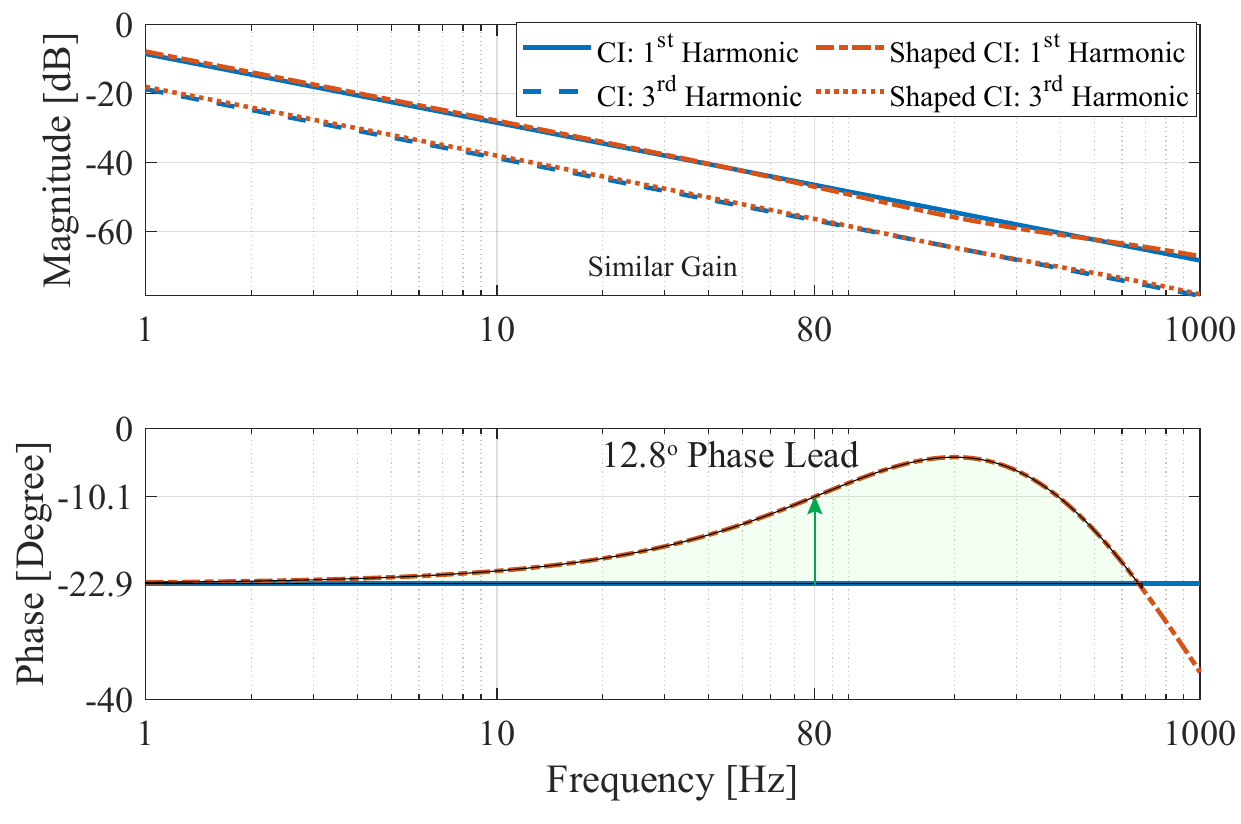}}
	\caption{Bode plots of the CI and the shaped CI with the shaping filter $\mathcal{C}_s$ in \eqref{eq: cs_design_ci}, where $\gamma=-0.3$.}
	\label{fig: shaped_ci_final}
\end{figure}

This designed shaped CI in Fig. \ref{fig: shaped_ci_final} is incorporated into the PCI-PID control system to form the shaped PCI-PID control structure in Fig. \ref{fig: PCI_PID_Structure}. In this configuration, the parameter \(k_r = 0.13\) is adjusted to ensure the same gain as the PCI-PID control system at the 80 Hz bandwidth frequency. As shown in Fig. \ref{fig: open_loop_bode_plot_shaped_PCI_PID_final}, the open-loop Bode plot of the shaped PCI-PID controller closely matches the gain profile of the PCI-PID system but provides a phase lead of \(12.8^\circ\).

% Table \ref{tb:gamma_phi_lead} summarizes the phase lead provided by the shaped CI for various \(\gamma\) values.  Note that the phase lead information in Table \ref{tb:gamma_phi_lead} are specific to the shaping filter \(\mathcal{C}_s\) designed in \eqref{eq: cs_design_ci}, and different filter designs will result in varying phase lead outcomes.
% \begin{table}[htp]
% \caption{Phase lead $\phi_{\text{lead}}(\degree)$ provided by the shaped CI with various \(\gamma\) values.}
% \label{tb:gamma_phi_lead}
% \centering
% %     \setlength{\tabcolsep}{5pt} % Adjust cell spacing
%     \renewcommand{\arraystretch}{1.5} % Adjust row height (optional)
% \fontsize{8pt}{8pt}\selectfont
% \resizebox{0.48\textwidth}{!}{
% \begin{tabular}{|c|c|c|c| c|c|c|c| c|}
% \hline
% $\gamma$ &  -0.5 &  -0.3 & -0.1 & 0 &  0.1 &  0.3 &  0.5\\ \hline
% $\angle$ CI$(\degree)$&-14.7&-22.9&-32.7&-38.1&-43.8&-55.6&-67 \\ %\hline
% $\angle$ Shaped CI$(\degree)$&-0.3&-10.1&-22.5&-29.6&-37.1&-52.3&-66.1 \\ %\hline
% $\phi_{\text{lead}}(\degree)$&14.4&12.8&10.2&8.5&6.7&3.3&0.9\\ %\hline
% \hline
% \end{tabular}
% }
% \end{table}

Figure \ref{fig: open_loop_bode_plot_shaped_PCI_PID_sys_final} displays the Bode plots for the PI\(^2\)D, PCI-PID, and shaped PCI-PID control systems, implemented on the stage shown in Fig. \ref{fig: Spider_stage}, including both the first- and third-order harmonics. All three systems share the same bandwidth frequency of 80 Hz. Compared to the PI\(^2\)D system, the PCI-PID system maintains the same phase margin of \(27.2^\circ\) but demonstrates higher gain at low frequencies and lower gain at high frequencies. The shaped PCI-PID system behaves even better. It retains similar gain characteristics as the PCI-PID system but achieves a phase margin of \(40^\circ\), with an increased phase margin of \(12.8^\circ\) in the time domain. This \(12.8^\circ\) phase lead is expected to improve the transient response of the system, a benefit that will be validated through experiments.
\begin{figure}[htp]
	%	\missingfigure{RCsystem}
	\centerline{\includegraphics[width=0.9\columnwidth]{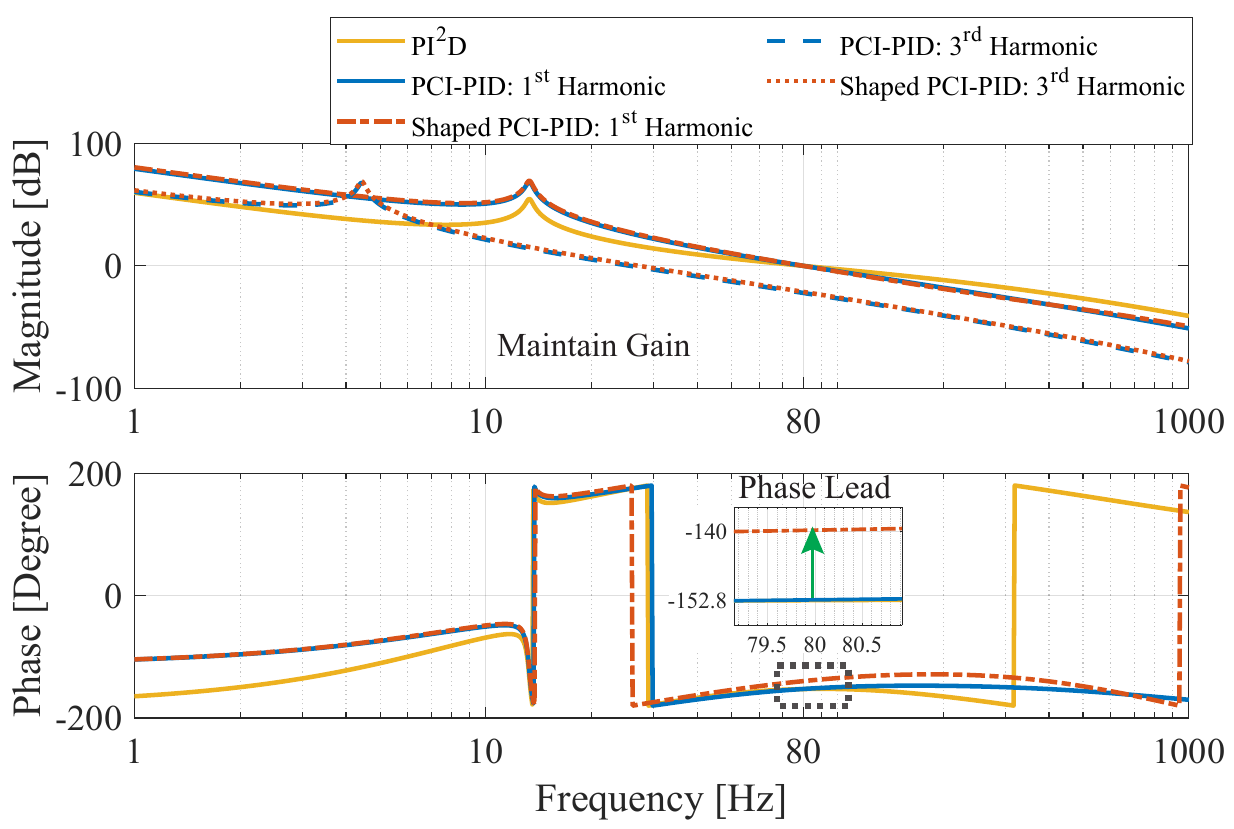}}
	\caption{Bode plots of PI$^2$D, PCI-PID, and shaped PCI-PID control systems. The third-order harmonics of PCI-PID and shaped PCI-PID control systems are shown in dashed lines.}
	\label{fig: open_loop_bode_plot_shaped_PCI_PID_sys_final}
\end{figure}

% The phase lead achieved by the shaping filter results in an enhanced transient response in the time domain. 
Figure \ref{fig: EXP_Step_responses_final} illustrates the experimentally measured step responses for the PI\(^2\)D, PCI-PID, and shaped PCI-PID control systems. The overshoot of the PI\(^2\)D and PCI-PID control systems are 64\% and 36\%, respectively, while the shaped PCI-PID achieve the zero overshoot performance. These results highlight the improved transient performance achieved with the shaped reset control, directly attributed to the enhancement in phase lead.
\begin{figure}[htp]
	%	\missingfigure{RCsystem}
	\centerline{\includegraphics[width=0.9\columnwidth]{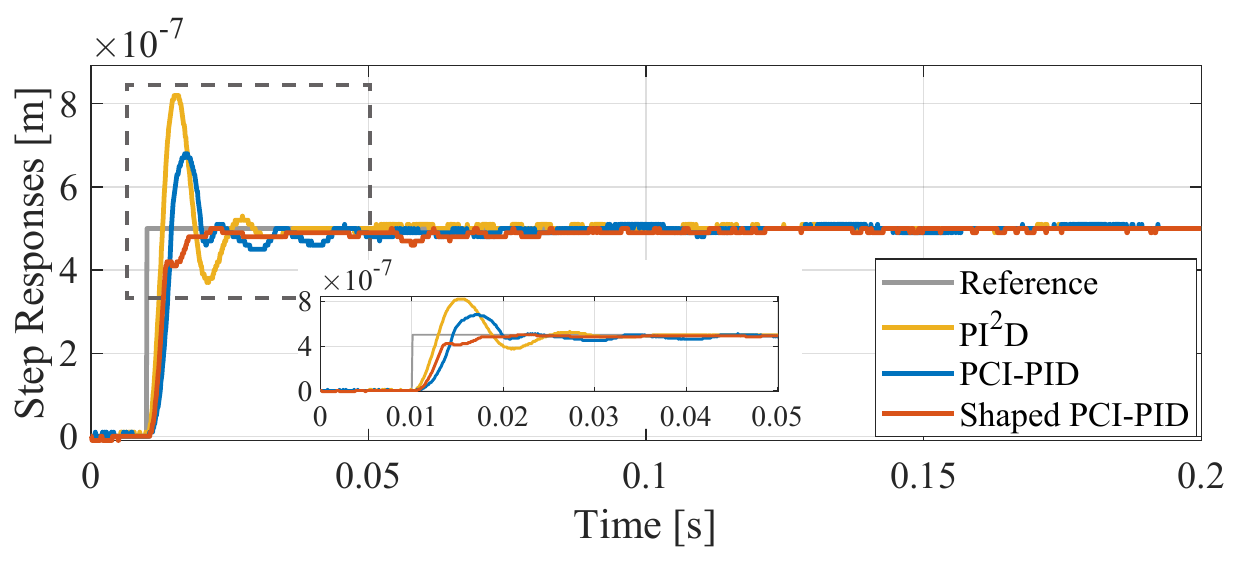}}
	\caption{Experimentally measured step responses of the PI\(^2\)D, PCI-PID, and shaped PCI-PID control systems.}
	\label{fig: EXP_Step_responses_final}
\end{figure}

\subsection{Case Study 2: Gain Benefit of Shaped Reset Control Leading to Steady-State Performance Improvement}
% Constant in gain Lead in phase (CgLp) \cite{saikumar2019constant}. These reset systems, maintain similar gain characteristics while achieving phase lead, or retain the same phase while amplifying gain, compared to their linear counterparts \cite{saikumar2019constant, akyuz2019reset}.
In Case Study 2, a reset CgLp-PID control system is designed to demonstrate the gain benefits of shaped reset control within the generalized FORE-based reset control when \(\omega_\alpha > 0\). The design follows Theorem \ref{thm: fore_Cs_design}.

The CgLp reset element consists of a FORE combined with a lead element, as shown in Fig. \ref{fig: CgLp_PID_structure}. {The transfer function of the PID controller is expressed as}
\begin{equation}
    \text{PID} = k_p \cdot \frac{s + \omega_i}{s} \cdot \frac{s/\omega_d + 1}{s/\omega_t + 1},
\end{equation}
{incorporating a Low-Pass Filter (LPF) given by}
\begin{equation}
\text{LPF} = \frac{1}{s/\omega_f + 1}.
\end{equation}
\begin{figure}[htp]
	%	\missingfigure{RCsystem}
	\centerline{\includegraphics[width=0.9\columnwidth]{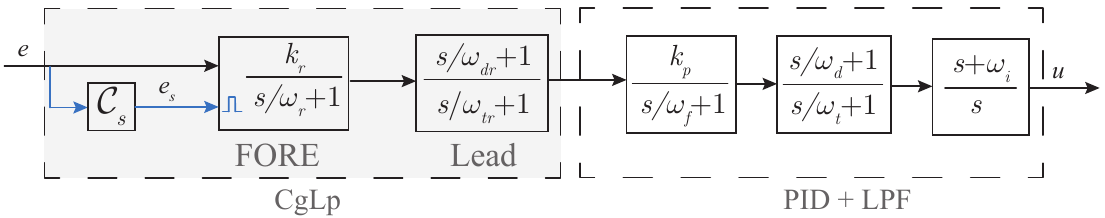}}
	\caption{Block diagram of the CgLp-PID control system.}
	\label{fig: CgLp_PID_structure}
\end{figure}

Compared to a linear PID controller, the CgLp-PID can maintain the same phase lead while benefiting from improved gain (\cite{saikumar2019constant}), as illustrated below. The design parameters for the CgLp-PID controller are: \(\omega_{r} = 160.2\) [rad/s], \(k_r = 1\), \(k_p = 6.5\), \(\omega_{dr} = 336.8\) [rad/s], \(\omega_{tr} = 3.14\times10^4\) [rad/s], \(\omega_{f} = 3.1\times10^3\) [rad/s], \(\omega_d = 143.9\) [rad/s], \(\omega_t = 685.6\) [rad/s], \(\omega_i = 31.4\) [rad/s], and \(\gamma = -0.3\). The design parameters for the PID controller are: \(k_p = 3.0\), \(\omega_d = 81.9\) [rad/s], \(\omega_t = 1.2\times10^3\) [rad/s], \(\omega_f = 3.1\times10^3\) [rad/s], and \(\omega_i = 31.4\) [rad/s]. 

Figure \ref{fig: open_loop_bode_plot_shaped_PCI_PID_final} shows the {frequency response} plots of the first-order harmonic for these systems within the frequency range of \([1, 1000]\) Hz. The CgLp-PID matches the PID in both gain and phase at the bandwidth frequency 50 Hz, while exhibiting higher gain at frequencies lower than 50 Hz and lower gain at frequencies higher than 50 Hz. The following content designs a shaped CgLp-PID controller that maintains the same phase and high-frequency gain properties as the CgLp-PID system while providing improved low-frequency gain and bandwidth benefits.

\begin{figure}[htp]
	\centerline{\includegraphics[width=0.9\columnwidth]{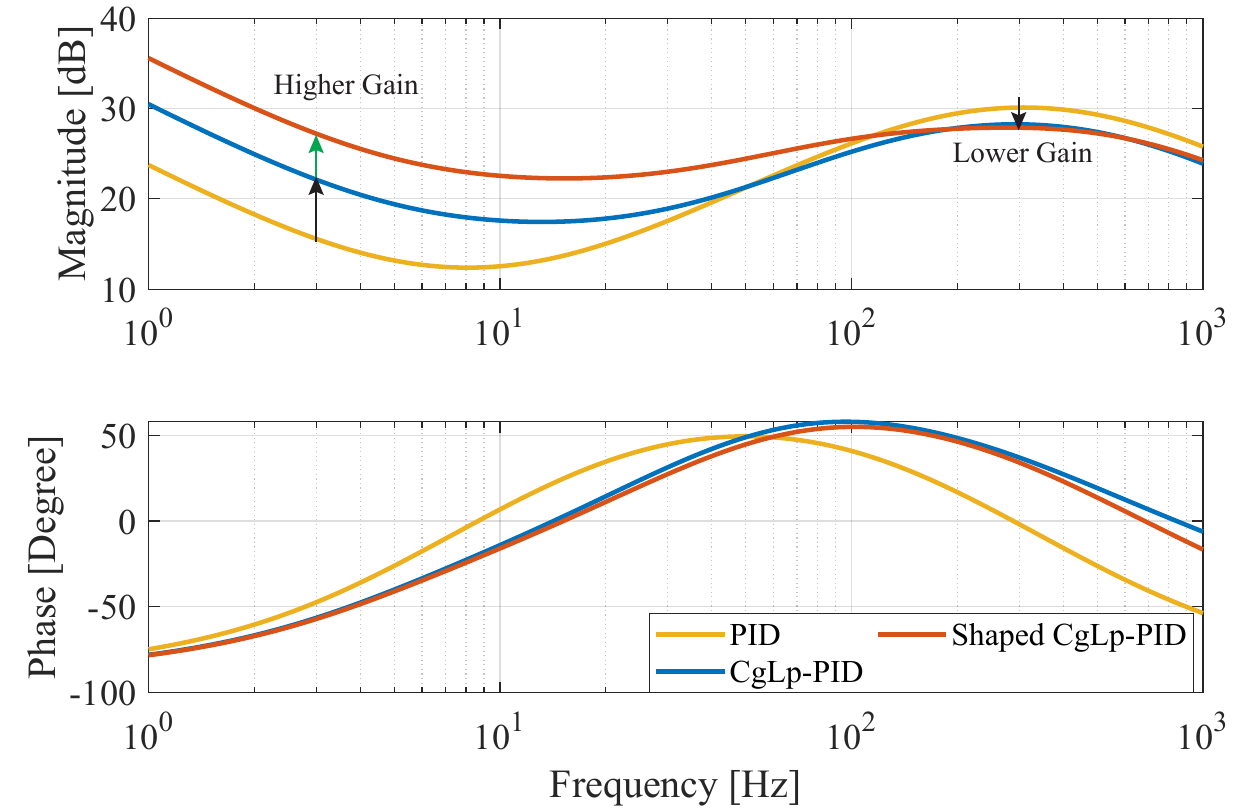}}
	\caption{Bode plots of the first-order transfer functions \(\mathcal{L}_1(\omega)\) for the open-loop linear PID, CgLp-PID, and shaped CgLp-PID controllers.}
	\label{fig: open_loop_plot_shaped_cglp_final}
\end{figure}

The CgLp-PID control system is built upon the FORE. To design a shaped FORE with phase lead, according to Theorem \ref{thm: fore_Cs_design}, by choosing $\sigma = 0.1$, the bound of $\angle \mathcal{C}_s(\omega)$ is chosen as
\begin{equation}
\label{eq: cs_ineq_fore_ex}
\begin{cases}
\angle \mathcal{C}_s(\omega_c) \in(0,27.02\degree),&\text{ for } \omega=\omega_c,\\
\angle \mathcal{C}_s(\omega)\in \beta_1\cup\beta_4, &\text{ for } \omega \neq\omega_c,
% ,\\
% \angle \mathcal{C}_s(\omega)\in (-\arctan(2.1\cdot\omega/\omega_r), \arctan(0.1\cdot\omega/\omega_r)), \\
% \indent\indent\indent\indent\indent \indent\indent\indent\indent\indent\indent\indent\indent\indent \indent\indent\ \ \text{ for } \omega>\omega_c.
\end{cases}
\end{equation}
where
\begin{equation}
\begin{aligned}
\beta_1 &=  (\arctan \theta_\alpha - \arccos  (\theta_\gamma), \arctan \theta_\alpha - \arccos (\theta_\eta)),\\
\beta_4 &=  (\arctan \theta_\alpha + \arccos (\theta_\eta), \arctan \theta_\alpha + \arccos (\theta_\gamma)),\\
\theta_\alpha &= \frac{\omega_r}{\omega},\ \theta_\gamma= \frac{0.9}{\sqrt{1+\theta_\alpha^2}},\ \theta_\eta =  \frac{1.1}{\sqrt{1 + \theta_\alpha^2}}.
\end{aligned}
\end{equation}

The bound specified in \eqref{eq: cs_ineq_fore_ex} for \(\omega \neq \omega_c\) is depicted in Fig. \ref{fig:cs_plot_fore}. A shaping filter \(\mathcal{C}_s(s)\) that adheres to this bound is designed as follows:
\begin{equation}
\label{eq: Cs_fore}
\mathcal{C}_s(s) = \frac{s/950+1}{s/2000 +1} \cdot \frac{1}{s/10^5+1}.
\end{equation}

As shown in Fig. \ref{fig:cs_plot_fore}, the \( \angle \mathcal{C}_s(\omega) \) is \( 10^\circ \) at the bandwidth frequency of 50 Hz. According to \eqref{eq:phi_lead}, the phase of \( \angle \mathcal{C}_s(\omega_c) = 9.2^\circ \) results in a \( \phi_{\text{lead}} = 5.9^\circ \) phase lead in the shaped FORE, compared to the FORE with \( \angle \mathcal{C}_s=1 \).
\begin{figure}[htp]
	%	\missingfigure{RCsystem}
	\centerline{\includegraphics[width=0.9\columnwidth]{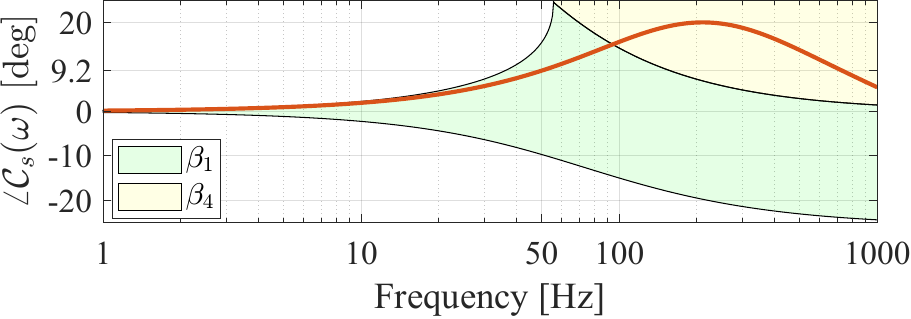}}
	\caption{Plot of $\angle \mathcal{C}_s(\omega)$ and its bounds for the shaped CgLp-PID control system.}
	\label{fig:cs_plot_fore}
\end{figure}

Then, to achieve the desired gain performance while retaining the phase margin, the parameters of the shaped CgLp controller are adjusted to \(\omega_{r}=145.6\) [rad/s], \(k_r = 1.8\), and \(\gamma = 0.08\). The Bode plots of the shaped CgLp-PID control system are presented in Fig. \ref{fig: open_loop_plot_shaped_cglp_final}. 

Then, applying the PID, CgLp-PID, and shaped CgLp-PID controllers to the plant in \eqref{eq:P(s)}, the resulting open-loop Bode plots are presented in Fig. \ref{fig: open_loop_plot_shaped_cglp_sys}. All three systems achieve an identical phase margin of \( 50^\circ \) and similar gain at frequencies higher than 50 Hz. However, the shaped CgLp-PID control system exhibits higher gain than the CgLp-PID at frequencies below 50 Hz. Additionally, the shaped CgLp-PID system achieves a wider bandwidth of 61.6 Hz, compared to 50 Hz for the CgLp-PID system. Although higher-order harmonics show a slight increase at frequencies below 50 Hz, their magnitudes remain negligible relative to the first-order harmonics. The higher gain at low frequencies is expected to enhance precision in that frequency range, which will be further validated through experimental results.
% For instance, in the shaped CgLp-PID system at 1 Hz, the magnitude of \(L_1\) is 51.2, whereas the magnitude of \(L_3\) is 0.09. 
\begin{figure}[htp]
	%	\missingfigure{RCsystem}
	\centerline{\includegraphics[width=0.9\columnwidth]{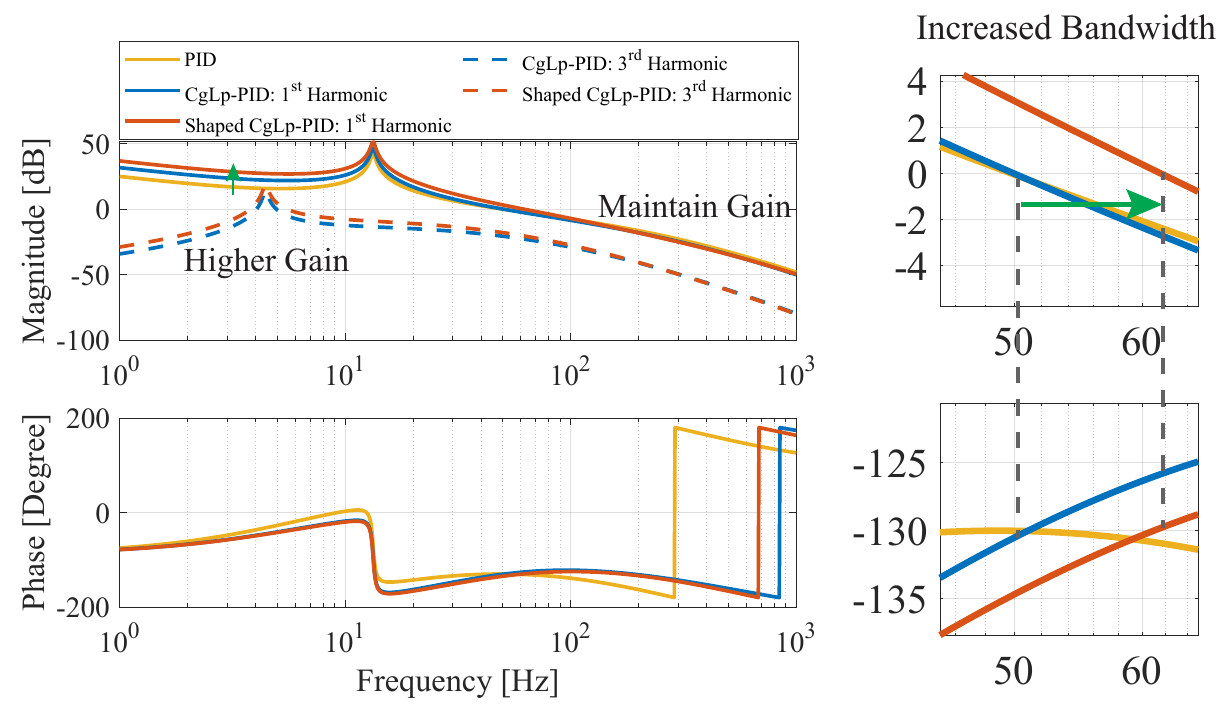}}
	\caption{Bode plots of PID, CgLp-PID, and shaped CgLp-PID control systems. The third-order harmonics of CgLp-PID and shaped CgLp-PID control systems are shown in dashed lines.}
	\label{fig: open_loop_plot_shaped_cglp_sys}
\end{figure}

\subsubsection{Steady-State Performance: Improved Tracking Precision}
As shown in Fig. \ref{fig: open_loop_plot_shaped_cglp_sys}, the shaped CgLp-PID system is designed to have higher gain at frequencies lower than 50 Hz while maintaining similar gain at frequencies higher than 50 Hz. Consequently, to compare the tracking precision of the PID, CgLp-PID, and shaped CgLp-PID control systems, the steady-state errors at input frequencies of 3 Hz, 5 Hz, 10 Hz, and 30 Hz are measured. Additionally, to validate the high-frequency performance is retained, the performance at a input frequency of 200 Hz is also tested.

Figure \ref{fig: exp_rdn_1Hz} presents the measured steady-state errors for the three control systems when tracking a reference signal \(r(t) = 1 \times 10^{-5} \sin(2\pi t)\) [m] at frequencies of 3 Hz, 5 Hz, 10 Hz, 30 Hz, and 200 Hz. The maximum errors \(||e||_\infty\) [m] for each system are summarized in Table \ref{tb: R1_D5_N5}. The results show that the shaped CgLp-PID system achieves a steady-state performance improvement of 41.3\%, 40.0\%, 30.6\%, 25.0\%, and 0 at frequencies of 3 Hz, 5 Hz, 10 Hz, 30 Hz, and 200 Hz, respectively, compared to the CgLp-PID system.
\begin{figure}[htp]
	%	\missingfigure{RCsystem}
	\centerline{\includegraphics[width=0.9\columnwidth]{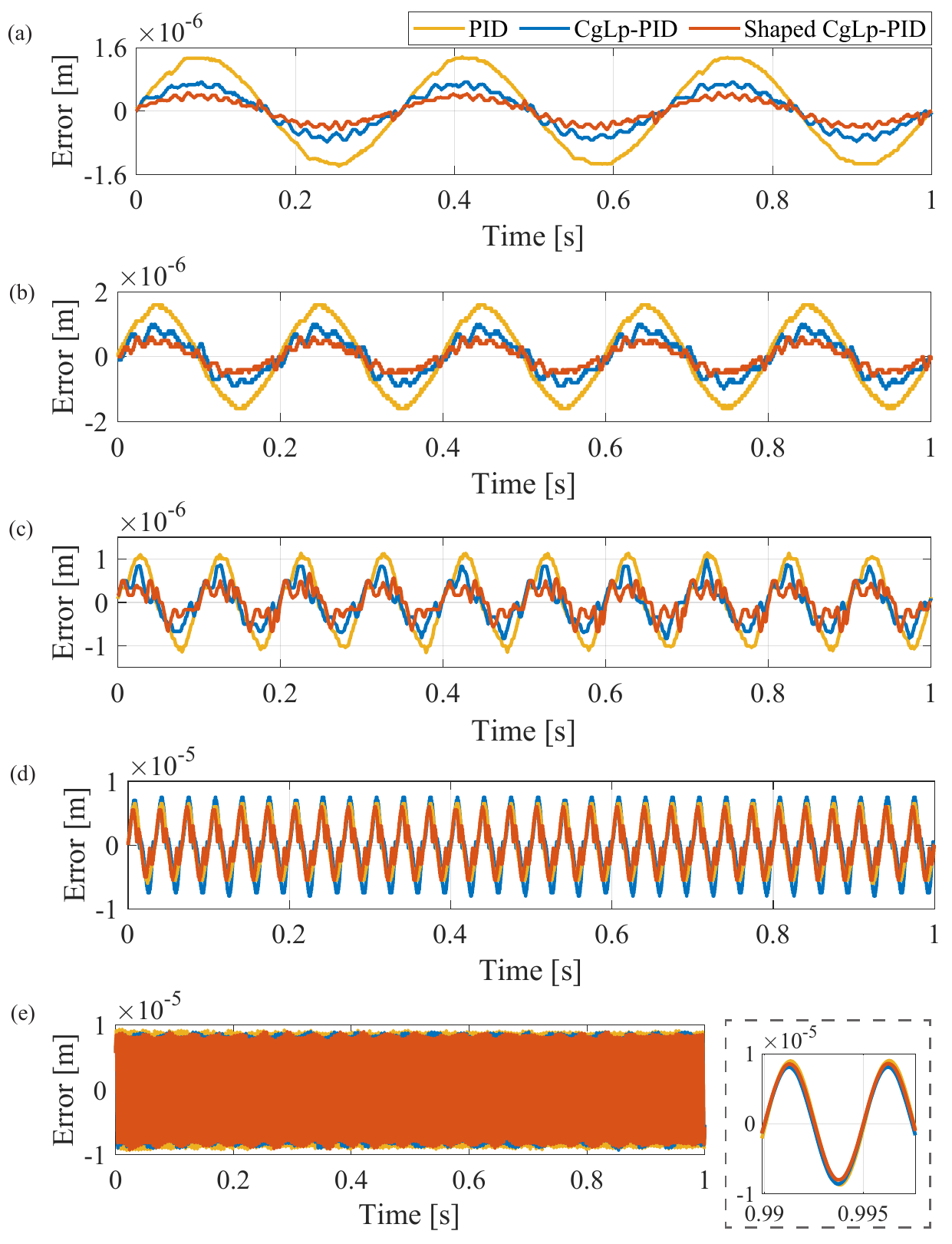}}
	\caption{Experimentally measured steady-state errors of PID, CgLp-PID, and shaped CgLp-PID control systems under reference signals \(r(t) = 1 \times 10^{-5} \sin(2\pi t)\) [m], where \(f =\) (a) 3 Hz, (b) 5 Hz, (c) 10 Hz, (d) 30 Hz, and (e) 200 Hz.}
	\label{fig: exp_rdn_1Hz}
\end{figure}
\begin{table}[htp]
\caption{Maximum steady-state errors \(||e||_\infty\) [m] for the CgLp-PID and shaped CgLp-PID control systems under reference signals \(r(t) = 1 \times 10^{-5} \sin(2\pi t)\) [m], where \(f =\) 3 Hz, 5 Hz, 10 Hz, 30 Hz, and 200 Hz. The precision improvement achieved by the shaped CgLp-PID compared to the CgLp-PID system are highlighted.}
\label{tb: R1_D5_N5}
\centering
    \renewcommand{\arraystretch}{1.5} % Adjust row height (optional)
\fontsize{8pt}{8pt}\selectfont
\resizebox{0.9\columnwidth}{!}{
\begin{tabular}{|c|c|c|c|c|c|}
\hline
\multirow{2}{*}{ Systems} & \multicolumn{5}{c|}{Frequency [Hz]}\\ \cline{2-6}
 &3 & 5 & 10 & 30 &200\\ \hline
PID & 1.4$\times 10^{-6}$ & 1.6$\times 10^{-6}$ & 1.2$\times 10^{-6}$& 6.5$\times 10^{-6}$ & 9.4$\times 10^{-6}$ \\ %\hline
CgLp-PID & 8.0$\times 10^{-7}$ & 1.0$\times 10^{-6}$ & 9.8$\times 10^{-7}$& 8.0$\times 10^{-6}$ & 9.3$\times 10^{-6}$\\ %\hline
Shaped CgLp-PID & 4.7$\times 10^{-7}$ & 6.0$\times 10^{-7}$ & 6.8$\times 10^{-7}$& 6.0$\times 10^{-6}$ & 9.3$\times 10^{-6}$\\ %\hline
Precision Improvement & 41.3\% & 40.0\% & 30.6\% & 25.0\% & 0\\ %\hline
\hline
\end{tabular}
}
\end{table}
\subsubsection{Steady-State Performance: Improved Tracking Precision and Disturbance Rejection}
To evaluate the disturbance rejection capability of the shaped CgLp-PID control system, a disturbance signal \(d_1(t) = 1 \times 10^{-8} [75.0\sin(10\pi t) + 7.5 \sin(20\pi t) + 1.5 \sin(40\pi t)]\) [m] is applied to the three control systems. The measured steady-state errors for the PID, CgLp-PID, and shaped CgLp-PID control systems are displayed in Fig. \ref{fig: cglp_d_5_10_15}. The maximum errors for each system are summarized in Table \ref{tb: R1_D5_N5}. The results show that the shaped CgLp-PID system achieves a precision improvement of 40.0\% compared to the CgLp-PID system.
\begin{figure}[htp]
	%	\missingfigure{RCsystem}
	\centerline{\includegraphics[width=0.9\columnwidth]{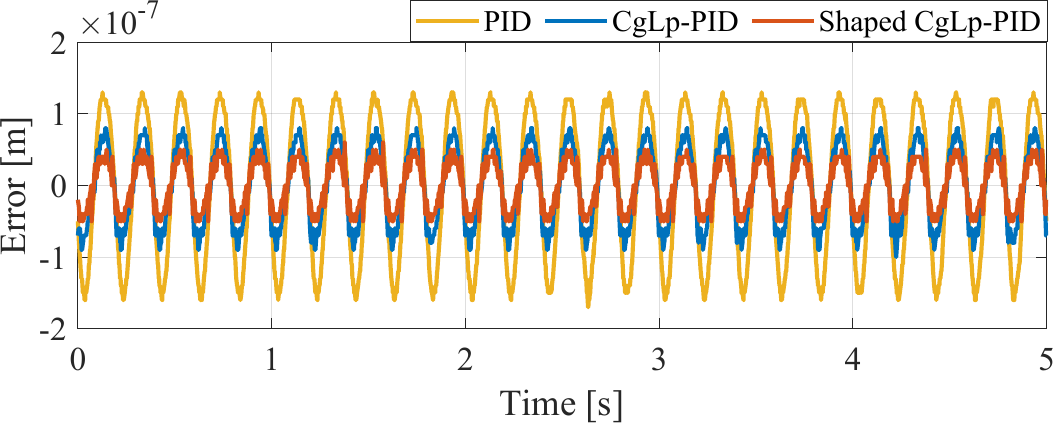}}
	\caption{Experimentally measured steady-state errors of PID, CgLp-PID, and shaped CgLp-PID control systems under a disturbance signal \(d_1(t)\).}
	\label{fig: cglp_d_5_10_15}
\end{figure}
\begin{table}[htp]
\caption{Maximum steady-state errors \(||e||_\infty\) [m] for the CgLp-PID and shaped CgLp-PID control systems under the disturbance signal \(d_1(t)\) and multiple inputs \(r_2(t) + d_2(t)\).}
\label{tb: R1_D5_N5}
\centering
    \renewcommand{\arraystretch}{1.5} % Adjust row height (optional)
\fontsize{8pt}{8pt}\selectfont
\resizebox{0.65\columnwidth}{!}{
\begin{tabular}{|c|c|c|}
\hline
\multirow{2}{*}{ Systems} & \multicolumn{2}{c|}{Inputs}\\ \cline{2-3}
& $d_1(t)$ & $r_2(t)+d_2(t)$\\ \hline
PID & 1.7$\times 10^{-7}$ & 1.5$\times 10^{-7}$ \\ %\hline
CgLp-PID & 1.0$\times 10^{-7}$ & 8.0$\times 10^{-8}$ \\ %\hline
Shaped CgLp-PID & 6.0$\times 10^{-8}$ & 5.0$\times 10^{-8}$\\ %\hline
Precision Improvement & 40.0\% & 37.5\% \\ %\hline
\hline
\end{tabular}
}
\end{table}

Then, to assess both reference tracking and disturbance rejection performance, a reference signal \(r_2(t) = 7.5\times 10^{-7}\sin(10\pi t)\) [m] and a disturbance signal \(d_2(t) = 1 \times 10^{-8} [19.1\sin(2\pi t) + 1.8 \sin(4\pi t) + 3.3 \sin(16\pi t)]\) [m] are applied to the three control systems. The measured steady-state errors for the PID, CgLp-PID, and shaped CgLp-PID systems are shown in Fig. \ref{fig: cglp_r5_d128}. The maximum errors for each system are summarized in Table \ref{tb: R1_D5_N5}. The results show that the shaped CgLp-PID system achieves a precision improvement of 37.5\% compared to the CgLp-PID system.
\begin{figure}[htp]
	%	\missingfigure{RCsystem}
	\centerline{\includegraphics[width=0.9\columnwidth]{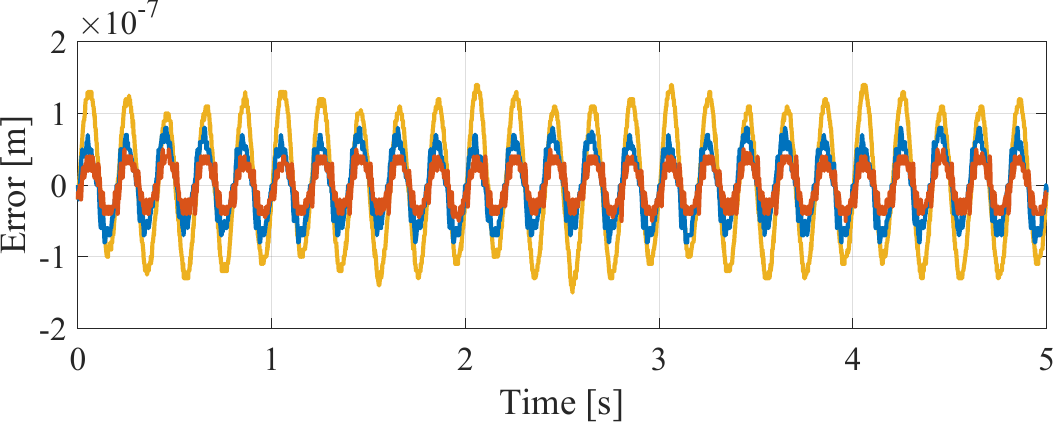}}
	\caption{Experimentally measured steady-state errors for PID, CgLp-PID, and shaped CgLp-PID control systems under multiple inputs: reference signal \(r_2(t)\) and disturbance signal \(d_2(t)\).}
	\label{fig: cglp_r5_d128}
\end{figure}

These results highlight the improved steady-state precision of the shaped CgLp-PID control system, which is attributed to the gain benefits conferred by the shaping filter in the CgLp-PID design, as illustrated in Fig. \ref{fig: open_loop_plot_shaped_cglp_sys}.

\subsubsection{Transient Performance Improvement: Reduced Overshoot}
In addition to enhancing steady-state performance, measurements of the step responses of the three systems, shown in Fig. \ref{fig: exp_cglp_pid_step50_final}, reveal that the shaped CgLp-PID reduces the overshoot observed in the CgLp-PID system, achieving a non-overshoot performance.

This transient performance improvement can be attributed to the introduction of the phase lead element between the error signal \( e(t) \) and the reset-triggered signal \( e_s(t) \), as discussed in the research (\cite{karbasizadeh2022continuous}).

\begin{figure}[htp]
	%	\missingfigure{RCsystem}
	\centerline{\includegraphics[width=0.9\columnwidth]{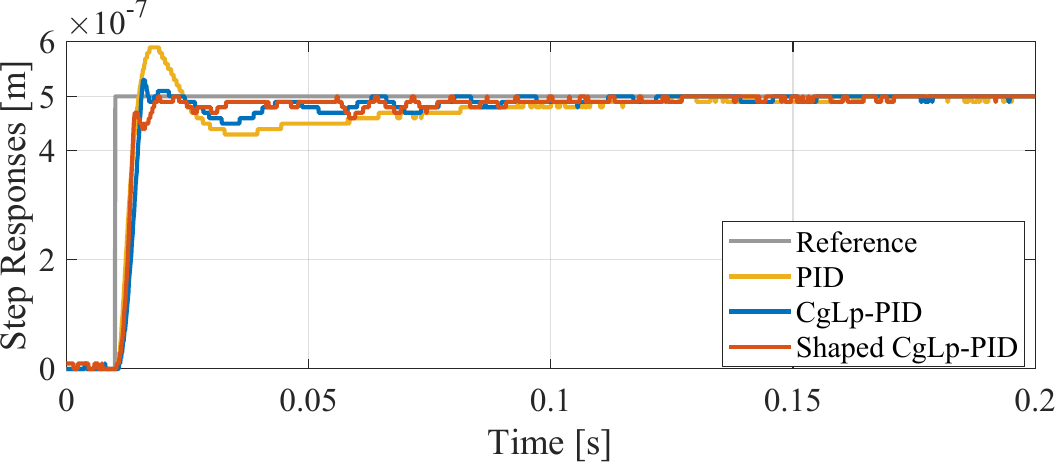}}
	\caption{Experimentally measured step responses of PID, CgLp-PID, and shaped CgLp-PID control systems.}
	\label{fig: exp_cglp_pid_step50_final}
\end{figure}
Thus, the phase-lead shaping filter not only contributes to better steady-state performance but also improves the transient response of the CgLp-PID system.
\section{Conclusion and Discussions}
\label{sec: Conclusion}

% In the first case, a shaped CI is designed to provide a 12.8° phase lead advantage while retaining gain properties similar to CI. Then, this shaped CI is implemented in a PCI-PID control system and improves transient response by reducing the settling time by 50\%. In the second case, the CgLp-PID control is employed. The shaped CgLp-PID control preserves the same phase margin as the CgLp-PID control while providing higher gain at low frequencies and lower gain at high frequencies. These gain benefits enhances steady-state performance, resulting in improved low-frequency reference tracking precision under disturbances and noise. The shaped CgLp-PID control achieves reductions in maximum measured errors of 23.9\% at 1 Hz, 20.7\% at 2 Hz, 20.0\% at 3 Hz, and 22.2\% at 5 Hz. Additionally, it demonstrates an improved transient response with a non-overshoot performance. Although this configuration is not the primary focus of the current study, it suggests potential performance enhancements that warrant further investigation. However, caution is advised as \(\gamma \to -1\) may lead to an increase in higher-order harmonics, which could negatively impact overall system performance. 

In conclusion, this study introduces a phase-lead shaping filter to improve phase and gain characteristics in CI-based and FORE-based reset control systems, referred to as shaped reset control. Frequency-domain design procedures for both CI-based and FORE-based reset control systems are provided. Experimental validation on two reset control systems implemented on a precision motion stage demonstrated the effectiveness of the proposed approach. In the first case study, the shaped reset control enhances transient performance by achieving zero overshoot, benefiting from the phase lead. In the second case study, the shaped reset control improves steady-state precision in reference tracking and disturbance rejection tasks, due to the gain benefit.

However, the benefits of the phase lead shaping filter in \eqref{eq: cs_def} are limited by high-frequency noise in practical systems. The phase lead element can amplify high-frequency noise in the reset-triggered signal, making it necessary to integrate a low-pass filter into the shaping filter. While this low-pass filter mitigates noise amplification, it also reduces some of the benefits provided by the phase lead. When system noise is minimized, the low-pass filter in \eqref{eq: cs_def} can be removed, allowing the advantages of phase lead-shaped reset control to be more pronounced. Future research could explore combining phase lead-shaped reset control with noise reduction techniques, such as the Kalman filter, to further enhance system performance. Investigating the potential of second-order phase lead shaping filters could also provide a promising direction for improvement.

\bibliographystyle{elsarticle-num-names} 
% \bibliography{References}

% \bibliographystyle{elsarticle-harv} 
% \bibliographystyle{unsrt}
\bibliography{References}

\begin{thebibliography}{36}
\expandafter\ifx\csname natexlab\endcsname\relax\def\natexlab#1{#1}\fi
\providecommand{\url}[1]{\texttt{#1}}
\providecommand{\href}[2]{#2}
\providecommand{\path}[1]{#1}
\providecommand{\DOIprefix}{doi:}
\providecommand{\ArXivprefix}{arXiv:}
\providecommand{\URLprefix}{URL: }
\providecommand{\Pubmedprefix}{pmid:}
\providecommand{\doi}[1]{\href{http://dx.doi.org/#1}{\path{#1}}}
\providecommand{\Pubmed}[1]{\href{pmid:#1}{\path{#1}}}
\providecommand{\bibinfo}[2]{#2}
\ifx\xfnm\relax \def\xfnm[#1]{\unskip,\space#1}\fi
%Type = Book
\bibitem[{Schmidt et~al.(2020)Schmidt, Schitter, and Rankers}]{schmidt2020design}
\bibinfo{author}{R.~M. Schmidt}, \bibinfo{author}{G.~Schitter}, \bibinfo{author}{A.~Rankers}, \bibinfo{title}{The design of high performance mechatronics-: high-Tech functionality by multidisciplinary system integration}, \bibinfo{publisher}{Ios Press}, \bibinfo{year}{2020}.
%Type = Article
\bibitem[{Han(2009)}]{han2009pid}
\bibinfo{author}{J.~Han},
\newblock \bibinfo{title}{From pid to active disturbance rejection control},
\newblock \bibinfo{journal}{IEEE transactions on Industrial Electronics} \bibinfo{volume}{56} (\bibinfo{year}{2009}) \bibinfo{pages}{900--906}.
%Type = Article
\bibitem[{Fuller(1976)}]{fuller1976feedback}
\bibinfo{author}{A.~Fuller},
\newblock \bibinfo{title}{Feedback control systems with low-frequency stochastic disturbances},
\newblock \bibinfo{journal}{International Journal of Control} \bibinfo{volume}{24} (\bibinfo{year}{1976}) \bibinfo{pages}{165--207}.
%Type = Article
\bibitem[{Chang and Han(1990)}]{chang1990gain}
\bibinfo{author}{C.-H. Chang}, \bibinfo{author}{K.-W. Han},
\newblock \bibinfo{title}{Gain margins and phase margins for control systems with adjustable parameters},
\newblock \bibinfo{journal}{Journal of guidance, control, and dynamics} \bibinfo{volume}{13} (\bibinfo{year}{1990}) \bibinfo{pages}{404--408}.
%Type = Article
\bibitem[{Chen et~al.(2019)Chen, Saikumar, and HosseinNia}]{chen2019development}
\bibinfo{author}{L.~Chen}, \bibinfo{author}{N.~Saikumar}, \bibinfo{author}{S.~H. HosseinNia},
\newblock \bibinfo{title}{Development of robust fractional-order reset control},
\newblock \bibinfo{journal}{IEEE Transactions on Control Systems Technology} \bibinfo{volume}{28} (\bibinfo{year}{2019}) \bibinfo{pages}{1404--1417}.
%Type = Article
\bibitem[{Saikumar et~al.(2019)Saikumar, Sinha, and HosseinNia}]{saikumar2019constant}
\bibinfo{author}{N.~Saikumar}, \bibinfo{author}{R.~K. Sinha}, \bibinfo{author}{S.~H. HosseinNia},
\newblock \bibinfo{title}{“constant in gain lead in phase” element--application in precision motion control},
\newblock \bibinfo{journal}{IEEE/ASME Transactions on Mechatronics} \bibinfo{volume}{24} (\bibinfo{year}{2019}) \bibinfo{pages}{1176--1185}.
%Type = Book
\bibitem[{Banos and Barreiro(2012)}]{banos2012reset}
\bibinfo{author}{A.~Banos}, \bibinfo{author}{A.~Barreiro}, \bibinfo{title}{Reset control systems}, \bibinfo{publisher}{Springer}, \bibinfo{year}{2012}.
%Type = Article
\bibitem[{Guo et~al.(2009)Guo, Wang, and Xie}]{guo2009frequency}
\bibinfo{author}{Y.~Guo}, \bibinfo{author}{Y.~Wang}, \bibinfo{author}{L.~Xie},
\newblock \bibinfo{title}{Frequency-domain properties of reset systems with application in hard-disk-drive systems},
\newblock \bibinfo{journal}{IEEE Transactions on Control Systems Technology} \bibinfo{volume}{17} (\bibinfo{year}{2009}) \bibinfo{pages}{1446--1453}.
%Type = Article
\bibitem[{Guo et~al.(2010)Guo, Wang, Xie, Li, and Gui}]{guo2010optimal}
\bibinfo{author}{Y.~Guo}, \bibinfo{author}{Y.~Wang}, \bibinfo{author}{L.~Xie}, \bibinfo{author}{H.~Li}, \bibinfo{author}{W.~Gui},
\newblock \bibinfo{title}{Optimal reset law design and its application to transient response improvement of hdd systems},
\newblock \bibinfo{journal}{IEEE transactions on control systems technology} \bibinfo{volume}{19} (\bibinfo{year}{2010}) \bibinfo{pages}{1160--1167}.
%Type = Inproceedings
\bibitem[{Hazeleger et~al.(2016)Hazeleger, Heertjes, and Nijmeijer}]{hazeleger2016second}
\bibinfo{author}{L.~Hazeleger}, \bibinfo{author}{M.~Heertjes}, \bibinfo{author}{H.~Nijmeijer},
\newblock \bibinfo{title}{Second-order reset elements for stage control design},
\newblock in: \bibinfo{booktitle}{2016 American Control Conference (ACC)}, \bibinfo{organization}{IEEE}, \bibinfo{year}{2016}, pp. \bibinfo{pages}{2643--2648}.
%Type = Article
\bibitem[{Heertjes et~al.(2016)Heertjes, Gruntjens, Van~Loon, Van~de Wouw, and Heemels}]{heertjes2016experimental}
\bibinfo{author}{M.~Heertjes}, \bibinfo{author}{K.~Gruntjens}, \bibinfo{author}{S.~Van~Loon}, \bibinfo{author}{N.~Van~de Wouw}, \bibinfo{author}{W.~Heemels},
\newblock \bibinfo{title}{Experimental evaluation of reset control for improved stage performance},
\newblock \bibinfo{journal}{IFAC-PapersOnLine} \bibinfo{volume}{49} (\bibinfo{year}{2016}) \bibinfo{pages}{93--98}.
%Type = Inproceedings
\bibitem[{Banos and Vidal(2007)}]{banos2007design}
\bibinfo{author}{A.~Banos}, \bibinfo{author}{A.~Vidal},
\newblock \bibinfo{title}{Design of pi+ ci reset compensators for second order plants},
\newblock in: \bibinfo{booktitle}{2007 IEEE International Symposium on Industrial Electronics}, \bibinfo{organization}{IEEE}, \bibinfo{year}{2007}, pp. \bibinfo{pages}{118--123}.
%Type = Article
\bibitem[{Zhao et~al.(2019)Zhao, Ne{\v{s}}i{\'c}, Tan, and Hua}]{zhao2019overcoming}
\bibinfo{author}{G.~Zhao}, \bibinfo{author}{D.~Ne{\v{s}}i{\'c}}, \bibinfo{author}{Y.~Tan}, \bibinfo{author}{C.~Hua},
\newblock \bibinfo{title}{Overcoming overshoot performance limitations of linear systems with reset control},
\newblock \bibinfo{journal}{Automatica} \bibinfo{volume}{101} (\bibinfo{year}{2019}) \bibinfo{pages}{27--35}.
%Type = Article
\bibitem[{Carrasco and Ba{\~n}os(2011)}]{carrasco2011reset}
\bibinfo{author}{J.~Carrasco}, \bibinfo{author}{A.~Ba{\~n}os},
\newblock \bibinfo{title}{Reset control of an industrial in-line ph process},
\newblock \bibinfo{journal}{IEEE transactions on control systems technology} \bibinfo{volume}{20} (\bibinfo{year}{2011}) \bibinfo{pages}{1100--1106}.
%Type = Article
\bibitem[{Karbasizadeh and HosseinNia(2022)}]{karbasizadeh2022continuous}
\bibinfo{author}{N.~Karbasizadeh}, \bibinfo{author}{S.~H. HosseinNia},
\newblock \bibinfo{title}{Continuous reset element: Transient and steady-state analysis for precision motion systems},
\newblock \bibinfo{journal}{Control Engineering Practice} \bibinfo{volume}{126} (\bibinfo{year}{2022}) \bibinfo{pages}{105232}.
%Type = Article
\bibitem[{Clegg(1958)}]{clegg1958nonlinear}
\bibinfo{author}{J.~C. Clegg},
\newblock \bibinfo{title}{A nonlinear integrator for servomechanisms},
\newblock \bibinfo{journal}{Transactions of the American Institute of Electrical Engineers, Part II: Applications and Industry} \bibinfo{volume}{77} (\bibinfo{year}{1958}) \bibinfo{pages}{41--42}.
%Type = Article
\bibitem[{Krishnan and Horowitz(1974)}]{krishnan1974synthesis}
\bibinfo{author}{K.~Krishnan}, \bibinfo{author}{I.~Horowitz},
\newblock \bibinfo{title}{Synthesis of a non-linear feedback system with significant plant-ignorance for prescribed system tolerances},
\newblock \bibinfo{journal}{International Journal of Control} \bibinfo{volume}{19} (\bibinfo{year}{1974}) \bibinfo{pages}{689--706}.
%Type = Article
\bibitem[{Horowitz and Rosenbaum(1975)}]{horowitz1975non}
\bibinfo{author}{I.~Horowitz}, \bibinfo{author}{P.~Rosenbaum},
\newblock \bibinfo{title}{Non-linear design for cost of feedback reduction in systems with large parameter uncertainty},
\newblock \bibinfo{journal}{International Journal of Control} \bibinfo{volume}{21} (\bibinfo{year}{1975}) \bibinfo{pages}{977--1001}.
%Type = Article
\bibitem[{Ba{\~n}os et~al.(2011)Ba{\~n}os, Dormido, and Barreiro}]{banos2011limit}
\bibinfo{author}{A.~Ba{\~n}os}, \bibinfo{author}{S.~Dormido}, \bibinfo{author}{A.~Barreiro},
\newblock \bibinfo{title}{Limit cycles analysis of reset control systems with reset band},
\newblock \bibinfo{journal}{Nonlinear analysis: hybrid systems} \bibinfo{volume}{5} (\bibinfo{year}{2011}) \bibinfo{pages}{163--173}.
%Type = Inproceedings
\bibitem[{Saikumar and HosseinNia(2017)}]{saikumar2017generalized}
\bibinfo{author}{N.~Saikumar}, \bibinfo{author}{H.~HosseinNia},
\newblock \bibinfo{title}{Generalized fractional order reset element (gfrore)},
\newblock in: \bibinfo{booktitle}{9th European Nonlinear Dynamics Conference (ENOC)}, \bibinfo{year}{2017}.
%Type = Inproceedings
\bibitem[{Weise et~al.(2019)Weise, Wulff, and Reger}]{weise2019fractional}
\bibinfo{author}{C.~Weise}, \bibinfo{author}{K.~Wulff}, \bibinfo{author}{J.~Reger},
\newblock \bibinfo{title}{Fractional-order memory reset control for integer-order lti systems},
\newblock in: \bibinfo{booktitle}{2019 IEEE 58th conference on decision and control (CDC)}, \bibinfo{organization}{IEEE}, \bibinfo{year}{2019}, pp. \bibinfo{pages}{5710--5715}.
%Type = Article
\bibitem[{Weise et~al.(2020)Weise, Wulff, and Reger}]{weise2020extended}
\bibinfo{author}{C.~Weise}, \bibinfo{author}{K.~Wulff}, \bibinfo{author}{J.~Reger},
\newblock \bibinfo{title}{Extended fractional-order memory reset control for integer-order lti systems and experimental demonstration},
\newblock \bibinfo{journal}{IFAC-PapersOnLine} \bibinfo{volume}{53} (\bibinfo{year}{2020}) \bibinfo{pages}{7683--7690}.
%Type = Inproceedings
\bibitem[{Ba{\~n}os and Vidal(2007)}]{banos2007definition}
\bibinfo{author}{A.~Ba{\~n}os}, \bibinfo{author}{A.~Vidal},
\newblock \bibinfo{title}{Definition and tuning of a pi+ ci reset controller},
\newblock in: \bibinfo{booktitle}{2007 european control conference (ECC)}, \bibinfo{organization}{IEEE}, \bibinfo{year}{2007}, pp. \bibinfo{pages}{4792--4798}.
%Type = Article
\bibitem[{HosseinNia et~al.(2013)HosseinNia, Tejado, and Vinagre}]{hosseinnia2013fractional}
\bibinfo{author}{S.~H. HosseinNia}, \bibinfo{author}{I.~Tejado}, \bibinfo{author}{B.~M. Vinagre},
\newblock \bibinfo{title}{Fractional-order reset control: Application to a servomotor},
\newblock \bibinfo{journal}{Mechatronics} \bibinfo{volume}{23} (\bibinfo{year}{2013}) \bibinfo{pages}{781--788}.
%Type = Article
\bibitem[{Bisoffi et~al.(2020)Bisoffi, Beerens, Heemels, Nijmeijer, van~de Wouw, and Zaccarian}]{bisoffi2020stick}
\bibinfo{author}{A.~Bisoffi}, \bibinfo{author}{R.~Beerens}, \bibinfo{author}{W.~Heemels}, \bibinfo{author}{H.~Nijmeijer}, \bibinfo{author}{N.~van~de Wouw}, \bibinfo{author}{L.~Zaccarian},
\newblock \bibinfo{title}{To stick or to slip: A reset pid control perspective on positioning systems with friction},
\newblock \bibinfo{journal}{Annual Reviews in Control} \bibinfo{volume}{49} (\bibinfo{year}{2020}) \bibinfo{pages}{37--63}.
%Type = Article
\bibitem[{Zheng et~al.(2000)Zheng, Chait, Hollot, Steinbuch, and Norg}]{zheng2000experimental}
\bibinfo{author}{Y.~Zheng}, \bibinfo{author}{Y.~Chait}, \bibinfo{author}{C.~Hollot}, \bibinfo{author}{M.~Steinbuch}, \bibinfo{author}{M.~Norg},
\newblock \bibinfo{title}{Experimental demonstration of reset control design},
\newblock \bibinfo{journal}{Control Engineering Practice} \bibinfo{volume}{8} (\bibinfo{year}{2000}) \bibinfo{pages}{113--120}.
%Type = Article
\bibitem[{Beerens et~al.(2019)Beerens, Bisoffi, Zaccarian, Heemels, Nijmeijer, and van~de Wouw}]{beerens2019reset}
\bibinfo{author}{R.~Beerens}, \bibinfo{author}{A.~Bisoffi}, \bibinfo{author}{L.~Zaccarian}, \bibinfo{author}{W.~Heemels}, \bibinfo{author}{H.~Nijmeijer}, \bibinfo{author}{N.~van~de Wouw},
\newblock \bibinfo{title}{Reset integral control for improved settling of pid-based motion systems with friction},
\newblock \bibinfo{journal}{Automatica} \bibinfo{volume}{107} (\bibinfo{year}{2019}) \bibinfo{pages}{483--492}.
%Type = Article
\bibitem[{Karbasizadeh et~al.(2022)Karbasizadeh, Dastjerdi, Saikumar, and HosseinNia}]{karbasizadeh2022band}
\bibinfo{author}{N.~Karbasizadeh}, \bibinfo{author}{A.~A. Dastjerdi}, \bibinfo{author}{N.~Saikumar}, \bibinfo{author}{S.~H. HosseinNia},
\newblock \bibinfo{title}{Band-passing nonlinearity in reset elements},
\newblock \bibinfo{journal}{IEEE Transactions on Control Systems Technology} \bibinfo{volume}{31} (\bibinfo{year}{2022}) \bibinfo{pages}{333--343}.
%Type = Inproceedings
\bibitem[{Karbasizadeh and HosseinNia(2022)}]{karbasizadeh2022complex}
\bibinfo{author}{N.~Karbasizadeh}, \bibinfo{author}{S.~H. HosseinNia},
\newblock \bibinfo{title}{Complex-order reset control system},
\newblock in: \bibinfo{booktitle}{2022 IEEE/ASME International Conference on Advanced Intelligent Mechatronics (AIM)}, \bibinfo{organization}{IEEE}, \bibinfo{year}{2022}, pp. \bibinfo{pages}{427--433}.
%Type = Inproceedings
\bibitem[{Zheng et~al.(2007)Zheng, Guo, and Xie}]{FIRC}
\bibinfo{author}{M.~F. Y. W.~J. Zheng}, \bibinfo{author}{Y.~Guo}, \bibinfo{author}{L.~Xie},
\newblock \bibinfo{title}{Improved reset control design for a pzt positioning stage},
\newblock in: \bibinfo{booktitle}{2007 IEEE International Conference on Control Applications}, \bibinfo{year}{2007}.
%Type = Article
\bibitem[{Dastjerdi et~al.(2022)Dastjerdi, Astolfi, Saikumar, Karbasizadeh, Valerio, and HosseinNia}]{dastjerdi2022closed}
\bibinfo{author}{A.~A. Dastjerdi}, \bibinfo{author}{A.~Astolfi}, \bibinfo{author}{N.~Saikumar}, \bibinfo{author}{N.~Karbasizadeh}, \bibinfo{author}{D.~Valerio}, \bibinfo{author}{S.~H. HosseinNia},
\newblock \bibinfo{title}{Closed-loop frequency analysis of reset control systems},
\newblock \bibinfo{journal}{IEEE Transactions on Automatic Control} \bibinfo{volume}{68} (\bibinfo{year}{2022}) \bibinfo{pages}{1146--1153}.
%Type = Article
\bibitem[{Barabanov and Konyukh(2001)}]{barabanov2001bohl}
\bibinfo{author}{E.~Barabanov}, \bibinfo{author}{A.~Konyukh},
\newblock \bibinfo{title}{Bohl exponents of linear differential systems},
\newblock \bibinfo{journal}{Mem. Differential Equations Math. Phys} \bibinfo{volume}{24} (\bibinfo{year}{2001}) \bibinfo{pages}{151--158}.
%Type = Article
\bibitem[{Beker et~al.(2004)Beker, Hollot, Chait, and Han}]{beker2004fundamental}
\bibinfo{author}{O.~Beker}, \bibinfo{author}{C.~Hollot}, \bibinfo{author}{Y.~Chait}, \bibinfo{author}{H.~Han},
\newblock \bibinfo{title}{Fundamental properties of reset control systems},
\newblock \bibinfo{journal}{Automatica} \bibinfo{volume}{40} (\bibinfo{year}{2004}) \bibinfo{pages}{905--915}.
%Type = Article
\bibitem[{Saikumar et~al.(2021)Saikumar, Heinen, and HosseinNia}]{saikumar2021loop}
\bibinfo{author}{N.~Saikumar}, \bibinfo{author}{K.~Heinen}, \bibinfo{author}{S.~H. HosseinNia},
\newblock \bibinfo{title}{Loop-shaping for reset control systems: A higher-order sinusoidal-input describing functions approach},
\newblock \bibinfo{journal}{Control Engineering Practice} \bibinfo{volume}{111} (\bibinfo{year}{2021}) \bibinfo{pages}{104808}.
%Type = Article
\bibitem[{Nuij et~al.(2006)Nuij, Bosgra, and Steinbuch}]{nuij2006higher}
\bibinfo{author}{P.~Nuij}, \bibinfo{author}{O.~Bosgra}, \bibinfo{author}{M.~Steinbuch},
\newblock \bibinfo{title}{Higher-order sinusoidal input describing functions for the analysis of non-linear systems with harmonic responses},
\newblock \bibinfo{journal}{Mechanical Systems and Signal Processing} \bibinfo{volume}{20} (\bibinfo{year}{2006}) \bibinfo{pages}{1883--1904}.
%Type = Article
\bibitem[{Zhang and HosseinNia(2024)}]{Xinxin_zhang_HOSIDF}
\bibinfo{author}{X.~Zhang}, \bibinfo{author}{S.~H. HosseinNia},
\newblock \bibinfo{title}{Higher-order sinusoidal input describing functions for open-loop and closed-loop reset control with application to mechatronics systems},
\newblock \bibinfo{journal}{arXiv preprint \href{https://arxiv.org/abs/2412.13086}{\color{gray}{ arXiv:2412.13086}}}  (\bibinfo{year}{2024}).

\end{thebibliography}
\appendix
% \section{The Proof of Theorem \ref{thm: open-loop HOSIDF}}
% \label{Proof: thmCv}
% \vspace*{12pt}
\section{Proof of Lemma \ref{lem: phase of Cs to angle C1}}
\label{proof for lema_Cs_C1}
\vspace*{12pt}
% $\arctan(x\cdot y) = \arctan(x) +\arctan(y)$,
\begin{proof}
\label{pf: lemma: angle cs and C1}
This proof derives the condition for the shaping filter \(\mathcal{C}_s\) to increase the phase of the first-order harmonic at the bandwidth frequency, denoted as \(\angle \mathcal{C}_1(\omega_c)\). The proof is divided into two steps: the first addresses the generalized CI when \(\omega_\alpha = 0\), and the second focuses on the FORE when \(\omega_\alpha > 0\).

\vspace{3mm}
\noindent\textbf{Step 1: Condition for the generalized FORE where $\omega_\alpha=0$.}

To ensure that the generalized FORE with a shaping filter $\mathcal{C}_s\neq1$ exhibits a phase lead compared to the system with \(\mathcal{C}_s=1\), we need to ensure:  
\begin{equation}
\label{eq: C_1, C10}
\angle \mathcal{C}_1(\omega_c) > \angle \mathcal{C}_1^0(\omega_c),
\end{equation}
where \(\angle \mathcal{C}_1(\omega_c)\) is the phase of the shaped generalized FORE with the shaping filter \(\mathcal{C}_s(s) \neq 1\), and \(\angle \mathcal{C}_1^0(\omega_c)\) is the phase of the generalized FORE with \(\mathcal{C}_s(s) = 1\).

In the generalized FORE with \(\omega_\alpha = 0\), from \eqref{eq: angle C1}, we have \(\angle \mathcal{C}_1(\omega_c) = \phi_{\lambda}(\omega_c)\). Therefore, to meet the condition in \eqref{eq: C_1, C10}, \(\phi_{\lambda}(\omega_c)\) needs to be larger than its value when \(\mathcal{C}_s(s) = 1\). From \eqref{eq: angle phi_lambda, phi_alpha}, the following condition needs to be satisfied:
\begin{equation}
\label{eq: cond_CI_CS}
   \frac{\sin(2\angle \mathcal{C}_s(\omega_c)) - \pi(1+\gamma)/(2(1-\gamma))}{\cos(2\angle \mathcal{C}_s(\omega_c))+1}  > \frac{-\pi(1+\gamma)}{4(1-\gamma)},
\end{equation}
where the right-hand side corresponds to the element in $\phi_{\lambda}(\omega_c)$ when \(\mathcal{C}_s(s) = 1\).

Then, solving \eqref{eq: cond_CI_CS}, and given the \(\pi\)-period properties of \(\angle \mathcal{C}_s(\omega)\) from Remark \ref{rem: Cs_Cn}, the first condition for the $\angle \mathcal{C}_s(\omega_c)$ in \eqref{eq: CS_Phase_lead_2cond} is derived.

\vspace{3mm}
\noindent\textbf{Step 2: Condition for the generalized FORE where $\omega_\alpha>0$.}

In the generalized FORE with \(\omega_\alpha > 0\), from \eqref{eq: angle C1}, we have  
\[
\angle \mathcal{C}_1(\omega_c) = \phi_{\alpha}(\omega_c) - \arctan\left(\frac{\omega_c}{\omega_\alpha}\right),
\]  
where \(\phi_{\alpha}(\omega_c)\) is an increasing function of \(\kappa_\gamma(\omega_c) \cdot \kappa_\zeta(\omega_c)\), and \(\tan(\angle\mathcal{C}_s(\omega_c))\).

Given the conditions \(\omega > 0\), \(\omega_{\alpha} > 0\), \(\omega_{\beta} > 0\), \(\gamma \in (-1,1)\), and \(\omega > 0\), it follows from the definition of \(\kappa_\gamma(\omega_c)\) in \eqref{eq: angle phi_lambda, phi_alpha} that \(\kappa_\zeta(\omega_c) > 0\). To ensure that the generalized FORE with a shaping filter \(\mathcal{C}_s \neq 1\) achieves a phase lead, both the values of \(\tan(\angle\mathcal{C}_s(\omega_c))\) and \(\kappa_\gamma(\omega_c)\) needs to exceed their respective values in the system where \(\angle \mathcal{C}_s = 0\). This can be achieved by satisfying the following conditions:
\begin{equation}
\label{eq: cond_fore_CS0}
 \angle \mathcal{C}_s(\omega_c) \in (0, k\cdot\pi/2),\ k\in\mathbb{N},
\end{equation}
and
\begin{equation}
\label{eq: cond_fore_CS}
   \omega_c \cdot \cos(2\angle \mathcal{C}_s(\omega_c)) + \omega_\alpha \cdot \sin(2\angle \mathcal{C}_s(\omega_c)) > \omega_c.
\end{equation}
Solving \eqref{eq: cond_fore_CS0} and \eqref{eq: cond_fore_CS}, and given the \(\pi\)-period properties of \(\angle \mathcal{C}_s(\omega)\) from Remark \ref{rem: Cs_Cn}, the second condition for the \(\angle \mathcal{C}_s(\omega_c)\) in \eqref{eq: CS_Phase_lead_2cond} is derived.
\end{proof}

\section{Proof of Lemma \ref{rem: alpha_requirement}}
\label{proof for lema_alpha_requirement}
\vspace*{12pt}
\begin{proof}
This proof establishes the condition required to limit gain changes for a system with a shaping filter compared to a system without the shaping filter at frequencies \(\omega \neq \omega_c\).

From \eqref{eq: fore_cn1} and \eqref{eq: generalized_fore_alpha}, the phase \(\angle \mathcal{C}_s(\omega)\) determines the function \(\alpha(\omega)\), thereby influencing the HOSIDF \(\mathcal{C}_n(\omega)\). The function $\alpha(\omega)$ for the generalized FORE with and without the shaping filter is given by
\begin{equation}
\label{eq: fore_alpha}
    \begin{aligned}
\alpha(\omega) &= 
    \begin{cases}
        \omega, & \text{for } \angle \mathcal{C}_s(\omega) = 0,\\
        e^{j\angle \mathcal{C}_s(\omega)}[\omega \cos(\angle \mathcal{C}_s(\omega))\\
        \indent \indent+\omega_{\alpha}\sin(\angle \mathcal{C}_s(\omega))], &\text{ for } \angle \mathcal{C}_s(\omega)\neq0. 
    \end{cases}
    \end{aligned}
\end{equation}
To limit gain changes of the generalized FORE at frequencies \(\omega \neq \omega_c\), the change in \(\alpha(\omega)\) should be minimized. To evaluate the change in \(\alpha(\omega)\), the ratio of \(\alpha(\omega)\) for the generalized FORE with and without the shaping filter in \eqref{eq: fore_alpha} is defined as:
\begin{equation}
\label{eq: alpha_delta}
 \Delta_\alpha(\omega) = e^{j\angle \mathcal{C}_s(\omega)}[\cos(\angle \mathcal{C}_s(\omega))+\omega_{\alpha}/\omega \sin(\angle \mathcal{C}_s(\omega))] .
\end{equation}
When \( \Delta_\alpha(\omega) \to 1 \) at frequencies \(\omega \neq \omega_c\), the gain properties of the generalized FORE tend to remain unchanged.

From \eqref{eq: alpha_delta}, \( \Delta_\alpha(\omega) \) consists of two components: the phase \(\angle \Delta_\alpha(\omega) = \angle \mathcal{C}_s(\omega)\) and the magnitude given by
\begin{equation}
\label{eq:kappa_alapha2}
\kappa_\alpha(\omega) = |\Delta_\alpha(\omega)| = \left|\cos(\angle \mathcal{C}_s(\omega)) + \frac{\omega_{\alpha}}{\omega} \sin(\angle \mathcal{C}_s(\omega))\right|.
\end{equation}
To ensure that \( \Delta_\alpha(\omega) \) approaches 1, two requirements must be met: First, the phase \(\angle \Delta_\alpha(\omega) = \angle \mathcal{C}_s(\omega)\) should tend to 0. Based on Remark \ref{rem: Cs_Cn}, \(\angle \mathcal{C}_s(\omega)\) affects \(\mathcal{C}_n(\omega)\) with a period of \(\pi\), so \(\angle \mathcal{C}_s(\omega) \to k \cdot \pi\), where \(k \in \mathbb{Z}\) is required. Second, the magnitude \(\kappa_\alpha(\omega)\) should tend to 1.

The constraint \( \kappa_\alpha(\omega) \in (1 - \sigma, 1 + \sigma) \), where \( \sigma \in (0,1) \subset \mathbb{R} \), ensures that both the phase and gain conditions are satisfied. Additionally, as \( \sigma \to 0 \), the change in \( |\mathcal{C}_n(\omega)| \) tends to 0. This concludes the proof.
\end{proof}

\section{Proof of Theorem \ref{thm: ci_Cs_design}}
\label{proof for theorem ci}
\vspace*{12pt}
\begin{proof}
% the function \(\alpha(\omega)\) in \eqref{eq: fore_alpha} is simplified to:
% \begin{equation}
% \label{eq: alpha_ci2}
%     \alpha(\omega) = 
%     \begin{cases}
%         \omega, & \text{for } \angle \mathcal{C}_s(\omega) = 0, \\
%         e^{j\angle \mathcal{C}_s(\omega)} \omega \cos(\angle \mathcal{C}_s(\omega)), & \text{for } \angle \mathcal{C}_s(\omega) \neq 0.
%     \end{cases}
% \end{equation} 
% From \eqref{eq: alpha_ci2}, we have \(\angle \alpha(\omega_c) = \angle \mathcal{C}_s(\omega_c)\). Therefore, to ensure that \(\angle \alpha(\omega_c) \in (0, \pi)\) in the requirement (\rom{1}) in Lemma \ref{rem: alpha_requirement}, it follows that \(\angle \mathcal{C}_s(\omega_c) \in (0,\pi)\) must also hold.
% From \eqref{eq: alpha_ci2},
This proof derives the conditions for \(\angle \mathcal{C}_s(\omega)\) in the generalized CI where $\omega_\alpha=0$ to meet the requirements specified in Lemmas \ref{lem: phase of Cs to angle C1} and \ref{rem: alpha_requirement}.

In the generalized CI with \(\omega_\alpha = 0\), from Lemma \ref{lem: phase of Cs to angle C1}, the restriction on \(\angle \mathcal{C}_s(\omega) \in (-\pi, \pi]\) at \(\omega_c\) requires that \(\angle \mathcal{C}_s(\omega_c)\) lies within the bounds: 
\[
\angle \mathcal{C}_s(\omega_c)\in\left(k\pi,\ \frac{\pi}{2}-\arctan\bigg(\frac{\pi(1+\gamma)}{4(1-\gamma)}\bigg)+k\pi\right),\quad k=-1,0.
\]  
From \eqref{eq: kappa_alpha}, the value of \(\kappa_\alpha(\omega)\) is given by:  
\begin{equation}
\label{eq: alpha/w_ci}
\kappa_\alpha(\omega) = |\cos(\angle \mathcal{C}_s(\omega))|.
 \end{equation}
From Lemma \ref{rem: alpha_requirement} and \eqref{eq: alpha/w_ci}, at frequencies where \(\omega\neq\omega_c\), the following condition needs to be satisfied: 
\begin{equation}
\label{eq: ci_low_freqs}
   (1 - \sigma) < |\cos(\angle \mathcal{C}_s(\omega))| < (1 + \sigma), \quad \text{for } \omega \neq \omega_c.
\end{equation}
Given the inherent property of \( \cos(\angle \mathcal{C}_s(\omega)) \in [-1, 1] \) and $\sigma>0$, the condition from \eqref{eq: ci_low_freqs} is expressed as:
\begin{equation}
\label{eq: ci_low_freqs2}
\begin{aligned}
(1 - \sigma) &< \cos(\angle \mathcal{C}_s(\omega)) \leq 1, \text{ or}\\
 -1&\leq \cos(\angle \mathcal{C}_s(\omega)) \leq -1 + \sigma,   \text{ for } \omega \neq \omega_c.
\end{aligned}
\end{equation}
Solving \eqref{eq: ci_low_freqs2}, the conditions for \( \angle \mathcal{C}_s(\omega)\in(-\pi,\pi] \) are given by
\begin{equation}
\label{eq: angleCs_proof}
\begin{aligned}
\angle \mathcal{C}_s(\omega) \in &(-\arccos(1 - \sigma), \arccos(1 - \sigma)) \\
&\cup (\arccos(-1 + \sigma), \pi] \\
&\cup [-\pi, -\arccos(-1 + \sigma)), \text{ for } \omega \neq \omega_c.
\end{aligned}
\end{equation}
Defining \(\eta_1\), \(\eta_2\), and \(\eta_3\) as in \eqref{eq: eta123} and substituting them into \eqref{eq: angleCs_proof} concludes the proof.
% From \eqref{eq: alpha/w_ci}, the restriction for \( \kappa_\alpha(\omega) \in(0, 1+\sigma), \sigma \in (0, 1) \) in the requirement (\rom{3}) in Lemma \ref{rem: alpha_requirement} requires that \( |\cos(\angle \mathcal{C}_s(\omega))| \) satisfy the following condition: 
% \begin{equation}
% \label{eq: ci_high_freqs}
%    0 \leq |\cos(\angle \mathcal{C}_s(\omega))| < (1 + \sigma).
% \end{equation}
% Given that $\sigma>0$ and $|\cos(\angle \mathcal{C}_s(\omega))|\leq1$, equation \eqref{eq: ci_high_freqs} holds for all $\omega > \omega_c$. Thus, there is no restriction for frequencies where $\omega>\omega_c$. This concludes the proof.
\end{proof}

\section{Proof of Theorem \ref{thm: fore_Cs_design}}
\label{proof for theorem fore}
\vspace*{12pt}
\begin{proof}
This proof derives the conditions for \(\angle \mathcal{C}_s(\omega)\in(-\pi,\pi]\) in the FORE where $\omega_\alpha>0$ to meet the requirements specified in Lemmas \ref{lem: phase of Cs to angle C1} and \ref{rem: alpha_requirement}.

% From \eqref{eq: fore_alpha}, we have \(\angle \mathcal{C}_s(\omega_c) = \angle \alpha(\omega_c)\). Thus, to ensure that \(\angle \alpha(\omega_c) \in (0, \pi)\) in the requirement (\rom{1}) in Lemma \ref{rem: alpha_requirement}, it follows that \(\angle \mathcal{C}_s(\omega_c) \in (0,\pi)\) must also hold.
From Lemma \ref{lem: phase of Cs to angle C1}, at frequencies where \(\omega=\omega_c\), the following condition needs to be satisfied: 
\begin{equation}
\label{eq:fore_wc}
 \angle \mathcal{C}_s(\omega_c) \in\left(k\pi,\ \frac{\pi}{2}-\arctan \bigg(\frac{\omega_c}{\omega_\alpha}\bigg)+k\pi\right),\quad k=-1,0.
\end{equation}
From \eqref{eq: kappa_alpha}, the function $\kappa_\alpha(\omega) $ can be written as
\begin{equation}
\label{eq: |alpha/w|}
\begin{aligned}
\kappa_\alpha(\omega) &= |\cos(\angle \mathcal{C}_s(\omega)) + \frac{\omega_{\alpha} }{\omega}\sin(\angle \mathcal{C}_s(\omega))|\\
&= \sqrt{1 + \theta_\alpha ^2} \bigg|\cos(\angle \mathcal{C}_s(\omega) -\arctan  \theta_\alpha)\bigg|,
\end{aligned}
\end{equation}
where
\begin{equation}
    \begin{aligned}
\theta_\alpha &= \frac{\omega_\alpha}{\omega}.
    \end{aligned}
\end{equation}
From Lemma \ref{rem: alpha_requirement}, at $\omega\neq\omega_c$, the following condition needs to be satisfied:
\begin{equation}
\label{eq: cond_|alpha_w|}
    % \begin{cases}
(1 - \sigma) < \kappa_\alpha(\omega)< (1 + \sigma),   \text{ for } \omega \neq \omega_c.
     % 0 < \sqrt{1 + \theta_\alpha ^2}  \cos(\angle \mathcal{C}_s(\omega) -\arctan  \theta_\alpha)  < (1 + \sigma), &\text{ for } \omega>\omega_c.
    % \end{cases}
\end{equation}
From \eqref{eq: |alpha/w|} and \eqref{eq: cond_|alpha_w|}, at $\omega\neq\omega_c$, the following condition needs to be satisfied:
\begin{equation}
\label{eq: cond_alpha_w1}
% \begin{cases}
\begin{aligned}
&0<  \frac{(1 - \sigma)}{\sqrt{1 + \theta_\alpha ^2}  }<   \cos(\angle \mathcal{C}_s(\omega) -\arctan  \theta_\alpha)   < \frac{(1 + \sigma)}{\sqrt{1 + \theta_\alpha ^2}  }, \text{ or }\\
&\frac{(-1 - \sigma)}{\sqrt{1 + \theta_\alpha ^2}  } <  \cos(\angle \mathcal{C}_s(\omega) -\arctan  \theta_\alpha)   < \frac{(-1 + \sigma)}{\sqrt{1 + \theta_\alpha ^2}  } <0,
\end{aligned}
   % &\text{ for } \omega<\omega_c,\\
% \indent(-1 - \sigma) <  \alpha(\omega)/\omega < (1 + \sigma), &\text{ for } \omega>\omega_c.
% \end{cases}
\end{equation}
Solving \eqref{eq: cond_alpha_w1}, the resulting conditions for \(\angle \mathcal{C}_s(\omega)\) are given in \eqref{eq: cs_ineq_fore}. Note that \(\arccos(x)\) is defined within the interval \([0, \pi]\). This completes the proof.

\end{proof}

% \section{Proof of Proposition \ref{prop: phi_lead}}
% \label{proof for phi_lead}
% \vspace*{12pt}
% \begin{proof}
% Let \(\angle\mathcal{C}_1^0(\omega)\) and \(\angle\mathcal{C}_1(\omega)\) denote the phase of the FORE and the shaped FORE, respectively, From \eqref{eq: fore_cn1}, we have
% \begin{equation}
%     \begin{aligned}
%      \angle\mathcal{C}_1^0(\omega)  &= \arctan\bigg(\frac{\pi\Lambda(\omega)}{2\omega^2\Omega(\omega)}\bigg)-\arctan\bigg(\frac{\omega}{\omega_{\alpha}}\bigg),\\
%       \angle\mathcal{C}_1(\omega) & = \arctan\bigg(\frac{\pi\Lambda(\omega)-2\omega\Omega(\omega)\alpha_\psi(\omega)}{2\omega\Omega(\omega)\alpha_\zeta(\omega)}\bigg)-\\      &\indent\indent\indent\indent\indent\indent\indent\indent\indent\indent\indent\arctan\bigg(\frac{\omega}{\omega_{\alpha}}\bigg),
%     \end{aligned}
% \end{equation}
% where $\alpha_\zeta(\omega)$ and $\alpha_\psi(\omega)$ are given in \eqref{eq: phi0, phi1}. Thus, the phase lead provided by $\mathcal{C}_s$ for $\angle \mathcal{C}_1(\omega_c)$ is given by $ \phi_{\text{lead}} =\angle\mathcal{C}_1(\omega_c)- \angle\mathcal{C}_1^0(\omega_c)$ in \eqref{eq:phi_lead}. The proof is concluded.
% \end{proof}

\end{document}